\newcommand\Item[1][]{%
  \ifx\relax#1\relax  \item \else \item[#1] \fi
  \abovedisplayskip=0pt\abovedisplayshortskip=0pt~\vspace*{-\baselineskip}}
\theoremstyle{plain}
\newtheorem{theorem}{Theorem}[section]
\newtheorem{thm}[theorem]{Theorem}
\newtheorem{formal}[theorem]{Formal Theorem}
\newtheorem{lemma}[theorem]{Lemma}
\newtheorem{lem}[theorem]{Lemma}
\newtheorem{prop}[theorem]{Proposition}
\newtheorem{conjecture}[theorem]{Conjecture}
\newtheorem{corollary}[theorem]{Corollary}
\newtheorem{cor}[theorem]{Corollary}
\theoremstyle{definition}
\newtheorem{definition}[theorem]{Definition}
\theoremstyle{remark}
\newcommand{\iparagraph}[1]{\paragraph{\emph{#1}}}
\definecolor{aliceblue}{rgb}{0.94, 0.97, 1.0} 
\definecolor{azuremist}{rgb}{0.94, 1.0, 1.0} 
\definecolor{antiquewhite}{rgb}{0.98, 0.92, 0.84} 
\definecolor{ivory}{rgb}{1.0, 1.0, 0.94}
\newtheorem{mdexample}[theorem]{Example (IPFG).}
\newenvironment{example}%
  {\begin{mdframed}[backgroundcolor=ivory]\begin{mdexample}}%
  {\end{mdexample}\end{mdframed}}
\newtheorem{mdoexample}[theorem]{Example.}
  {\begin{mdframed}[backgroundcolor=ivory]\begin{mdoexample}}%
  {\end{mdoexample}\end{mdframed}}
\newtheorem{remark}[theorem]{Remark}
\numberwithin{equation}{section}
\newtheorem*{theorem*}{Theorem}
\newcommand{\bw}{\mathrm{bw}}
\newcommand{\sym}{\mathrm{sym}}
\newcommand{\asym}{\mathrm{asym}}
\newcommand{\symdiss}{\mathrm{symdiss}}
\newcommand\norm[1]{\left\lVert#1\right\rVert}
\DeclareMathOperator{\Grad}{grad}
\DeclareMathOperator{\Law}{law}
\DeclareMathOperator{\Dom}{Dom}
\DeclareMathOperator{\Ker}{Ker}
\DeclareMathOperator{\Ran}{Ran}
\DeclareMathOperator{\J}{J}
\DeclareMathOperator{\Prob}{\mathrm{Prob}}
\DeclareMathOperator*{\sumsum}{\sum\!\sum}
\newcommand{\super}[1]{^{\scriptscriptstyle{(#1)}}}
\def\vep{\varepsilon}
\def\tp{^\mathsf{T}}
\def\RelEnt{\mathrm{RelEnt}}
\def\RF{\mathcal{R}}
\def\J{\mathcal{J}}
\def\I{\mathcal{I}}
\def\CoF{G}
\def\QP{\mathcal{V}}
\def\Pot{U}
\DeclareMathOperator{\Wasser}{W}
\def\CC{\mathbb{C}}
\def\E{\mathcal{E}}
\def\H{\mathcal{H}}
\def\L{\mathcal{L}}
\def\M{\mathcal{M}}
\def\P{\mathcal{P}}
\def\S{\mathcal{S}}
\def\T{\mathbb{T}}
\def\V{\mathcal{V}}
\def\W{\mathcal{W}}
\def\X{\mathcal{X}}
\def\Z{\mathcal{Z}}
\def\PP{\mathbb{P}}
\def\Q{\mathcal{Q}}
\def\RR{\mathbb{R}}
\def\TT{\mathbb{T}}
\def\NN{\mathbb{N}}
\def\ZZ{\mathbb{Z}}
\def\fluxS{\mathcal{X}^2/2}
\def\Fsy{{F^{\mathrm{sym}}}}
\def\Fasy{{F^{\mathrm{asym}}}}
\newcommand{\Reac}{\mathrm{R}}
\newcommand{\Rf}{{\mathrm{R}_{\mathrm{fw}}}}
\newcommand{\Rb}{{\mathrm{R}_{\bw}}}
\renewcommand{\k}{\kappa}
\newcommand{\kr}{{\kappa_{r}}}
\newcommand{\kbr}{{\kappa_{\bw(r)}}}
\newcommand{\cf}{{c_{r}}}
\newcommand{\cb}{{c_{\bw(r)}}}
\def\div{\mathop{\mathrm{div}}\nolimits}
\newcommand{\grad}{\nabla}
\def\Ddiv{\overline{\div}}
\def\Dgrad{\overline{\nabla}}
\def\Dnabla{\Dgrad}
\DeclareMathOperator{\Cdiv}{\div}
\def\Cnabla{{\nabla}}
\def\pJ{\mathbb{J}}
\def\pG{\mathcal{G}}
\def\pF{\mathcal{F}}
  \newcommand{\eqnum}{\leavevmode\hfill\refstepcounter{equation}\textup{\tagform@{\theequation}}} 
\date \today
\title{Variational structures beyond gradient flows: \\a macroscopic fluctuation-theory perspective}
\author{Robert I.A.\ Patterson\thanks{WIAS, Mohrenstrasse 39, 10117 Berlin, Germany. Email: \href{mailto:patterson@wias-berlin.de}{patterson@wias-berlin.de}}, D.R.\ Michiel Renger\thanks{TU M\"unchen, Boltzmannstra{\ss}e 3, 85747 Garching, Germany. Email: \href{mailto:d.r.m.renger@tum.de}{d.r.m.renger@tum.de}} and Upanshu Sharma\thanks{
University of New South Wales, Sydney 2052, Australia.\ Email: \href{mailto:upanshu.sharma@unsw.edu.au}{upanshu.sharma@unsw.edu.au}}}
\begin{document}

\setstcolor{purple}

\maketitle

\begin{abstract}

Macroscopic equations arising out of stochastic particle systems in detailed balance (called dissipative systems or gradient flows) have a natural variational structure, which can be derived from the large-deviation rate functional for the density of the particle system. While large deviations can be studied in considerable generality, these variational structures are often restricted to systems in detailed balance. Using insights from macroscopic fluctuation theory, in this work we aim to generalise this variational connection beyond dissipative systems by augmenting densities with fluxes, which encode  non-dissipative effects. Our main contribution is an abstract theory, which for a given flux-density cost and a quasipotential, provides a decomposition into dissipative and non-dissipative components and a generalised orthogonality relation between them. We then apply this abstract theory to various stochastic particle systems  --  independent copies of jump processes, zero-range processes, chemical-reaction networks in complex balance and lattice-gas models -- without assuming detailed balance. For macroscopic equations arising out of these particle systems,  we derive new variational formulations that generalise the classical gradient-flow formulation. 
\end{abstract}

\section{Introduction}
\label{sec:intro}
When studying an evolution equation, it is often helpful to know if it has an associated variational structure, in order to obtain physical insight and tools for mathematical analysis. An important example of such a structure is a gradient flow or dissipative system;  in this case the structure consists of an energy functional and a dissipation mechanism, and the evolution equation is completely characterised by a corresponding minimisation problem involving these two objects. From a thermodynamic point of view, such a variational structure is often related to random fluctuations of an underlying microscopic particle system via a large-deviation principle --- examples include the Boltzmann--Gibbs--Helmholtz free energy and the Onsager--Machlup theory. 

It has recently become clear that macroscopic equations are always dissipative (called gradient flows) if the underlying microscopic stochastic system is in detailed balance\footnote{In this paper we often use the terminology of dissipative systems interchangeably with gradient flows since in non-equilibrium systems the gradient-flow part arises purely due to dissipative effects characterised by the symmetric forces discussed below.}. The energy functional and the dissipation mechanism for such macroscopic equations are then uniquely derived by an appropriate decomposition of the large-deviation rate functional associated to the microscopic systems~\cite{AdamsDirrPeletierZimmer11,AdamsDirrPeletierZimmer13,MielkePeletierRenger14,PeletierRedigVafayi14}. These observations have provided a canonical approach to constructing a variational structure for such macroscopic equations. In addition to having a clear physical interpretation, these variational structures have been used to isolate interesting features of the macroscopic equations and study singular-limit problems arising therein.

So far, this approach has largely been  limited to particle systems in detailed balance and corresponding macroscopic dissipative systems. Since a large deviation study is possible far beyond detailed balance, this leads to the following natural question.
\begin{center}
\emph{Do the large deviations of the underlying particle systems provide a variational structure beyond detailed balance?}
\end{center}
While this is a hard question to answer in general, considerable progress has been made in the case of some specific systems in two seemingly independent directions.

One direction that is tailored to allow for non-dissipative effects is the study of so-called \emph{FIR inequalities}, first introduced for the many-particle limit of Vlasov-type nonlinear diffusions~\cite{DuongLamaczPeletierSharma17}, independent particles on a graph~\cite{HilderPeletierSharmaTse20} and chemical reactions~\cite[Sec.~5]{RengerZimmer2021}. These inequalities bound the free-energy difference and Fisher information by the large-deviation rate functional, providing a useful tool to study singular-limit problems and to derive error estimates~\cite{DLPSS17,PeletierRenger2020TR}. Strictly speaking, these inequalities are not variational structures in the sense that they do not fully determine the macroscopic dynamics. However, in this paper we will construct a variational structure which generalises these inequalities and completely characterises the macroscopic dynamics. 

Another direction of generalising dissipative systems is by using Macroscopic Fluctuation Theory (MFT) \cite{BDSGJLL2015MFT}. The main idea here is to consider, in addition to the usual density of the particle system, the particle fluxes at the microscopic level, and to study the large deviations of these fluxes.
Consequently using time-reversal arguments, MFT explicitly captures the dissipative and non-dissipative effects in the system. However, most MFT literature has been devoted to diffusive scaling of particle systems and corresponding quadratic rate functions. Such rate functions define a Hilbert space with a natural orthogonal decomposition into dissipative and non-dissipative components. Recently non-quadratic rate functions and connections to MFT have been explored in the case of independent particles on a graph~\cite{KaiserJackZimmer2018} and chemical reaction networks~\cite{RengerZimmer2021}, but a general MFT for non-quadratic rate functions is largely open.

Spurred on by these exciting new developments, we provide a partial but affirmative answer to the question posed above. The basis of our analysis is an abstract action functional $(\rho,j)\mapsto\int_0^T\!\L(\rho(t),j(t))\,dt$. This functional will correspond to the large deviations of random particle systems, but this identification is not necessary for our analysis; in this sense our approach is purely macroscopic. Inspired by FIR-inequalities and MFT, we set up an abstract theory whose central outcome will be a series of decompositions of the integrand $\L$ into distinct dissipative and non-dissipative components. These decompositions generalise:
(1) the connection between large deviations and dissipative systems from~\cite{MielkePeletierRenger14} to include non-dissipative effects,
(2) the known cases of FIR inequalities~\cite{HilderPeletierSharmaTse20} to a general setting, and
(3) MFT to non-quadratic action functions.

Finally we apply this abstract theory to the density-flux large-deviation rate functional for various stochastic particle systems without assuming detailed balance, and derive new variational formulations for the corresponding macroscopic equations.

\subsection{Summary of results}
\label{subsec:results}

\iparagraph{Abstract results.}
Consider the macroscopic densities and fluxes $[0,T]\ni t \mapsto (\rho(t),j(t))$ that are evolving according to a coupled system of evolution equations:
\begin{subequations}\label{eq:coupled ev eq}
\begin{align}
    \dot\rho(t) &= -\div j(t),\label{eq:intro cont eq}\\
     j(t) &= j^0(\rho(t)),       \label{eq:evolution flux}
\end{align}
\end{subequations}%
with an associated action functional
\begin{equation}\label{intro:L-dens-flux}
  (\rho,j)\mapsto\int_0^T\!\L(\rho(t),j(t))\,dt,
\end{equation}
where the non-negative cost function $\L$ has the crucial property that for any $(\rho,j)$,
\begin{equation*}
  j=j^0(\rho) \ \Longleftrightarrow \ \L\big(\rho,j\big) = 0,
\end{equation*}
and hence the action~\eqref{intro:L-dens-flux} is minimised by the trajectory~\eqref{eq:evolution flux}.
We will interpret equation~\eqref{eq:intro cont eq} as a continuity equation and call $j^0(\rho)$ the \emph{zero-cost flux} associated to $\L$. Equation~\eqref{eq:coupled ev eq} often describes the macroscopic dynamics arising from a microscopic stochastic particle system and \eqref{intro:L-dens-flux} is typically the corresponding large-deviation rate functional. 

Although writing the flux explicitly in~\eqref{eq:evolution flux} instead of directly studying $\dot\rho(t)=-\div j^0(\rho(t))$ might seem superfluous at first sight, it is motivated by the fact that fluxes can encode information on non-dissipative, for instance divergence-free, effects in the system. Consequently, while studying densities is usually sufficient for dissipative systems~\cite{Onsager1931I, Onsager1931II, Onsager1953I, MielkePeletierRenger14,MPPR2017} (see Section~\ref{subsec:intro-Connection} below for more details), the inclusion of fluxes is better suited to describe non-dissipative effects at the macroscopic level~\cite{BDSGJLL2015MFT,Maes2018}. 

Our abstract theory requires the existence of three objects: a sufficiently regular density-flux cost function $\L(\rho,j)$, an operator that will play the role of divergence and as such defines the continuity equation~\eqref{eq:intro cont eq} and a non-negative \emph{quasipotential} $\QP$ associated to $\L$. The basis of our approach will be the decomposition $\L(\rho,j)=\Phi(\rho,j)+\Phi^*(\rho,F(\rho))-\langle F(\rho),j\rangle$, where $F(\rho):=-d\L(\rho,0)$ is called the \emph{driving force} and $\Phi$ and its convex dual $\Phi^*$ the \emph{dissipation potentials}, see Theorem~\ref{th:MPR force} for details.  This decomposition is standard in the literature \cite{MielkePeletierRenger14,KaiserJackZimmer2018,Maes2018} and corresponds to a (possibly nonlinear) force-flux response relation $j=d\Phi^*(\rho,F(\rho))$ for the zero-cost dynamics; it includes gradient flows as a special case as discussed in Section~\ref{subsubsec:intro-diss}.

Borrowing ideas from MFT, we uniquely decompose this driving force into a \emph{symmetric} and \emph{antisymmetric} part
\begin{equation*}
  F(\rho)=\Fsy(\rho)+\Fasy(\rho).
\end{equation*}
On a macroscopic level, these notions of (anti)symmetry (defined in Section~\ref{subsec:abstract time reversal}) are consistent with the time-reversal symmetry of Markov processes in the context of MFT and large deviations. 
In particular, if the microscopic system is in detailed balance, then $F(\rho)=\Fsy(\rho)$ and the (macroscopic) dynamics is purely dissipative, i.e.\ described by a gradient flow driven by a \emph{quasipotential} $\QP$ \cite{MielkePeletierRenger14}. 
It turns out that even for systems that are not in detailed balance, the symmetric force $F^\sym$ always relates to such a $\V$, which can be defined in terms of the cost $\L$ (see Definition~\ref{def:I0}) and is a natural Lyapunov functional for the system.
In particular, the symmetric part $\Fsy(\rho)$ is a conservative force driven by the quasipotential (energy) $\V$.  

More generally, from a physical point of view, a purely dissipative system is thermodynamically closed, so that the work done is related to the free energy or quasipotential via 
\begin{equation}
  \int_0^T\!\big\langle \Fsy(\rho(t)),j(t)\big\rangle\,dt = -\mfrac12\QP(\rho(T))+\mfrac12\QP(\rho(0)),  
\label{eq::intro QP decay}
\end{equation}
or formulated locally in time for the power
\begin{equation}\label{eq:intro sym power}
  \big\langle \Fsy(\rho(t)),j(t)\big\rangle = -\mfrac12\mfrac{d}{dt} \QP(\rho(t)).
\end{equation}
Thus for non-closed systems one can think of $\Fsy(\rho)$ as an internally generated force and the remainder, $\Fasy(\rho)$, as the force exerted by the system upon the environment.
While
\begin{equation}  \label{eq:intro asym power}
  \big\langle\Fasy(\rho(t)),j(t)\big\rangle
 \quad \text{and} \quad
  \big\langle F(\rho(t)),j(t)\big\rangle
\end{equation}
can be understood as expressions of power or rates of work, in general there is no reason to expect these to be exact differentials.

In our main result, Theorem~\ref{th:FIIRs}, we relate the cost function $\L$ to the three powers from  \eqref{eq:intro sym power} and \eqref{eq:intro asym power}. Specifically, for any $\lambda\in[0,1]$, the cost function $\L$ admits the following decompositions 
\begin{subequations}\label{eq:intro-L-decom}
\begin{align}
  \L(\rho,j)&=\L_{(1-2\lambda)F}(\rho,j) + \RF^\lambda_F(\rho) - 2\lambda \langle F(\rho),j\rangle,  &&\text{with } \  \RF^\lambda_F(\rho)\geq0,  \label{eq:Intro-FIIR-F}\\
    \L(\rho,j)&=\L_{F-2\lambda F^\sym}(\rho,j) + \RF^\lambda_{\Fsy}(\rho) - 2\lambda \langle F^\sym(\rho),j\rangle,  
            &&\text{with }   \ \RF^\lambda_{\Fsy}(\rho)\geq0,\label{eq:Intro-FIIR-sym} \\
  \L(\rho,j)&=\L_{F-2\lambda F^\asym}(\rho,j) + \RF^\lambda_{\Fasy}(\rho) - 2\lambda \langle F^\asym(\rho),j\rangle,
        &&\text{with }  \  \RF^\lambda_{\Fasy}(\rho)\geq0. \label{eq:Intro-FIIR-asym}
\end{align}
\end{subequations}
The parameter $\lambda$ can be used to switch between different forces and the non-negative terms $\L_\CoF(\rho,j)$ are modified versions of $\L$ where the driving force $F(\rho)$ is replaced by a different covector field $\CoF(\rho)$. Consequently, the zero-cost flux of $\L_\CoF$ will be a modified dynamics, different from \eqref{eq:evolution flux}. Of particular interest is the case $\lambda=\frac12$, where the decompositions \eqref{eq:Intro-FIIR-sym} and \eqref{eq:Intro-FIIR-asym} can be seen as two different ways to split $\L$ into \emph{purely dissipative} and \emph{purely non-dissipative} components. Indeed, the modified cost $\L_{\Fsy}$ is related to a purely dissipative system that can be formalised as a gradient flow (see Section~\ref{subsubsec:intro-diss}). By contrast, we interpret the zero-cost flux of $\L_{\Fasy}$ as purely non-dissipative. Although the variational structure and physical interpretation of $\L_{\Fasy}$ remains an open question (see discussion in Section~\ref{sec:discuss}), we show for certain examples that its zero-cost behaviour corresponds to a purely Hamiltonian macroscopic evolution. This idea is clearly illustrated by Figure~\ref{fig:IPFG solutions}, where we plot the phase diagram for the zero-cost flux associated with $\L_F$, $\L_{\Fsy}$ and $\L_{\Fasy}$  in the case of independent Markov jump particles on a three-point state space. For details on this example see Sections~\ref{subsec:flows} and~\ref{sec:IPFG antisym flow}.

\begin{figure}[ht]
\centering
\tikzset{->-/.style={decoration={
  markings,
  mark=at position #1 with {\arrow{>}}},postaction={decorate}}}

\begin{tikzpicture}[scale=4,baseline=-1em]
\tikzstyle{every node}=[font=\scriptsize];

  \draw[fill=lightgray](0.5,0.866)--(0,0)--(1,0)--cycle;
  \draw[->](1,0)--(1.2,0) node[anchor=south]{$\rho_1$};
  \draw[->](0.5,0.87)--(0.6,1.04) node[anchor=east]{$\rho_2$};

  \filldraw (0.5,0.29) circle (0.01) node[anchor=south west]{$\pi$};

  \draw[->-=0.5](0.33,0) .. controls (0.53,0.23) and (0.53,0.29) .. (0.5,0.29);
  \draw[->-=0.5](0.67,0) .. controls (0.57,0.29) and (0.52,0.32) .. (0.5,0.29);
  \draw[->-=0.5](0.83,0.29) .. controls (0.53,0.35) and (0.48,0.32) .. (0.5,0.29);
  \draw[->-=0.5](0.67,0.58) .. controls (0.47,0.35) and (0.47,0.29) .. (0.5,0.29);
  \draw[->-=0.5](0.33,0.58) .. controls (0.43,0.29) and (0.49,0.26) .. (0.5,0.29);
  \draw[->-=0.5](0.17,0.29) .. controls (0.47,0.23) and (0.52,0.26) .. (0.5,0.29);
  
  \node at (0.5,-0.02) [anchor=north]{\normalsize{(a)}};
\end{tikzpicture}
\begin{tikzpicture}[scale=4,baseline=-1em]
\tikzstyle{every node}=[font=\scriptsize];

  \draw[fill=lightgray](0.5,0.866)--(0,0)--(1,0)--cycle;
  \draw[->](1,0)--(1.2,0) node[anchor=south]{$\rho_1$};
  \draw[->](0.5,0.87)--(0.6,1.04) node[anchor=east]{$\rho_2$};

  \filldraw (0.5,0.29) circle (0.01) node[anchor=south west]{$\pi$};

  \draw[->-=0.5](0.33,0) -- (0.5,0.29);
  \draw[->-=0.5](0.67,0) -- (0.5,0.29);
  \draw[->-=0.5](0.83,0.29) -- (0.5,0.29);
  \draw[->-=0.5](0.67,0.58) -- (0.5,0.29);
  \draw[->-=0.5](0.33,0.58) -- (0.5,0.29);
  \draw[->-=0.5](0.17,0.29) -- (0.5,0.29);

  \node at (0.5,-0.02) [anchor=north]{\normalsize{(b)}};
\end{tikzpicture}
\begin{tikzpicture}[scale=4,baseline=-1em]
\tikzstyle{every node}=[font=\scriptsize];

  \draw[fill=lightgray](0.5,0.866)--(0,0)--(1,0)--cycle;
  \draw[->](1,0)--(1.2,0) node[anchor=south]{$\rho_1$};
  \draw[->](0.5,0.87)--(0.6,1.04) node[anchor=east]{$\rho_2$};

  \filldraw (0.5,0.29) circle (0.01) node[anchor=south west]{$\pi$};

  \draw[smooth cycle,->-=0.18,->-=0.51,->-=0.87] plot[tension=0.9] coordinates{(0.5,0.481) (0.33,0.192) (0.67,0.192)};
  \draw[smooth cycle,->-=0.175,->-=0.505,->-=0.865] plot[tension=0.9] coordinates{(0.5,0.577) (0.25,0.144) (0.75,0.144)};
  
  \draw[rounded corners=10,->-=0.22,->-=0.58,->-=0.895] (0.65,0).. controls (0.86,0.05) and (0.89,0.1) .. (0.825,0.303)--(0.675,0.563) ..controls (0.53,0.72) and (0.47,0.72).. (0.325,0.563) -- (0.175,0.303)  .. controls (0.11,0.1) and (0.14,0.05) .. (0.35,0)--cycle; 

  \draw[rounded corners=4] (0.8,0) .. controls (0.95,0.02) and (0.96,0.03) .. (0.9,0.17)--(0.6,0.69) .. controls (0.51,0.82) and (0.49,0.82) .. (0.4,0.69)--(0.1,0.17) .. controls (0.04,0.03) and (0.05,0.02) .. (0.2,0)--cycle;  

  \node at (0.5,-0.02) [anchor=north]{\normalsize{(c)}};
\end{tikzpicture}
\caption{Consider the setting of independent and irreducible Markov jump particles on a three-point state space with generator $Q:=\lbrack\lbrack-3,2,1\rbrack,\lbrack 1,-3,2\rbrack,\lbrack 2,1,-3)\rbrack\rbrack$ and invariant measure $\pi=(\frac13,\frac13,\frac13)$. 
Phase diagram for the (zero-cost) trajectories $\rho(t)$ associated to (a) $\L(\rho(t),j(t))=0$; (b) $\L_\Fsy(\rho(t),j(t))=0$; (c) $\L_\Fasy(\rho(t),j(t))=0$. 
Here $\rho_i$ is the mass at point $i$ and we do not plot $\rho_3$ since $\sum_i\rho_i=1$. The zero-cost trajectories for $\L_\Fsy$ and  $\L_\Fasy$ follow a purely dissipative and Hamiltonian dynamics respectively. 
}
\label{fig:IPFG solutions}
\end{figure}

The middle terms in the right hand side of~\eqref{eq:intro-L-decom} are inspired by~\cite[Def.\ 1.5]{HilderPeletierSharmaTse20},~\cite[Sec.~5]{RengerZimmer2021}, and are called \emph{generalised Fisher informations}.
For $\lambda\in [0,1]$ and covector fields $\CoF=F,\Fsy,\Fasy$, they are defined as
\begin{equation}\label{def:intro-gFI}
  \RF^\lambda_\CoF(\rho):=-\H\big(\rho,-2\lambda \CoF(\rho)\big),
\end{equation} 
where $\H$ is the convex dual of $\L$.
The terminology is motivated by the fact that (see Proposition~\ref{prop:preFIR})
\begin{equation*}
  \lim\limits_{\lambda\rightarrow 0} \mfrac1\lambda\RF^\lambda_{\CoF}(\rho) = \langle \CoF(\rho), j^0(\rho) \rangle,
\end{equation*}
which in the case $\CoF=\Fsy$ is the time derivative or dissipation rate of the quasipotential along the zero-cost path, i.e.\ in the limit $\lambda\rightarrow 0$, $\RF^\lambda_\Fsy$ coincides with the classical Fisher information \cite{HilderPeletierSharmaTse20}. The non-negativity of the generalised Fisher informations in \eqref{eq:intro-L-decom} is essential, since it shows that the three powers in~\eqref{eq:intro sym power} and~\eqref{eq:intro asym power} are non-negative  along the zero-cost flux, thus generalising the second law of thermodynamics.

\iparagraph{Scope.} 
To highlight the minimal underlying structure required to obtain the decompositions~\eqref{eq:intro-L-decom}, analysis will be carried out in a general abstract setting. This implies that our results can be applied to a broad range of models, not restricted to large deviations or to continuity equations of divergence-type. In theory, after properly setting up the spaces, the only requirements of analysis will be the cost function $\L$ together with a continuity equation of the form~\eqref{eq:intro cont eq}. However for specific applications, explicit calculations are restricted to cost functions $\L$ for which the associated quasipotential $\V$ is known. For the purpose of this paper, we define the quasipotential in terms of a Hamilton-Jacobi-Bellman equation (Definition~\ref{def:I0}), and solve it for a number of examples. For cost functions that are derived from large deviations, this definition coincides with the large-deviation rate functional of the invariant measure (see Theorem~\ref{th:LDP QP}). However we reiterate that the abstract definition is purely macroscopic and does not require connections to large deviations.

\iparagraph{Application.}

All three decompositions~\eqref{eq:intro-L-decom} are power balances, split into purely dissipative and purely non-dissipative powers in a physically consistent way. 
From a mathematical perspective, this generalises ideas from dissipative systems to a larger class of systems which include non-dissipative effects. For dissipative systems ($\Fasy(\rho)\equiv0$) these decompositions coincide with the variational formulation of a gradient flow  (see Section~\ref{subsubsec:intro-diss}).  However, our abstract theory only requires a suitably convex cost $\L$ and quasipotential $\V$ for the decompositions (and therefore the corresponding variational ideas) to hold. Lyapunov functions, Fisher informations and dissipation potentials are central ingredients in gradient-flow theory and often difficult to discern in non-dissipative systems (for instance the laws of non-reversible Markov processes). This work provides \emph{explicit} formulae for these objects in terms of the cost and the quasipotential.

For the zero-cost dynamics~\eqref{eq:coupled ev eq}, our results imply that the three powers $\langle F,j\rangle, \langle F^\asym,j\rangle$ and $\langle F^\asym,j\rangle$ are always non-positive, and in particular that $\V$ is a Lyapunov functional with an \emph{explicit} expression for its decay (rather than merely an upper bound).

By contrast, the decay~\eqref{eq::intro QP decay} of the quasipotential $\V$ is \emph{bounded} by a FIR inequality, which connect the cost to the quasipotential and Fisher information. These inequalities are crucial in studying singular limits in non-dissipative systems, for instance to prove compactness of densities and fluxes in suitable topologies. However they are only available in a limited setting. It turns out that since  the modified cost functions $\L_\CoF$ in~\eqref{eq:intro-L-decom} are non-negative, the FIR inequalities naturally arise from these decompositions and therefore we provide a \emph{universal} recipe to arrive at such inequalities. In fact, the decompositions~\eqref{eq:intro-L-decom} explicitly characterise the gap in the FIR inequalities. For more details see Section~\ref{subsubsec:FIR-intro}.

The aforementioned gap in the inequalities corresponds to the $\L_{\CoF}$ on the right-hand side of~\eqref{eq:intro-L-decom}. This new term exactly characterises the effects of non-dissipative effects in the variational structure and the corresponding macroscopic evolution. This is especially revealing for jump processes where we find that purely non-dissipative systems ($\Fsy(\rho)\equiv0$) correspond to Hamiltonian-type structures.

From a physical standpoint, the decompositions~\eqref{eq:intro-L-decom} can be  interpreted as a novel combination of gradient flows and Hamiltonian systems, in a similar spirit to GENERIC (see Section~\ref{subsubsec:GENERIC}). However, we stress that all of our examples -- apart from the lattice gas model -- \emph{cannot} be cast into the GENERIC framework. This work also provides a framework to study physically relevant `open-boundary' jump-process systems (see a recent application in~\cite{RengerSharma22}).

Finally these decompositions also have numerical implications since numerical schemes inspired by gradient-flow structures of evolution equations have gained importance~\cite{CarrilloChertockHuang15} in recent years.
Numerical schemes often add artificial non-reversibility to speed-up convergence to equilibrium, but their analysis is tricky except in special situations~\cite{LelievreNierPavliotis13}. The decompositions~\eqref{eq:intro-L-decom} explicitly characterise the role of Fisher informations and antisymmetric forces and a natural goal would be to optimise this force to speed up convergence.

\iparagraph{Examples.} 
Above we discussed the abstract framework and theory derived from it; this theory is purely macroscopic in that we do not require any connection to particle systems and large deviations. In the latter part of this paper we apply this abstract theory to several microscopic particle systems. 

First, we focus on independent Markov jump particles on a finite graph as a guiding example throughout this paper, and generalise the results of~\cite{KaiserJackZimmer2018}. Second, we study zero-range processes in a scaling which leads to an ordinary differential equation (ODE) in the limit. Third, we study chemical reaction networks in complex balance~\cite{AndersonKurtz11} and generalise the results in~\cite{RengerZimmer2021}. In all these three examples the macroscopic dynamics are ODEs and the large-deviation principle yields an exponential rate functional.   
Finally, we focus on the setting of particles that hop on a lattice in a diffusive limit, which leads to a drift-diffusion equation as the macroscopic evolution. These particles can either be independent random walkers or interact via exclusion. In this setting, the large-deviation principle yields a quadratic rate functional, and we recover the classical MFT results~\cite{BDSGJLL2015MFT}.


\iparagraph{Boundary issues and global-in-time decompositions.}
The decompositions~\eqref{eq:intro-L-decom} do not involve time, and therefore when considering trajectories $t\mapsto (\rho(t),j(t))$, they should be considered as local-in-time or instantaneous  decompositions of $\L(\rho(t),j(t))$ at time $t$. Naively, one would simply integrate in time to obtain global decompositions of the rate functional $\int_0^T\!\L(\rho(t),j(t))\,dt$ for arbitrary trajectories $(\rho,j)$. This argument is formal since, strictly speaking, the decompositions \eqref{eq:intro-L-decom}  hold only for $\rho$, $j$ for which the required terms are defined. More precisely, it turns out that the forces  $F$, $\Fsy$ and $\Fasy$ are well-defined only on a proper subset of the domain of definition for the modified cost functions $\L_\CoF$ and generalised Fisher informations $\RF^\lambda_\CoF$.
This issue is often ignored in the MFT literature.

This issue becomes clear in the various examples we consider. For instance when dealing with independent jump processes on a finite lattice $\X$, the large-deviation cost is well defined for any trajectory in the space of probability measures i.e.\ $\rho(t)\in\P(\X)$ (see Example~\ref{ex:indep LDP}), whereas the symmetric force is only well-defined for trajectories in the space of strictly positive probability measures, i.e.\ $\rho(t)\in \P_+(\X)$ (see~\eqref{IPFG:symForce}). This difference in the domains arises due to the logarithm present in the definition of the symmetric force. Such issues are typically dealt by first extending the domains of definition of the forces involved by appropriately regularising them, second by proving the decompositions on these extended domains, and finally passing to the limit in the regularisations (see for instance the proof of~\cite[Thm.~1.6]{HilderPeletierSharmaTse20}). Although we expect that similar arguments can be applied to~\eqref{eq:intro-L-decom} to arrive at global-in-time decompositions, in this first study we focus on local-in-time results. 

\subsection{Related work}
\label{subsec:intro-Connection}

As mentioned earlier, this work connects and generalises existing literature in various directions. Barring fairly recent works~\cite{KaiserJackZimmer2018,Renger2018b,RengerZimmer2021} which deal with particular examples, the connections between MFT, dissipative systems and FIR inequalities have largely been unexplored in the literature. 
Not all of these works consider fluxes, and so we will also make use of a `contracted' cost function,
\begin{equation}\label{eq:intro-contraction}
  \hat\L(\rho,u):=\inf\{\L(\rho,j): u=-\div j\},
\end{equation}
where the \emph{velocity} $u$ is a placeholder for $\dot\rho(t)$ and $-\div$ is the abstract operator that maps fluxes to velocities as in \eqref{eq:intro cont eq}. This construction is consistent with the notion of contraction in large deviations (see Example~\ref{ex:indep LDP}). Since $\hat\L(\rho,-\div j^0(\rho))=0$, we refer to $u^0(\rho):= -\div  j^0(\rho)$ as the \emph{zero-cost velocity}.

\subsubsection{Dissipative/Gradient systems}
\label{subsubsec:intro-diss}

In the case of dissipative systems $F=\Fsy$ and $\Fasy=0$, and therefore with $\lambda=\tfrac12$ both~\eqref{eq:Intro-FIIR-F} and~\eqref{eq:Intro-FIIR-sym} become
\begin{align}\label{eq:intro flux GF decom}             
  \L(\rho,j) &=\L_0(\rho,j) + \RF^{\frac12}_{\Fsy}(\rho) - \langle \Fsy(\rho),j\rangle \notag\\
             &=\Phi(\rho,j)+\Phi^*\big(\rho,\Fsy(\rho)\big)-\langle \Fsy(\rho),j \rangle,
\end{align}
with the convex dual pair of \emph{dissipation potentials} defined as $\Phi(\rho,j):=\L_0(\rho,j)$ and $\Phi^*(\rho,\zeta):=\sup_j \langle\zeta,j\rangle - \Phi(\rho,j)$. This decomposition of $\L$ is exactly the characterisation of dissipative systems in the density-flux setting~\cite{Maes2018,Renger2018b}; see  Section~\ref{subsec:flows} for a further elaboration.

Using~\eqref{eq:intro sym power}, $\Fsy=-\frac12\grad d\QP$ (see Corollary~\ref{def:sym-asym-force} for definition) and applying the contraction~\eqref{eq:intro-contraction}, we switch to the density setting
\begin{align}
  \hat\L(\rho,u)
    &=\inf\big\{\Phi(\rho,j): u=-\div j\big\}+\Phi^*\big(\rho,\Fsy(\rho)\big)+\big\langle \tfrac12 d\QP(\rho),u \big\rangle \notag\\
    &=:\hat\Psi(\rho,u) + \hat\Psi^*\big(\rho,-\tfrac12 d\QP(\rho)\big) + \big\langle \tfrac12 d\QP(\rho),u \big\rangle,
\label{eq:intro-L-decom-GFs}
\end{align}
where $\hat\Psi$ is the contraction of $\Phi$ and $\hat\Psi$,$\hat\Psi^*$ are convex duals of each other (see \cite[Thm.\ 3]{Renger2018b} for details).

The identity~\eqref{eq:intro-L-decom-GFs} is the standard decomposition of the density cost function that characterises a dissipative system or \emph{generalised gradient flow} in the following sense. For the zero-cost velocity, the left-hand side satisfies $\hat\L(\rho,u^0(\rho))=0$, and the right-hand side of~\eqref{eq:intro-L-decom-GFs} is the Energy--Energy-Dissipation identity (EDI)~\cite{SandierSerfaty04,AmbrosioGigliSavare08,RossiMielkeSavare08}, which is equivalent by convex duality to
\begin{equation}
  u^0(\rho) = d_\xi\hat\Psi^*\big(\rho,-\tfrac12 d\QP(\rho)\big),
\label{eq:GGS}
\end{equation}
where $d_\xi$ is the derivative with respect to the second argument.
In the special case when $\hat\Psi^*(\rho,\xi)=\tfrac12\langle K(\rho)\xi,\xi\rangle$ is a quadratic form with an inverse metric tensor $K(\rho)$ of a manifold,  we arrive at the usual gradient-flow representation of the zero-cost velocity on that manifold
\begin{equation*}
  u^0(\rho)=-\mfrac12 K(\rho) d\QP(\rho)=:-\mfrac12\Grad_{\rho}\QP(\rho).
\end{equation*}
This connection between generalised gradient flows and the symmetry $F=\Fsy$  at the level of densities has been explored more directly in~\cite{MielkePeletierRenger14}, where it was shown that this symmetry holds if $\hat\L$ corresponds to the large-deviation principle of a Markov process in detailed balance. The density-flux formulation~\eqref{eq:intro flux GF decom} of a dissipative system with quadratic dissipation has also been investigated extensively in the literature, see for instance ~\cite{BDSGJLL2015MFT,Maes2018,Renger2018b}. Since we derived this decomposition from~\eqref{eq:Intro-FIIR-F} and~\eqref{eq:Intro-FIIR-sym}, these two decompositions can be thought of as the natural generalisations of the EDI to non-dissipative systems.

\subsubsection{GENERIC}
\label{subsubsec:GENERIC}

The GENERIC framework is specifically designed as a coupling between dissipative and non-dissipative effects in a thermodynamically consistent way \cite{GrmelaOettinger1997I,GrmelaOettinger1997II,Ottinger2005}. Although originally meant to describe evolution equations, recent work has also studied the following natural connection between GENERIC and large deviations from a variational perspective (see~\eqref{eq:intro-L-decom-GFs}),
\begin{equation}
  \hat\L(\rho,u)
    =\hat\Psi\big(\rho,u-\pJ(\rho)d\E(\rho)\big) + \hat\Psi^*\big(\rho,-\tfrac12 d\QP(\rho)\big) + \big\langle \tfrac12 d\QP(\rho),u \big\rangle,
\label{eq:L=GENERIC}
\end{equation}
where the Poisson structure $\pJ$ and energy $\E$ define the Hamiltonian part of the dynamics, and additional non-interaction conditions are required to ensure that the zero-cost velocity
\begin{equation}
  u^0(\rho)=d\hat\Psi^*\big(\rho,-\tfrac12d\QP(\rho)\big) + \pJ(\rho)d\E(\rho)
\label{eq:GENERIC u dec}
\end{equation}
dissipates $\QP$ and conserves $\E$.

Such a connection is discussed in~\cite{DuongPeletierZimmer13} in the particular setting of weakly interacting diffusions and more recently in the context of hypocoercivity~\cite{DuongOttobre21}. More generally, the recent paper~\cite{KZMP2020math} shows that~\eqref{eq:L=GENERIC} can only hold if the underlying microscopic system consists of stochastic dynamics in detailed balance combined with a deterministic drift. The drift may be replaced by stochastic fluctuations as long as they appear deterministic on the large-deviation scale~\cite{Renger2018b}, but any larger scale fluctuations that are not in detailed balance will break down the GENERIC structure. Therefore, the class of large-deviation cost functions with a GENERIC structure is rather limited. 

By contrast, the decompositions~\eqref{eq:intro-L-decom} \emph{always} hold as soon as the quasipotential $\QP$ is identified. The crucial difference is that our decompositions are based on a decomposition of forces, i.e.
\begin{align*}
  u^0(\rho)=-\div j^0(\rho)=-\div d\Phi^*\big(\rho,\Fsy(\rho)+\Fasy(\rho)\big),
\end{align*}
rather than a decomposition of fluxes or velocities as in GENERIC~\eqref{eq:GENERIC u dec}.
Furthermore, generalised orthogonality between $\Fsy$ and $\Fasy$ (see Subsection~\ref{sec:Gen-ortho}) is a natural analogue of the non-interaction conditions used in GENERIC.

\subsubsection{FIR inequalities}
\label{subsubsec:FIR-intro}

Using $\L_{F-2\lambda \Fsy}\geq 0$ and $\Fsy=-\tfrac12\grad d\QP$ (as above) in the decomposition~\eqref{eq:Intro-FIIR-sym}, we find
\begin{equation*}
\tfrac{1}{\lambda}\L(\rho,j)\geq  \tfrac{1}{\lambda} \RF^\lambda_{\Fsy}(\rho) + \langle \grad d\QP,j\rangle.
\end{equation*}
Since $\grad$ is the dual of $-\div$, using the contraction principle~\eqref{eq:intro-contraction} and the definition of the Fisher information~\eqref{def:intro-gFI} it follows that (see Corollary~\ref{corr:FIR-abstract} for details)
\begin{equation}\label{eq:intro-FIR}
\tfrac{1}{\lambda}\hat\L(\rho,u)\geq  -  \tfrac{1}{\lambda}\hat\H(\rho,d\QP(\rho) ) + \langle  d\QP(\rho),u\rangle,
\end{equation}
where $\hat\H$ is the convex dual of $\hat \L$. This is a local-in-time version of the FIR inequality. 

Assume that a smooth trajectory $[0,T]\ni t\mapsto \rho(t)$ satisfies~\eqref{eq:intro-FIR} for every $t$. Substituting $u=\dot\rho$, formally applying the chain rule $\langle  d\QP(\rho),\dot\rho\rangle = \tfrac{d}{dt}\QP(\rho)$, and integrating in time over $[0,T]$ we arrive at the F(``free energy'')-I(``rate functional'')-R(``Fisher information'') inequality~\cite[Thm.~1.6]{HilderPeletierSharmaTse20}
\begin{equation}\label{eq:intro-FIR-global}
  \frac1\lambda\int_0^T \hat \L(\rho(t),\dot\rho(t))dt + \QP(\rho(t)) \geq \QP(\rho(T)) - \frac1\lambda\int_0^T\!\hat\H\big(\rho(t),d\QP(\rho(t))\big)\,dt.
\end{equation}

Therefore, the decomposition~\eqref{eq:Intro-FIIR-sym} can be thought of as a generalisation of~\cite{HilderPeletierSharmaTse20} in various ways. First,~\eqref{eq:Intro-FIIR-sym} holds fairly generally (in the abstract framework) and can be applied to systems well beyond independent copies of Markov jump processes studied in~\cite{HilderPeletierSharmaTse20}. Second,~\eqref{eq:Intro-FIIR-sym} exactly characterises the gap in the inequality~\eqref{eq:intro-FIR} via  $\L_{F-2\lambda \Fsy}$ which we discarded in this discussion due to its non-negativity. And third, a different version of the FIR inequality can also be derived from~\eqref{eq:Intro-FIIR-asym}. 

It should be noted that the FIR inequalities have been used in the literature as \emph{a priori estimates} to study singular limits, and we expect that the decomposition~\eqref{eq:Intro-FIIR-sym} and inequality~\eqref{eq:intro-FIR} will serve the same purpose for a considerably larger class of systems.  However, in this paper we limit ourselves to the local-in-time decompositions~\eqref{eq:Intro-FIIR-sym} as opposed to the global-in-time inequality~\eqref{eq:intro-FIR-global} discussed in~\cite{HilderPeletierSharmaTse20}, since moving from local to global descriptions is a nontrivial technical step outside the scope of this work.

\subsubsection{MFT and (non-)quadratic cost function}

As stated earlier, most MFT literature is concerned with the diffusive scaling of underlying stochastic particle systems which converge to diffusion-type macroscopic partial differential equations and corresponds to quadratic cost functions of the form~\cite{BDSGJLL2015MFT}
\begin{equation*}
  \L(\rho,j)=\mfrac12 \| j-j^0(\rho)\|^2_{\rho}, \quad\text{for some Hilbert norm $\| \cdot\|_{\rho}$}.
\end{equation*}
Crucial arguments in MFT are based on the fact that the dissipative and the non-dissipative effects are orthogonal in this Hilbert space, i.e.\
\begin{equation*}
  \langle \Fsy(\rho),\Fasy(\rho)\rangle_\rho\equiv0.
\end{equation*}

However, even the simple example of independent particles on a finite graph (see Example~\ref{ex:indep LDP}) yields a non-quadratic cost function $\L$, and the aforementioned orthogonality arguments break down. In~\cite{KaiserJackZimmer2018} (for independent jump processes) and~\cite{RengerZimmer2021} (for chemical reactions) these ideas are ported to the non-quadratic setting by introducing a generalised notion of orthogonality, where the pairing is no longer bilinear, and rather satisfies a relation of the form
\begin{equation}\label{eq:intro gen orth}
  \theta_\rho(\Fsy(\rho),\Fasy(\rho))\equiv0.
\end{equation}

By contrast, the abstract theory that we develop is not necessarily based on such orthogonality relations, although we do borrow many notions such as time-reversed cost-functions and forces from MFT. However we will show that within our framework, one can also construct a generalised orthogonality pairing $\theta_\rho$ (fully characterised by $\L$) that satisfies \eqref{eq:intro gen orth}, and coincides with the bilinear pairings $\langle\cdot,\cdot\rangle_\rho$ in case of quadratic cost functions and with $\theta_\rho(\cdot,\cdot)$ from \cite{KaiserJackZimmer2018,RengerZimmer2021} in the case of specific non-quadratic cost functions. This will be the content of Subsection~\ref{sec:Gen-ortho}.

\subsection{Summary of notation and outline of the article}
\begingroup
\centering
\begin{longtable}{c l c}
$\X$ & Finite graph with strict ordering \\
$\fluxS$ & Half the edges on a finite graph $\X$  &~\eqref{def:HalfEdges} \\ 

$s(\cdot|\cdot)$ & Relative Boltzmann function (integrand/summand in relative entropy)  &~\eqref{def:entropy-s} \\ 

$\Z,\W,\phi$ & State-flux triple & Def.~\ref{def:state-flux-triple}\\

$T\Z$, $T^*\Z$ & Tangent and cotangent bundle associated to $\Z$  &\\ 

$T_\rho\Z$, $T_\rho^*\Z$ & Tangent and cotangent space at $\rho\in\Z$ &\\

$\L$, $\H$ & L-function and its convex dual & Def.~\ref{def:L-function}\\

$\hat\L$, $\hat\H$ & Contracted L-function and its convex dual & \eqref{def:FIIR-FIR-contract}\\

$\QP$ & Quasipotential &  Def.~\ref{def:I0}\\

$d\mathcal F$ & Gateaux derivative of a functional $\mathcal F$&\\

$\chi\tp$ & transpose or adjoint operator $\chi\tp:\mathcal M^*\rightarrow\mathcal N^*$ for $\chi:\mathcal N\rightarrow\mathcal M$ &\\

$\Dom(A)$ & domain of an operator $A$ & \\

$F$ & Driving force & Def.~\ref{def:F-disspot}\\

$\Phi^*$, $\Phi$ & Dissipation potential and its dual & Def.~\ref{def:F-disspot}\\

$\hat\Psi^*$, $\hat\Psi$ & Contracted dissipation potential and its dual & \eqref{def:FIIR-FIR-contract-diss}\\

$\L_\CoF$, $\H_\CoF$ & Tilted L-function and its convex dual & Def.~\ref{def:tilt}  \\


$\Dom_\symdiss(A)$ & Subset of $\Dom(A)$ where the dissipation potential is symmetric &~\eqref{eq:symmetric dissipation}\\ 

$\RF^\lambda_\zeta$ & Generalised Fisher information & Def.~\ref{def:genFI}\\

$\overleftarrow\L$, $\overleftarrow\H$ & Reversed L-function and its convex dual & Def.~\ref{def:time-reversedL}\\

$\Fsy$, $\Fasy$ & Symmetric and antisymmetric force & Cor.~\ref{def:sym-asym-force}\\


$\mathcal{M}(\mathcal X)$, $\mathcal{M}_a(\mathcal X)$ & Space of signed measures on $\mathcal X$ (with total mass $a$) & \eqref{def:FinMass-Signed}\\\ 
 
$\mathcal{P}(\mathcal X)$ & Space of probability measures on $ \mathcal X$ &  \\ 

$\mathcal{P}_+(\mathcal X)$ & Space of strictly positive probability measures on a discrete state space $ \mathcal X$ &  \\ 

$\grad,\div$ & Continuous gradient and divergence & \\

$\Dgrad,\Ddiv$ & Discrete gradient and divergence & \eqref{eq:discrete divergence}\\

$\mathds1_x$ & Indicator function associated to $\{x\}$ &  
\end{longtable}
\endgroup

In Section~\ref{sec:abstract} we present the abstract framework and theory. In Section~\ref{sec:IPFG antisym flow} we analyse the zero-cost velocity for the antisymmetric L-function in the setting of independent particles on a finite graph. In Section~\ref{sec:examples} we apply the abstract theory to various stochastic particle systems and conclude with discussion  in Section~\ref{sec:discuss}. In Section~\ref{sec:large deviations} we connect (and thereby motivate) the abstract ideas developed in Section~\ref{sec:abstract} to large deviations.

\section{Abstract theory}\label{sec:abstract}
%

In the introduction we worked with the large-deviation cost; we now work with its abstraction, the so-called the L-function\footnote{We use the terminology ``L-function'' from \cite[Def.~1.1]{MielkePeletierRenger14} as opposed to `Lagrangian' or `cost', since in
practice $\L$ need not correspond to a large-deviation principle, and it often plays a different role as the Lagrangian in mechanics.}. In what follows we first introduce the L-function and other key ingredients of the abstract framework in Section~\ref{subsec:abstract setup}. 
Using these objects we introduce dissipation potentials, tilted L-functions and Fisher information in Section~\ref{subsec:abstract forces}. Using time-reversal-type arguments from MFT, in Section~\ref{subsec:abstract time reversal} we introduce time-reversed L-functions, symmetric and  antisymmetric forces, and in Section~\ref{sec:Gen-ortho} we introduce a  generalised notion of orthogonality satisfied by these forces. Section~\ref{sec:abstract FIIR equalities} contains various decompositions of the L-function and in Section~\ref{subsec:flows} we study the symmetric and antisymmetric L-function. Throughout this section we will use the guiding example of \underline{I}ndependent Markovian \underline{P}articles  on a \underline{F}inite \underline{G}raph (IPFG), which we now introduce. 

\begingroup
\allowdisplaybreaks
\begin{example}\label{ex:indep LDP}

Let $\X$ be a finite graph with strict ordering. Consider $n$ independent Markovian particles $X_1(t),\hdots X_n(t)$ on 
$\X$, with irreducible generator $Q\in\RR^{\X\times\X}$. The particle density (also called empirical measure or mean field), defined as $\rho\super{n}(t):=n^{-1}\sum_{i=1}^n\delta_{X_i(t)}$, is a Markov process on $\RR^\X$
with generator
\begin{equation*}
  (\hat\Q\super{n}f)(\rho)=n\sumsum_{(x,y)\in\X\times\X} \rho_x Q_{xy}\big\lbrack f(\rho-\tfrac1n\mathds1_x+\tfrac1n\mathds1_y)-f(\rho)\big\rbrack,
\end{equation*}
where $\mathds1_x$ is the indicator function for $x\in\X$.
With a suitable initial condition, Varadarajan's Theorem implies that the random process $\rho\super{n}$ converges in the many-particle limit $n\to\infty$ to the deterministic solution of the ODE
\begin{equation}\label{eq:for-Kol}
\dot\rho(t) = Q\tp\rho(t).
\end{equation}

In addition to the empirical measure, we will also track the number of jumps through each edge, which characterises the flux over an edge. For reasons that will be clarified in Section~\ref{subsec:abstract forces}, it is important to consider net fluxes (over the usual one-sided fluxes), defined on half of the edges (for this purpose we impose an arbitrary ordering $<$ on the finite set $\X$)
\begin{equation}\label{def:HalfEdges}
  \fluxS:=\big\{(x,y)\in\X\times\X:x<y\big\}. 
\end{equation}
More precisely, the so-called integrated net flux $W\super{n}_{xy}(t)$ over the edge connecting $x,y\in\X$, is defined as the difference between the number of jumps from $x\to y$ and in the opposite direction from $y\to x$ in the time interval $[0,t]$, 
all rescaled by $\frac1n$. Then the pair $(\rho\super{n}(t),W\super{n}(t))$ is again a Markov process, now in $\RR^\X\times\RR^{\fluxS}$ with the generator
\begin{align*}
  (\Q\super{n}f)(\rho,w)=n\sumsum_{(x,y)\in \fluxS} & \rho_x Q_{xy}\big\lbrack f(\rho-\tfrac1n\mathds1_x+\tfrac1n\mathds1_y,w+\tfrac1n\mathds1_{xy})-f(\rho,w)\big\rbrack \\
  &  \ +   \rho_y Q_{yx}\big\lbrack f(\rho-\tfrac1n\mathds1_y+\tfrac1n\mathds1_x,w-\tfrac1n\mathds1_{xy})-f(\rho,w)\big\rbrack.
\end{align*}
This process converges as $n\to\infty$ to the solution of the macroscopic system
\begin{equation}\label{eq:indep evolution flux}
\begin{cases}
  \dot w_{xy}(t) = \rho_x(t) Q_{xy} - \rho_y(t) Q_{yx}, & (x,y)\in \fluxS,\\
  \dot\rho_x(t) = -\Ddiv_x \dot w(t), &x\in\X,\\
\end{cases}
\end{equation}
where the operator  
\begin{equation}\label{eq:discrete divergence}
  \Ddiv_x j:=\sum_{y\in\X:y> x}j_{xy} -\sum_{y\in\X:y<x} j_{yx},
\end{equation}
is the discrete divergence for net fluxes. Indeed the system~\eqref{eq:indep evolution flux} is of the form~\eqref{eq:coupled ev eq}. 

In the many-particle limit ($n\rightarrow\infty$), the random fluctuations around the mean behaviour decay fast due to averaging effects. The unlikeliness to observe an atypical flux for large but finite $n$  is quantified by the large-deviation principle, formally written as
\begin{equation}
  \Prob\Big( (\rho\super{n},W\super{n})\approx (\rho,w)\Big) \stackrel{n\to\infty}{\sim} e^{-n\I_0(\rho)-n\J(\rho,w)},
\quad 
\J(\rho,w):=
  \begin{cases}
    \int_0^T\!\L\big(\rho(t),\dot w(t)\big)\,dt, &\dot\rho=-\Ddiv \dot w,\\
    \infty,  &\text{otherwise},
  \end{cases}
\label{eq:flux LDP}
\end{equation}
where the $\L$ is given by~\cite{Renger2018a,Kraaij2017} (the flux $j$ is a placeholder for $\dot w$)
\begin{align}
  \L(\rho,j) :=
    \inf_{j^+\in\RR^{\fluxS}_{\geq0}}
    \sumsum_{(x,y)\in \fluxS}\bigl[   s(j^+_{xy} \mid \rho_x Q_{xy}) + s(j^+_{xy}-j_{xy} \mid \rho_y Q_{yx})\bigr], 
  \label{eq:indep L}
\end{align}
which uses the Boltzmann function
\begin{align}
  s(a\mid b) :=
    \begin{cases}
      a\log\frac{a}{b}-a+b, &a,b>0,\\
      b,                    &a=0, b\geq 0\\
      \infty,               &\text{otherwise}.
    \end{cases}
    \label{def:entropy-s}
\end{align}
Here $\I_0$ is the large-deviation rate functional corresponding to the initial distribution of $\rho\super{n}(0)$.
Indeed $\L(\rho,j)$ is non-negative and minimised by \eqref{eq:indep evolution flux}. Due to the contraction principle~\cite[Thm.~4.2.1]{DemboZeitouni09}, the infimum is taken over all non-negative one-way fluxes $(j^+_{xy})_{x<y}$ and $(j^+_{yx}-j_{yx})_{x>y}$.

Applying the contraction principle, the empirical measure satisfies the following large-deviation principle, where $\hat\L$ is related to $\L$ via \eqref{eq:intro-contraction},
\begin{equation*}
  \Prob\Big(\rho\super{n}\approx \rho\Big) \stackrel{n\to\infty}{\sim} \exp \Big\lbrack -n\I_0(\rho(0))-n\int_0^T\!\hat\L(\rho(t),\dot\rho(t))\,dt\Big\rbrack.
\end{equation*}
\end{example}
\endgroup

\subsection{Abstract framework}
\label{subsec:abstract setup}

Although at first sight the general setup in this section may seem heavy, it appears naturally in various specific systems. We illustrate this via our guiding example.

\begin{example}\label{ex:IPFG-state-flux-triple} Consider the example of the independent particles on a finite graph $\X$. Let 
\begin{equation}\label{def:FinMass-Signed}
	\textstyle\M_a(\X):=\{\rho\in\RR^\X:\sum_{x\in\X}\rho_x=a\},
\end{equation}
including vectors with negative coordinates. The states/densities $\rho$ lie in the manifold $\Z:=\M_1(\X)$. Due to the constraint on total mass, $\Z$ is a $(|\X|-1)$-dimensional hyperplane in $\RR^{\X}$, with corresponding local tangent, cotangent spaces and Euclidean pairing between them given by
\begin{equation}\label{eq:IPFG state space}
\begin{gathered}
  T_\rho\Z=\M_0(\X), \qquad 
  T^*_\rho\Z=\RR^{\X}/\mathrm{span{\{(1,1,\ldots,1)\}}}=\{\{\xi+c(1,\hdots,1):c\in\RR\}:\xi\in\RR^\X\},\\
  _{T_\rho^*\Z}\langle \xi,u\rangle_{T_\rho\Z}:=\xi\cdot u,
\end{gathered}
\end{equation}
where $a\cdot b$ is the usual dot product in Euclidean spaces. Cotangents are defined modulo the orthogonal space $(\M_0(\X))^\perp = \mathrm{span}\{(1,1,\ldots,1)\}$, and lead to $\langle \xi+c(1,\hdots,1),u\rangle=\xi\cdot u+c\sum_{x\in\X}u_x=\xi\cdot u$. The integrated net fluxes $w$ simply lie in the Euclidean ``flux space'' $\W:=\RR^{\fluxS}$ (recall \eqref{def:HalfEdges}) with local tangent and cotangent spaces $T_w\W=T_w^*\W=\RR^{\fluxS}$, again paired together with the Euclidean inner product.

Between the two manifolds we define the map $\phi:\W\to\Z$ as 
\begin{align*}
  \phi[w]:=\rho^0-\Ddiv w, &&\text{with differential } d\phi_w =-\Ddiv &&\text{and adjoint operator } d\phi_w\tp=\Dgrad,
\end{align*}
where $\Ddiv$ is the discrete divergence from~\eqref{eq:discrete divergence}, $\Dgrad_{xy}\xi:=\xi_y-\xi_x$ and $\rho_0\in\Z$ is an arbitrary but fixed reference measure. Hence the continuity equation can be abstractly written as $u=d\phi_w j \in T_{\phi\lbrack w\rbrack}\Z$ for $j\in T_w\W$. 
It will be important that the operator $\phi$ is surjective. For an arbitrary $\mu\in\M_1(\X)$, the difference $\mu-\rho^0\in\M_0(\X)$.

Note that the underlying dynamics~\eqref{eq:indep evolution flux} as well as any path with $\J(\rho,w)<\infty$ conserves the total mass as well as the non-negativity of $\rho(t)$, so that the states will in fact be restricted to the simplex $\P(\X)\subset\M_1(\X)\subset\RR^\X$ of probability measures on $\X$ (i.e.\ coordinate-wise non-negative vectors in $\RR^{\X}$ which sum to one).
However, we always work with the full manifold $\M_1(\X)$ so that derivatives and the (co)tangent spaces are well defined without needing to worry about boundaries, boundary points etc. Instead we set $\L(\rho,j)=\infty$ whenever $\rho$ lies on (or outside of) the boundary $\partial\P(\X)$ and the flux $j\in T_\rho\W$ pushes the state in the outward direction. Indeed, the functional $\J(\rho,w)$ and cost $\L(\rho,j)$ from Example~\ref{ex:indep LDP} are defined for all $\rho\in\Z=\RR^\X$, but for any path with $\J(\rho,w)<\infty$, the densities are contained in $\P(\X)$.

\end{example}

For the above example $d\phi_w,d\phi\tp_w$ and the (co)tangent spaces $T_w\W,T_w^*\W$ do not depend on $w$. In practice, $d\phi_w,d\phi\tp_w$ and $T_w\W,T_w^*\W$ might depend on $w$, but only through the corresponding state $\rho=\phi\lbrack w\rbrack$, as for example in a contuinity equation of the form $v=-\div(\rho j)$. By a slight abuse of notation we shall therefore write $d\phi_\rho,d\phi\tp_\rho$ and $T_\rho\W,T_\rho^*\W$ for $\rho\in\Z$. In particular, this allows us to write $\L:T\W\to\RR\cup\{\infty\}$, so that $\L=\L(\rho,j)$ for $(\rho,j)\in T\W$.

Inspired by these observations we now introduce the \emph{state-flux triple}, \emph{L-function} and the \emph{quasipotential}, which are the key ingredients in the abstract framework.

\begin{definition}[{\cite[Sec.\ 4.1]{Renger2018b}}]\label{def:state-flux-triple}
A triple $(\Z,\W,\phi)$ is called a \emph{state-flux triple} if 
\begin{enumerate}[label=(\roman*)]
\item The state-space $\Z$ and the flux-space $\W$ are differentiable Banach manifolds, with corresponding local tangent Banach spaces $T_\rho\Z$ and $T_w\W$.
\item $\phi:\W\to\Z$ is a surjective differentiable operator $\phi:\W\to\Z$. 
\item $T_w\W$ depends on $w$ only through $\rho=\phi[w]$, so that by a slight abuse of notation we can replace  $T_w\W$ by $T_\rho\W$ and write $T\W:=\{(\rho,j):\rho\in\Z, j\in T_\rho\W\}$.
\item $\phi$ has a linear bounded differential that depends on $w$ only through $\rho=\phi[w]$, so that by a slight abuse of notation we write $d\phi_\rho: T_\rho\W\to T_\rho\Z$.
\end{enumerate}
\end{definition}

The Banach structure should be seen as a reference norm only, that we use to define Gateaux derivatives, the Banach dual spaces $T_\rho^*\W, T_\rho^*\Z$ and the duality pairings $_{T_\rho^*\Z}\langle\cdot,\cdot\rangle_{T_\rho\Z}$, $_{T_\rho^*\W}\langle\cdot,\cdot\rangle_{T_\rho\W}$ (where we omit the indices since it will be clear to which spaces the elements belong). Analogously we write $T^*\W:=\{(\rho,\zeta):\rho\in\Z, \zeta\in T_\rho^*\W\}$ and $T^*\Z:=\{(\rho,\xi):\rho\in\Z, \xi\in T_\rho^*\Z\}$. The differential $d\phi_\rho$ corresponds to a continuity equation $u=d\phi_\rho j$, where $d\phi_\rho$ is usually minus a divergence operator or some generalisation thereof. 
The assumption that $d\phi$ is bounded, ensures the existence of a well-defined adjoint. In order to avoid confusion with convex duality, we will denote adjoint operators by $\mathsf{T}$, e.g. $d\phi_\rho\tp:T_\rho^*\Z\to T_\rho^*\W$.

\begin{remark}
Our state-flux triple is essentially identical to the framework of \cite{ACEGKP2023}; there $\Z$ is called the `base manifold', $T\W$ is called the `total manifold', and the differential $d\phi:T\W\to T\Z$ is called the `anchor map'.
\end{remark}

    
\begin{definition}\label{def:L-function} 
For any $\S\subseteq \Z$ define 
\begin{align}\label{eq:subset tangent spaces}
  T_\S\W:=\{(\rho,j)\in T\W:\rho\in \S\}
  &&\text{and}&&
  T_\S^*\W:=\{(\rho,\zeta)\in T^*\W:\rho\in \S\}.
\end{align}

A mapping $\L:T_\S\W\to\RR\cup\{\infty\}$ is called an \emph{L-function on $\S$}, if for all $\rho\in\S$:
\begin{enumerate}[label=(\roman*)]
\item\label{def:L-inf} $\inf \L(\rho,\cdot)=0$,
\item there exists a unique $j^0(\rho)\in T_\rho\W$, called the \emph{zero-cost flow}, which satisfies $\L\big(\rho,j^0(\rho)\big)=0$,
\item\label{def:L-con-lsc} $\L(\rho,\cdot)$ is convex and lower semicontinuous (with respect to the Banach norm on $T_\rho\W$).
\end{enumerate}
\end{definition}
While this definition allows for flexibility in the domain, throughout this paper we will reserve the symbol $\L$ for L-functions on the full space $\mathcal S=\Z$. 
From Section~\ref{subsec:abstract forces} onwards we will encounter functions $\L_\CoF$ that are only defined on proper subsets of $\Z$ (see  Remark~\ref{rem:local vs global} below). The inclusion of $\infty$ in the codomain of $\L$ is essential to encode forbidden fluxes as discussed in Example~\ref{ex:IPFG-state-flux-triple}.

By lower semicontinuity and convexity, $\L(\rho,\cdot)$ is its own convex bidual with respect to the second variable~\cite[Prop.~3.56]{Peypouquet2015}, i.e.\ there exists an $\H:T^*_\S\W\to\RR\cup\{\infty\}$ such that
\begin{equation}\label{eq:L and H}
  \H(\rho,\zeta):=\sup_{j\in T_\rho\W} \langle \zeta,j\rangle - \L(\rho,j) 
  \qquad\text{and}\qquad
  \L(\rho,j)=\sup_{\zeta\in T_\rho^*\W} \langle \zeta,j\rangle - \H(\rho,\zeta).
\end{equation}
It is easy to see that $\L$ is an L-function if and only if for any $\rho\in\Z$, $\H(\rho,0)=0$, $\H(\rho,\cdot)$ is convex, lower semicontinuous, proper and bounded from below by an affine function. Typically $\L(\rho,0)<\infty$, so that $\H(\rho,\cdot)$ is bounded from below.

We are now ready to introduce the following notion of the quasipotential.
\begin{definition}\label{def:I0}
A function $\QP:\Z\to\RR\cup\{\infty\}$ is called a \emph{quasipotential} (corresponding to $\L$) if
\begin{enumerate}[label=(\roman*)]
\item $\inf\QP=0$,
\item for any $\rho\in\Z$ where $\QP$ is Gateaux differentiable, we have
\begin{equation}\label{eq:invariant I_0}
  \H\big(\rho,d\phi_\rho\tp d\QP(\rho)\big)=0.
\end{equation}
\end{enumerate}
\end{definition}
We stress that this notion of a quasipotential is only related to the convex dual $\H$ of some abstract function $\L$, where a priori no stochastic particle system is involved. Both nowhere differentiable functions and the zero function are quasipotentials by definition, and our results are true but mostly trivial in this setting. In all the examples we consider,~\eqref{eq:invariant I_0} will have at least one non-trivial solution and in fact this definition is consistent with the the usual definition from statistical physics when large deviations are involved (see Section~\ref{sec:QuasiLDP}). We envisage that \eqref{eq:invariant I_0} should be understood in the sense of viscosity solutions, however it is not clear how one can define a viscosity solution in the general setup of this section. 

\begingroup
\allowdisplaybreaks
\begin{example} 
In Example~\ref{ex:indep LDP}, the processes $X_1(t),X_2(t),\hdots$ are irreducible and $\X$ is finite which ensures the existence of an invariant measure $\pi\in\P_+(\X)$ (the space of strictly positive probability measures). Consequently, the $n$-particle density $\rho\super{n}(t)$ admits an invariant measure $\Pi\super{n}\in\P(\RR^\X)$, where
\begin{equation*}
  \Pi\super{n}=\Big({\textstyle\bigotimes_{i=1}^n\pi}\Big)\circ \eta_n^{-1}, \qquad \eta_n(x_1,\hdots,x_n):=\tfrac1n\sum_{i=1}^n\delta_{x_i}.
\end{equation*}
By Sanov's theorem, the large-deviation rate functional corresponding to $\Pi\super{n}$ is
\begin{equation*}
  \QP(\rho):=
  \begin{cases}
    \sum_{x\in\X} s(\rho_x \mid \pi_x), &\rho\in\P(\X),\\
    \infty,                             &\rho\notin\P(\X),
  \end{cases}
\end{equation*}
%
where $s(\cdot\mid\cdot)$ is defined in~\eqref{def:entropy-s}, and hence $\QP$ is indeed the quasipotential corresponding to $\L$ in the classical large-deviation sense (see Theorem~\ref{th:LDP QP}).

This can also be checked macroscopically by verifying~\eqref{eq:invariant I_0}, without invoking any connection to large deviations of a microscopic particle system. To check this, we first calculate the convex dual of the L-function~\eqref{eq:indep L}:
\begin{equation*}
  \H(\rho,\zeta):=\sumsum\limits_{(x,y)\in\fluxS} \big[ \rho_xQ_{xy}\big(e^{\zeta_{xy}}-1\big) + \rho_yQ_{yx}\big(e^{-\zeta_{xy}}-1\big)\big].
\end{equation*}
Note that while $\QP(\cdot)$ would be nowhere differentiable as a functional on $\RR^\X$, it is differentiable at all $\rho \in \P_+(\X)$ (which is a subset of the manifold $\M_1(\X)$ introduced in Example~\ref{ex:IPFG-state-flux-triple}) since $\pi_x>0$ for every $x\in\X$ with Gateaux derivative
\begin{align*}
	d\V(\rho) = \big\{(\log(\rho_x/\pi_x)+c)_{x\in\X}:c\in\RR\big\} \in T_\rho^*\Z,
\end{align*}
so that $d\phi_\rho d\V(\rho) = \Dgrad d\V(\rho) = \big( \log(\rho_y/\pi_y) - \log(\rho_x/\pi_x) \big)_{x<y} \in T_\rho^*\W$. 
In fact by the chain rule, $\Dgrad d\V(\rho)$ can also be interpreted as the (classical) derivative of $\V(\phi\lbrack w\rbrack)$ with respect to $w\in\RR^{\fluxS}$; this also explains why the constants $c$ do not play a role after taking the discrete gradient. We then check that $\V$ is a quasipotential by concluding that at all points of differentiability of $\QP$ (i.e. for $\rho\in \P_+(\X)$) using $Q\tp\pi=0$ and $\sum_y Q_{xy}=0$ we find
\begin{align*}
  \H\big(\rho,d\phi_\rho\tp d\QP(\rho)\big) 
&=\sumsum_{(x,y)\in\fluxS}  \Bigl( \rho_xQ_{xy}\Bigl[\mfrac{\rho_y\pi_x}{\rho_x\pi_y} -1\Bigr] + \rho_yQ_{yx}\Bigl[\mfrac{\rho_x\pi_y}{\rho_y\pi_x} -1\Bigr]\Bigr)\\
&=\sumsum_{\substack{x,y\in\X\\x\neq y}}  \mfrac{\rho_y}{\pi_y} \left( Q_{xy}\pi_x - Q_{yx}\pi_y \right)  =\sumsum_{x,y\in\X} Q_{xy}\pi_x \bigl(\mfrac{\rho_y}{\pi_y}-\mfrac{\rho_x}{\pi_x} \bigr)=\sum_{y\in\X} (Q\tp \pi)_y\mfrac{\rho_y}{\pi_y} = 0,
\end{align*}
where the third and fourth equality follows by interchanging the indices in the second terms of the summation.
\label{ex:QP}
\end{example}
\endgroup

\begin{remark}\label{rem:local vs global}
Most of the analysis that follows will be carried out locally for fixed $\rho$. Therefore the $\rho$-dependencies in $\L(\rho,j)$ and $d\phi_\rho$ do not play a role in the calculations. We however include the dependency for two reasons. First, for almost all practical applications, $\L$ and $d\phi_\rho$ will depend on $\rho$, either explicitly or implicitly through the domains of definition $T_\rho\W, T_\rho\Z$. Second, even though writing the $\rho$-dependency is standard in the literature, so far practically all literature on the topic completely ignores the problems at the boundaries, where $\V$ may cease to be differentiable due to the appearance of $\log0$. Our paper is one of the first to make completely precise claims in regards to domain of definitions for various objects involved by very carefully identifying all points $\rho$ for which our results hold; this also motivates the definition of L-functions on subsets $\S$.
\end{remark}

\subsection{Dissipation potentials, tilted L-functions and Fisher information}\label{subsec:abstract forces}
While the concept of a dissipation potential is standard \cite{ColVis90CDNE,Luckhaus1995,Mielke2011}, the connection to convex analysis \cite{MielkePeletierRenger14} and the application to flux spaces is more recent~\cite{Maes2008a,Maes2017a,KaiserJackZimmer2018,Renger2018a, Renger2018b}. Classically, a dissipation potential $\Phi(\rho,j)$ is convex, lower semicontinuous in the second variable, and satisfies $\inf\Phi(\rho,\cdot)=0=\Phi(\rho,0)$. To define the dissipation potential in our context, we first present the following basic result on $\L$, which was originally derived in the context of gradient flows~\cite[Lem.~2.1 \& Prop.~2.1]{MielkePeletierRenger14}, where the driving force is the derivative of a certain free energy. As in the literature~\cite{Schnakenberg1976a,Maes2008a,Maes2017a,KaiserJackZimmer2018, Renger2018a,RengerZimmer2021}, the setting with fluxes allows for more general driving forces. 
We first focus on a driving force $\hat\zeta\in T_\rho^*\W$ for a fixed $\rho$; and later introduce it as a $\rho$-dependent force field $F(\rho)$. 

\begin{theorem}{\cite[Prop.~2.1(i)]{MielkePeletierRenger14}}\label{th:MPR force}
Let $\L$ be an L-function on $\Z$ and fix $\rho\in\Z$. For any $\hat\zeta\in T_\rho^*\W$ and convex lower-semicontinuous $\Phi(\rho,\cdot):T_\rho\W\to\RR\cup\{\infty\}$ with convex dual $\Phi^*$, the following statements are equivalent
\begin{enumerate}[label=(\roman*)] 
\item $\inf\Phi(\rho,\cdot)=0=\Phi(\rho,0)$, and for any $j\in T_\rho\W$ 
\begin{equation}\label{eq:L=Phi Phis1}
  \L(\rho,j)=\Phi(\rho,j) + \Phi^*(\rho,\hat\zeta) - \langle \hat\zeta, j\rangle.
\end{equation}
\item $-\hat\zeta\in\partial\L(\rho,0)$ with 
\begin{equation}\label{eq:Phis1}
  \Phi^*(\rho,\zeta)=\H(\rho,\zeta-\hat\zeta) - \H\big(\rho,-\hat\zeta\big).
\end{equation}
\end{enumerate}
\end{theorem}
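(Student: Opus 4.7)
The plan is to recognise that both statements are essentially two ways of writing the same Fenchel--Young identity, one at the level of $\L$ and one at the level of $\H$, shifted by the covector $\hat\zeta$. Concretely, the formula in (ii) says that $\Phi^*(\rho,\cdot)$ is nothing but $\H(\rho,\cdot)$ translated by $-\hat\zeta$ and normalised so that $\Phi^*(\rho,0)=0$; dualising produces exactly the decomposition in (i).

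For the implication (ii) $\Rightarrow$ (i), I would start from $\Phi^*(\rho,\zeta)=\H(\rho,\zeta-\hat\zeta)-\H(\rho,-\hat\zeta)$ and compute the convex bidual $\Phi(\rho,\cdot)$. Substituting $\eta:=\zeta-\hat\zeta$ in the supremum and using the biduality $\L(\rho,j)=\sup_\eta \langle\eta,j\rangle-\H(\rho,\eta)$ (valid by Definition~\ref{def:L-function}\ref{def:L-con-lsc} and \cite[Prop.~3.56]{Peypouquet2015}) yields
\begin{equation*}
  \Phi(\rho,j)=\L(\rho,j)+\langle\hat\zeta,j\rangle+\H(\rho,-\hat\zeta).
\end{equation*}
Since $\inf\L(\rho,\cdot)=0$ forces $\H(\rho,0)=0$, evaluating $\Phi^*$ at $\hat\zeta$ gives $\Phi^*(\rho,\hat\zeta)=-\H(\rho,-\hat\zeta)$, which is exactly the algebraic identity that turns the above expression for $\Phi$ into the decomposition \eqref{eq:L=Phi Phis1}. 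That $\Phi(\rho,0)=0$ is precisely Fenchel--Young equality at $(0,-\hat\zeta)$, which holds because $-\hat\zeta\in\partial\L(\rho,0)$; and $\Phi(\rho,j)\ge0$ follows from $-\H(\rho,-\hat\zeta)=\inf_j[\L(\rho,j)+\langle\hat\zeta,j\rangle]$, so that $\inf\Phi(\rho,\cdot)=0$.

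For the converse (i) $\Rightarrow$ (ii), I would first evaluate \eqref{eq:L=Phi Phis1} at $j=0$ to read off $\L(\rho,0)=\Phi^*(\rho,\hat\zeta)$. Combined with $\Phi(\rho,\cdot)\ge0$, the decomposition immediately gives $\L(\rho,j)\ge\L(\rho,0)-\langle\hat\zeta,j\rangle$ for every $j\in T_\rho\W$, which is the subdifferential characterisation $-\hat\zeta\in\partial\L(\rho,0)$. To extract the formula \eqref{eq:Phis1}, I would take convex conjugates in \eqref{eq:L=Phi Phis1}, so that
\begin{equation*}
  \H(\rho,\zeta)=\Phi^*(\rho,\zeta+\hat\zeta)-\Phi^*(\rho,\hat\zeta),
\end{equation*}
and then evaluate at $\zeta=-\hat\zeta$: the hypothesis $\inf\Phi(\rho,\cdot)=0$ yields $\Phi^*(\rho,0)=0$, which pins down $\Phi^*(\rho,\hat\zeta)=-\H(\rho,-\hat\zeta)$ and therefore \eqref{eq:Phis1} after a change of variables.

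There is no genuine obstacle beyond careful bookkeeping: the whole argument is a symmetric use of Fenchel--Young, with the two normalisation conditions $\H(\rho,0)=0$ (equivalent to $\inf\L=0$) and $\Phi^*(\rho,0)=0$ (equivalent to $\inf\Phi=0$) playing dual roles in the two directions. The convex lower semicontinuity and properness of $\L(\rho,\cdot)$ and $\Phi(\rho,\cdot)$ on the Banach space $T_\rho\W$ ensure that the bidualities $\L=\L^{**}$ and $\Phi=\Phi^{**}$ used throughout are justified without any further regularity assumption.
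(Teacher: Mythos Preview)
Your proposal is correct and follows essentially the same route as the paper's proof: both directions are handled by taking convex duals of the decomposition, using the normalisations $\H(\rho,0)=0$ and $\Phi^*(\rho,0)=0$, and identifying $\Phi^*(\rho,\hat\zeta)=-\H(\rho,-\hat\zeta)$. The only cosmetic difference is that for the subdifferential claim in (i)$\Rightarrow$(ii) the paper invokes Fermat's rule ($0\in\partial\Phi(\rho,0)$) and then subdifferentiates \eqref{eq:L=Phi Phis1} at $j=0$, whereas you obtain the same inequality directly from $\Phi\ge0$; these are equivalent one-line arguments.
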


We would like to define the driving force as $F(\rho)=\hat\zeta$ and the dissipation potential $\Phi(\rho,j)$ as above. However these exist uniquely only if the subdifferential $\partial\L(\rho,0)$ consists of a singleton, i.e.\ $\L(\rho,\cdot)$ is Gateaux differentiable at $0$, which motivates the following definitions.  

\begin{definition}\label{def:F-disspot} Let $\L$ be an L-function on $\Z$. Define
\begin{equation*}
  \Dom(F):=\big\{\rho\in\Z: j \mapsto \L(\rho,j) \text{ is Gateaux differentiable at } j=0\big\},
\end{equation*}
and recall the definition of the restricted (co)tangent spaces~\eqref{eq:subset tangent spaces}. The driving force $F$ and dissipation potentials (corresponding to $\L$) are defined as
\begin{align}
  F(\rho) &:=-d\L(\rho,0)\in T_\rho^*\W                                                 &&\text{for } \rho\in\Dom(F), \label{eq:F}\\
  \Phi^*(\rho,\zeta) &:=\H\big(\rho,\zeta-F(\rho)\big) - \H\big(\rho,-F(\rho)\big),     &&\text{for } (\rho,\zeta)\in T^*_{\Dom(F)}\W, \label{eq:Phis}\\
  \Phi(\rho,j) &:=\sup_{\zeta\in T_\rho^*\W} \langle\zeta,j\rangle - \Phi^*(\rho,\zeta) &&\text{for }  (\rho,j)\in T_{\Dom(F)}\W. \notag
\end{align}

\end{definition}
Note that, $\Phi^*$ as defined in~\eqref{eq:Phis} indeed satisfies $\inf\Phi^*(\rho,\cdot)=0=\Phi^*(\rho,0)$, since $-F$ is a minimiser of $\H(\rho,\cdot)$ by~\eqref{eq:F}, and consequently $\inf\Phi(\rho,\cdot)=0=\Phi(\rho,0)$ which makes $\Phi$ a dissipation potential. Furthermore combining Theorem~\ref{th:MPR force} with Definition~\ref{def:F-disspot}, for any $(\rho,j)\in T_{\Dom(F)}W$ we have the decomposition
\begin{equation}\label{eq:L-decom-F}
 \L(\rho,j)=\Phi(\rho,j) + \Phi^*(\rho,F) - \langle F, j\rangle.
\end{equation}
In what follows we will make use of 
\begin{equation}\label{eq:symmetric dissipation}
  \Dom_\symdiss(F) :=
   \Big\{
    \rho\in\Dom(F): \H\big(\rho,\zeta+d\L(\rho,0)\big) = \H\big(\rho,-\zeta+d\L(\rho,0)\big) \text{ for all } (\rho,\zeta)\in T^*_{\Dom(F)}\W
   \Big\}.
\end{equation}
The following lemma states that the dissipation potential is indeed symmetric in $\Dom_\symdiss(F)$.
%
\begin{lemma}[{\cite[Prop.~2.1(ii)]{MielkePeletierRenger14}}] \label{lem:symmetric dissipation}
Let $\L$ be an L-function on $\Z$. For $\rho\in\Dom_\symdiss(F)$ the following statements are equivalent
\begin{enumerate}[label=(\roman*)]
\item\label{it:symmetric dissipation H} $\H\big(\rho,\zeta-F(\rho)\big) = \H\big(\rho,-\zeta-F(\rho)\big)$ for all $\zeta\in T^*_\rho\W$,
\item $\L(\rho,j) = \L(\rho,-j) -2\langle F(\rho),j\rangle$ for all $j\in T_\rho\W$,
\item $\Phi^*(\rho,\zeta)=\Phi^*(\rho,-\zeta)$ for all $\zeta\in T^*_\rho\W$,
\item $\Phi(\rho,j)=\Phi(\rho,-j)$ for all $j\in T_\rho\W$.
\end{enumerate}
\end{lemma}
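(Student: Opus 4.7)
The plan is to establish the four equivalences through a short circular chain: (i)$\Leftrightarrow$(iii), (iii)$\Leftrightarrow$(iv), (iii)$\Rightarrow$(ii), and (ii)$\Rightarrow$(i). Each implication reduces to a one-line manipulation using either the defining identity of $\Phi^*$ from Definition~\ref{def:F-disspot}, the primal decomposition \eqref{eq:L-decom-F}, or convex conjugation in the second argument. The whole argument is purely local in $\rho$, so $\rho\in\Dom(F)$ is fixed throughout.

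First I would handle (i)$\Leftrightarrow$(iii): by the definition $\Phi^*(\rho,\zeta)=\H(\rho,\zeta-F(\rho))-\H(\rho,-F(\rho))$, the constant $\H(\rho,-F(\rho))$ drops out when comparing $\Phi^*(\rho,\zeta)$ with $\Phi^*(\rho,-\zeta)$, so evenness of $\Phi^*(\rho,\cdot)$ is literally equivalent to evenness of $\H(\rho,\cdot-F(\rho))$ about the origin, which is (i). Next, (iii)$\Leftrightarrow$(iv) is a standard fact about Legendre transforms: since $\Phi(\rho,\cdot)$ is the convex conjugate of $\Phi^*(\rho,\cdot)$ and vice versa (by the convexity and lower semicontinuity inherited from $\L$ and the definition of $\Phi$), the substitution $\zeta\mapsto-\zeta$ in the sup defining $\Phi(\rho,j)$ shows $\Phi(\rho,j)=\Phi(\rho,-j)$ whenever $\Phi^*(\rho,\cdot)$ is even, and the converse follows by biduality.

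For (iii)(iv)$\Rightarrow$(ii) I would apply the decomposition~\eqref{eq:L-decom-F} at $j$ and at $-j$ and subtract: since $\Phi(\rho,j)=\Phi(\rho,-j)$ by (iv), the dissipation-potential contributions cancel and one is left with
\begin{equation*}
  \L(\rho,j)-\L(\rho,-j) = -\langle F(\rho),j\rangle - \langle F(\rho),-j\rangle \cdot (-1) = -2\langle F(\rho),j\rangle,
\end{equation*}
which is (ii). Finally, for (ii)$\Rightarrow$(i), I would take the convex dual of the identity in (ii): starting from $\H(\rho,\zeta)=\sup_{j}\langle\zeta,j\rangle-\L(\rho,j)$, substituting $\L(\rho,j)=\L(\rho,-j)-2\langle F(\rho),j\rangle$, and changing variables $j\mapsto-j$, one obtains $\H(\rho,\zeta)=\H(\rho,-\zeta-2F(\rho))$; shifting $\zeta\mapsto\zeta-F(\rho)$ then yields (i).

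I do not expect any genuine obstacle: the manipulations are elementary once the dual pair $(\L,\H)$ and $(\Phi,\Phi^*)$ is in place, and the assumption $\rho\in\Dom(F)$ guarantees that $F(\rho)$ and hence $\Phi,\Phi^*$ are well defined via Definition~\ref{def:F-disspot}. The only mild care needed is to note that $\Phi^{**}(\rho,\cdot)=\Phi^*(\rho,\cdot)$, which is why (iii)$\Leftrightarrow$(iv) goes both ways; this is ensured because $\Phi^*$ inherits convexity and lower semicontinuity from $\H$ via~\eqref{eq:Phis}.
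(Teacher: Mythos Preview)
Your proof plan is correct and complete: each of the four implications you outline reduces to the stated one-line manipulation, and the chain closes. The paper itself does not supply a proof of this lemma but simply cites \cite[Prop.~2.1(ii)]{MielkePeletierRenger14}; your argument is precisely the natural one and matches what that reference does. One cosmetic point: the displayed line in your step (iii)(iv)$\Rightarrow$(ii) is notationally garbled (the ``$\cdot(-1)$'' is confusing), so when writing it up just subtract the two instances of \eqref{eq:L-decom-F} cleanly.
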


\begingroup
\allowdisplaybreaks
\begin{example}\label{ex:IPFG-diss-pot} In practice the force~\eqref{eq:F} is more easily calculated via the equivalent statement $d\H(\rho,-F(\rho))=0$. Since $\xi=\frac12 \log\frac{d}{c}$ minimises $\xi\mapsto c (e^\xi-1)+d(e^{-\xi}-1)$, we find
\begin{equation*}
  F_{xy}(\rho)=\mfrac12\log\frac{\rho_x Q_{xy}}{\rho_y Q_{yx}}, \qquad \Dom(F)=\P_+(\X).
\end{equation*}
This definition of the driving force has been introduced in~\cite[Sec.\ 2.2]{KaiserJackZimmer2018}.
Using~\eqref{eq:Phis}, the dissipation potentials are given by
\begin{align}
  \Phi^*(\rho,\zeta)&=\sumsum_{(x,y)\in\fluxS} 2\sqrt{\rho_x Q_{xy}\rho_y Q_{yx}}\big(\cosh(\zeta_{xy})-1\big),
  \label{eq:indep Phi*}\\
  \Phi(\rho,j)&=\sumsum_{(x,y)\in\fluxS} 2\sqrt{\rho_x Q_{xy}\rho_y Q_{yx}}\Big(\cosh^*\big(\tfrac{j_{xy}}{2\sqrt{\rho_x Q_{xy}\rho_y Q_{yx}}}\big)+1\Big).\notag
\end{align}
These dissipation potentials are indeed symmetric (since $\cosh$ is even), and therefore $\Dom_\symdiss(F)=\Dom(F)$. Note that, while a priori $\Phi$ and $\Phi^*$ are only defined for strictly positive probability measures, they can easily be extended to the full space $\Z=\P(\X)$. For instance, the observation that $\lim_{a\rightarrow 0} a \cosh^*(\tfrac xa)=0$ if $x=0$ and $+\infty$ otherwise, offers a trivial extension of $\Phi$ to $\Z$, which also reflects the idea ``vanishing jump rates guarantee vanishing fluxes''.

We note that the Hamiltonian corresponding to one-way fluxes is given by
\begin{equation*}
  \H^{\text{one-way}}(\rho,\zeta):=\sumsum_{\substack{x,y\in\X\times\X\\ x\neq y}} \rho_x Q_{xy}(e^{\zeta_{xy}}-1),
\end{equation*}
for which the corresponding driving force does not exist at all, i.e.\ $\Dom(F^\mathrm{one\text{-}way})=\emptyset$ (also see~\cite[Rem.~4.10]{Renger2018a}). Hence one can only construct a meaningful macroscopic fluctuation theory for net fluxes. This further justifies the net-flux approach used in this paper, as opposed to the one-way fluxes typically used for Markov jump processes. 
\end{example}
\endgroup

\begin{remark} In the IPFG example above and all the examples considered in Section~\ref{sec:examples}, $\Dom_\symdiss(F)=\Dom(F)$, i.e.\ the dissipation potential is symmetric. However, in general $\Dom_\symdiss(F)$ may be an (empty) subset of $\Dom(F)$ as the following construction shows. Consider $\Z=\W=\RR$ and $\phi=\mathrm{id}$. Let $\H(\rho,\zeta)=-\zeta+e^\zeta-1$, which corresponds to a real-valued Markov process  with generator $(\Q\super{n}f)(\rho,w):= -\partial_\rho f(\rho,w)-\partial_w f(\rho,w) + n(f(\rho+\tfrac1n,w+\tfrac1n)-f(\rho,w))$.
Then $F\equiv0$ and clearly $\H(\rho,-\zeta-F(\rho))\neq\H(\rho,\zeta-F(\rho))$, which implies that $\Dom_\symdiss(F)=\emptyset$.
\end{remark}

So far we have dealt with L-functions on $\Z$. Using~\eqref{eq:Phis1}, we now introduce L-functions defined on subsets of $\Z$.  For a given $\L$ and an appropriate cotangent field $\CoF(\rho)$, using~\eqref{eq:Phis1} we can define a ($\CoF$-tilted) L-function $\L_\CoF$ defined on a subset of $\Z$. We call this a  `tilted' L-function since its definition is motivated by tilted Markov processes (see Section~\ref{sec:tilt}). Although, technically $\CoF$ is a cotangent field, in this paper we will often refer to it as a force field due to physical considerations. 

\begin{definition}\label{def:tilt}
Let $\L$ be an L-function on $\Z$. For any $\CoF:\Dom(\CoF)\to T_{\Dom(\CoF)}^*\W$ with  $\Dom(\CoF)\subseteq\Z$, the tilted function $\H_\CoF:T_{\Dom(F)\cap\Dom(\CoF)}^*\W\to\RR\cup\{\infty\}$ is defined as
\begin{equation}
  \H_\CoF(\rho,\zeta):=\H\big(\rho,\zeta+\CoF(\rho)-F(\rho)\big)-\H\big(\rho,\CoF(\rho)-F(\rho)\big),
\label{eq:tilted H}
\end{equation}
and  $\L_\CoF:T_{\Dom(F)\cap\Dom(\CoF)}\W\to\RR\cup\{\infty\}$ denotes its convex dual in the second variable.
\end{definition}
%
\begin{lemma} \label{lem:preFIR}
Let $\L$ be an L-function on $\Z$. The tilted function $\L_\CoF$ is an L-function on $\Dom(F)\cap\Dom(\CoF)$, and satisfies the decomposition 
\begin{align}
  \L_\CoF(\rho,j)&=\L(\rho,j) + \H\big(\rho,\CoF(\rho)-F(\rho)\big) + \langle F(\rho)-\CoF(\rho),j\rangle \label{eq:LV}\\
              &=\Phi(\rho,j) + \Phi^*\big(\rho,\CoF(\rho)\big) - \langle \CoF(\rho), j\rangle. \notag
\end{align}
\end{lemma}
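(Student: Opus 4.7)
The plan is to first establish the two displayed identities by direct computation using convex duality, and then verify the three defining properties of an L-function from Definition~\ref{def:L-function}. All arguments are elementary provided one is careful with the shift by $\CoF(\rho) - F(\rho)$ in the dual variable.

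\textbf{First equality.} Starting from the definition $\L_\CoF(\rho,j)=\sup_{\zeta\in T^*_\rho\W}\langle\zeta,j\rangle-\H_\CoF(\rho,\zeta)$ and substituting the formula \eqref{eq:tilted H} for $\H_\CoF$, the constant $\H(\rho,\CoF(\rho)-F(\rho))$ comes out of the supremum. For the remaining supremum I would perform the change of variables $\tilde\zeta:=\zeta+\CoF(\rho)-F(\rho)$, which transforms it into
\begin{equation*}
\sup_{\tilde\zeta}\langle\tilde\zeta,j\rangle-\H(\rho,\tilde\zeta) \;+\; \langle F(\rho)-\CoF(\rho),j\rangle \;=\; \L(\rho,j)+\langle F(\rho)-\CoF(\rho),j\rangle,
\end{equation*}
where I have used the Fenchel duality between $\L$ and $\H$ from~\eqref{eq:L and H}. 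This yields the first equality.

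\textbf{Second equality.} I would substitute the force decomposition~\eqref{eq:L-decom-F} of $\L$ into the first equality; after collecting the $\langle\cdot,j\rangle$ terms, the residual constants are $\Phi^*(\rho,F(\rho))+\H(\rho,\CoF(\rho)-F(\rho))$. It remains to show that this equals $\Phi^*(\rho,\CoF(\rho))$. By the defining formula \eqref{eq:Phis}, $\Phi^*(\rho,\CoF(\rho))=\H(\rho,\CoF(\rho)-F(\rho))-\H(\rho,-F(\rho))$, so the claim reduces to $\Phi^*(\rho,F(\rho))=-\H(\rho,-F(\rho))$. This in turn follows from \eqref{eq:Phis} applied at $\zeta=F(\rho)$ together with $\H(\rho,0)=-\inf_j\L(\rho,j)=0$.

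\textbf{L-function properties on $\Dom(F)\cap\Dom(\CoF)$.} Convexity and lower semicontinuity of $j\mapsto\L_\CoF(\rho,j)$ are automatic from being a Legendre transform. For the infimum, the definition \eqref{eq:tilted H} immediately gives $\H_\CoF(\rho,0)=0$, so by convex duality $\inf_j\L_\CoF(\rho,\cdot)=-\H_\CoF(\rho,0)=0$, and $\L_\CoF\geq 0$ is then automatic. For existence and uniqueness of the zero-cost flow, I would read the second equality as an instance of Theorem~\ref{th:MPR force}(i) applied to $\L_\CoF$ with $\hat\zeta=\CoF(\rho)$ and dissipation potential $\Phi$: by the Young--Fenchel equality, $\L_\CoF(\rho,j)=0$ is equivalent to $j\in\partial_\zeta\Phi^*(\rho,\CoF(\rho))=\partial_\zeta\H(\rho,\CoF(\rho)-F(\rho))$, and zero-cost flows of $\L_\CoF$ are exactly these subgradient elements.

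\textbf{Main obstacle.} The delicate point is the \emph{uniqueness} of the zero-cost flow: the subgradient $\partial_\zeta\H(\rho,\CoF(\rho)-F(\rho))$ need not be a singleton without additional regularity of $\H(\rho,\cdot)$ at the shifted point $\CoF(\rho)-F(\rho)$. On $\Dom(F)$ Gateaux differentiability of $\H(\rho,\cdot)$ at $-F(\rho)$ holds by definition of $F$ (dually to $\L$ being differentiable at $0$), but the analogous differentiability at $\CoF(\rho)-F(\rho)$ is an extra assumption on the tilt $\CoF$. I expect this to be the one place where a regularity condition on $\CoF$ (or on $\H$) must be invoked to pin down a unique zero-cost flow $j^0_\CoF(\rho)=d_\zeta\H(\rho,\CoF(\rho)-F(\rho))$.
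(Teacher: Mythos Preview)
Your approach is essentially the same as the paper's: the paper's entire proof reads ``The two equalities follow by using convex duality and~\eqref{eq:L=Phi Phis1},~\eqref{eq:Phis1} with $\hat\zeta=F$,'' which is precisely your change-of-variables computation for the first line and your substitution of~\eqref{eq:L-decom-F} together with~\eqref{eq:Phis} for the second. Your proof is simply the unpacked version.

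Your flagged obstacle is legitimate and worth noting: the paper does not address uniqueness of the zero-cost flow for $\L_\CoF$ at all, and you are right that it amounts to Gateaux differentiability of $\H(\rho,\cdot)$ at $\CoF(\rho)-F(\rho)$, which is not implied by $\rho\in\Dom(F)\cap\Dom(\CoF)$ alone. In the concrete examples of the paper (IPFG, zero-range, chemical reactions, lattice gases) $\H(\rho,\cdot)$ is smooth on its effective domain, so this never bites; but in the abstract framework the claim ``$\L_\CoF$ is an L-function'' in the strict sense of Definition~\ref{def:L-function}(ii) does tacitly rely on this extra regularity. You have identified a point the paper leaves implicit rather than a flaw in your own argument.
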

The two equalities follow by using convex duality and~\eqref{eq:L=Phi Phis1},~\eqref{eq:Phis1} with $\hat\zeta=F$. 
For special choices of $\CoF(\rho)$ we obtain
\begin{equation}\label{eq:L_LF}
  \L_F(\rho,j)=\L(\rho,j) \quad \text{and} \quad
  \L_0(\rho,j)=\Phi(\rho,j).
\end{equation}
\begingroup
\allowdisplaybreaks
\begin{example}
For any force field $\CoF(\rho)\in\RR^{\fluxS}$ we have
\begin{align*}
  \L_\CoF(\rho,j) &= \inf_{j^+\in\RR^{\fluxS}} \sumsum_{(x,y)\in\fluxS}
      s\big(j^+_{xy}\mid \sqrt{\rho_x Q_{xy}\rho_y Q_{yx}}e^{\CoF_{xy}(\rho)}\big) 
   + s\big(j^+_{xy}- j_{xy} \mid \sqrt{\rho_x Q_{xy}\rho_y Q_{yx}}e^{-\CoF_{xy}(\rho)}\big),\\
  \H_\CoF(\rho,\zeta)&=\sumsum_{(x,y)\in\fluxS} \sqrt{\rho_x Q_{xy}\rho_y Q_{yx}}\Big\lbrack e^{\CoF_{xy}(\rho)}(e^{\zeta_{xy}}-1) + e^{-\CoF_{xy}(\rho)}(e^{-\zeta_{xy}}-1)\Big\rbrack.
\end{align*}
\end{example}
\endgroup

We now define the notion of \emph{generalised Fisher information} which was introduced in Section~\ref{subsec:results}. 
\begin{definition}\label{def:genFI}
Let $\L$ be an L-function on $\Z$. For any $\rho\in \Z$, $\zeta\in T_\rho^*\W$, and $\lambda\in [0,1]$, the generalised Fisher information is 
\begin{equation*}
  \RF^\lambda_\zeta(\rho)=-\H(\rho,-2\lambda \zeta). 
\end{equation*}
\end{definition}
As discussed in Section~\ref{subsec:results}, it is important to choose $\lambda$ and $\zeta$ such that $\RF^\lambda_\zeta$ is non-negative, as this guarantees that the corresponding powers  are non-negative along the zero-cost flux. The following result explores the set of force fields for which this is true (also see Figure~\ref{fig:contour plot}).
\begin{prop}
\label{prop:preFIR}
Let $\L$ be an L-function on $\Z$. For any $\rho\in\Z$ we have
\begin{enumerate}[label=(\roman*)]
\item
The set $\{\zeta\in T_\rho^*\W:  \RF^{\frac12}_\zeta(\rho)\geq0\}$ is convex and includes $\zeta=0$.
\item
In particular, if $\zeta\in T_\rho^*\W$ such that 
\begin{equation}\label{eq:feasible FIR force}
 \RF^{\frac12}_\zeta(\rho) \geq 0,
\end{equation}
then for any $\lambda\in\lbrack0,1\rbrack$
\begin{equation}\label{eq:absFIR}
   \RF^{\lambda}_{\frac12\zeta}(\rho)\geq 0.
\end{equation}
\item For any $\zeta\in T_\rho^*\W$ we have
\begin{equation}\label{eq:Fisher entropy prod}
  \lim\limits_{\lambda \downarrow 0} \tfrac{1}{\lambda} \RF^{\lambda}_\zeta(\rho) = 2\langle \zeta,j^0(\rho) \rangle.
\end{equation}
where $j^0$ is the zero-cost flux for $\L$ (see Definition~\ref{def:L-function}).
\end{enumerate}
\end{prop}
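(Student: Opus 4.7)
The proof will rest almost entirely on three structural facts already baked into Definition~\ref{def:L-function}: namely that $\H(\rho,\cdot)$ is convex and lower semicontinuous, $\H(\rho,0)=-\inf_j\L(\rho,\cdot)=0$, and the uniqueness of the zero-cost flux $j^0(\rho)$. All three claims will follow by direct exploitation of convexity applied to the defining formula $\RF^\lambda_\zeta(\rho)=-\H(\rho,-2\lambda\zeta)$.

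For (i), I would first note that $\RF^{1/2}_0(\rho)=-\H(\rho,0)=0$, so the set contains $0$. Convexity is then immediate from convexity of $\H(\rho,\cdot)$: if $\H(\rho,-\zeta_1)\le 0$ and $\H(\rho,-\zeta_2)\le 0$, then for any $t\in[0,1]$,
\begin{equation*}
  \H\bigl(\rho,-(t\zeta_1+(1-t)\zeta_2)\bigr)\le t\,\H(\rho,-\zeta_1)+(1-t)\,\H(\rho,-\zeta_2)\le 0.
\end{equation*}

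For (ii), the hypothesis reads $\H(\rho,-\zeta)\le 0$. Writing $-\lambda\zeta=\lambda(-\zeta)+(1-\lambda)\cdot 0$ and again invoking convexity together with $\H(\rho,0)=0$ gives
\begin{equation*}
  \H(\rho,-\lambda\zeta)\le \lambda\,\H(\rho,-\zeta)+(1-\lambda)\,\H(\rho,0)=\lambda\,\H(\rho,-\zeta)\le 0,
\end{equation*}
so $\RF^\lambda_{\zeta/2}(\rho)=-\H(\rho,-\lambda\zeta)\ge 0$ for every $\lambda\in[0,1]$.

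For (iii) I would argue that $j^0(\rho)$ is a Gateaux derivative of $\H(\rho,\cdot)$ at $0$. Fenchel--Young gives $\partial_\zeta\H(\rho,0)=\{j\in T_\rho\W:0\in\partial_j\L(\rho,j)\}=\arg\min\L(\rho,\cdot)$, which by Definition~\ref{def:L-function}\ref{def:L-inf} and the uniqueness clause equals the singleton $\{j^0(\rho)\}$. Under the convexity and (local) continuity of $\H(\rho,\cdot)$ at $0$ in force throughout the abstract setup, a singleton subdifferential implies Gateaux differentiability with derivative $j^0(\rho)$. Hence
\begin{equation*}
  \lim_{\lambda\downarrow 0}\frac{\RF^\lambda_\zeta(\rho)}{\lambda}
   = -\lim_{\lambda\downarrow 0}\frac{\H(\rho,-2\lambda\zeta)-\H(\rho,0)}{\lambda}
   = -\langle -2\zeta,\,j^0(\rho)\rangle = 2\langle \zeta,j^0(\rho)\rangle.
\end{equation*}
The only point requiring some care is this last step: passing from uniqueness of the zero-cost flux to an honest Gateaux derivative of $\H(\rho,\cdot)$ at $0$ in the Banach setting of the state-flux triple. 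In all of the concrete examples considered later the spaces are finite-dimensional and $\H(\rho,\cdot)$ is smooth near $0$, so this is automatic; in the abstract setting it can either be added as a standing hypothesis or the statement reinterpreted in terms of directional derivatives, with $\langle\zeta,j^0(\rho)\rangle$ emerging as the unique linear extension by uniqueness of the minimiser.
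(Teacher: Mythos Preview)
Your proof is correct and follows essentially the same route as the paper: all three parts rest on convexity of $\H(\rho,\cdot)$ and $\H(\rho,0)=0$, with (iii) obtained by identifying $\partial\H(\rho,0)=\{j^0(\rho)\}$ and computing the directional derivative. You in fact give more detail than the paper, which simply asserts that the singleton subdifferential equals the Gateaux derivative; your caveat about this step in the abstract Banach setting is well taken and the paper glosses over it.
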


\begin{proof}
\emph{(i)} Since $\L$ is an L-function, $\H(\rho,\cdot)$ is convex with $\H(\rho,0)=0$ and the assertion follows.\\
\emph{(ii)} Using convexity, $-\RF^{\lambda}_{\frac12\zeta}(\rho)=\H(\rho,-\lambda\zeta)=\H(\rho,-\lambda\zeta +(1-\lambda)0)\leq \lambda\H(\rho,-\zeta) + (1-\lambda)\H(\rho,0)\leq0$.\\
\emph{(iii)} By definition of L-functions, $\L(\rho,\cdot)$ has unique minimiser $j^0(\rho)$, which is equivalent to $\partial\H(\rho,0)=\{j^0(\rho)\}=\{d\H(\rho,0)\}$. The claim then follows from the definition of the Gateaux derivative.
\end{proof}
Note that~\cite[Thm.~1.7]{HilderPeletierSharmaTse20} is a special case of this result for the IPFG example. Following~\cite{HilderPeletierSharmaTse20}, we call $\RF^\lambda$ the generalised Fisher information since it generalises the classical notion of Fisher information as the dissipation rate of free energy along the solutions of the zero-cost flux of the L-function. This property follows by using~\eqref{eq:Fisher entropy prod} with appropriate choices for $\zeta$.  In the next section we construct $\zeta$ for which $ \RF^{\frac12}_\zeta(\rho)= 0$ and the above result can be applied.

\subsection{Reversed L-function, symmetric and antisymmetric forces}\label{subsec:abstract time reversal}
Inspired by the notion of time-reversibility in MFT we now introduce the reversed L-function which will then be used to define symmetric and antisymmetric forces.
From now on we assume that $\QP$ is a quasipotential associated to $\L$ in the sense of Definition~\ref{def:I0}.

\begin{definition}\label{def:time-reversedL} Let $\L$ be an L-function on $\Z$. For any $\rho\in\Z$ where $\QP$ is Gateaux differentiable and any $j\in T_\rho\W$, we define the \emph{reversed L-function} as 
\begin{equation*}\label{def:time-reversed L}
  \overleftarrow\L(\rho,j):= \L(\rho,-j) + \langle d\phi_\rho\tp d\QP(\rho),j\rangle.
\end{equation*}
\end{definition}

This notion of the reversed L-function is motivated by the large-deviations of time-reversed Markov processes (see Section~\ref{sec:time-rev} for details). Note that we use the name reversed L-function as opposed to time-reversed L-function since there is no time variable in this abstract framework.

The following result states that $\overleftarrow\L$ is indeed an L-function, and discusses the driving force and  dissipation potential associated to it.
\begin{prop} Let $\L$ be an L-function on $\Z$. For any $\rho\in\Z$ where $\QP$ is Gateaux differentiable we have
\begin{enumerate}[label=(\roman*)]
\item The convex dual of $\overleftarrow\L\!(\rho,\cdot)$ is $\overleftarrow\H\!(\rho,\zeta)=\H\big(\rho,d\phi_\rho\tp d\QP(\rho)-\zeta\big)$.
\item If $\overleftarrow\jmath^0(\rho)$ is the zero-cost flux in the sense that $\overleftarrow\L\!\big(\rho,\overleftarrow\jmath^0(\rho)\big)=0$, then $-\overleftarrow\jmath^0(\rho)\in\partial\H\big(\rho,d\phi_\rho \tp d\QP(\rho)\big)$, and it is unique if $\H(\rho,\cdot)$ is Gateaux differentiable at $d\phi_\rho \tp d\QP(\rho)$.  
Furthermore $\overleftarrow\L$ is an L-function on $\{\rho\in\Z:\QP \text{ is Gateaux differentiable in }\rho\}$ and $\QP$ is a  quasipotential corresponding to $\overleftarrow\L$.

\item Additionally, if $\rho\in\Dom(F)$ (recall Definition~\ref{def:F-disspot}), then the driving force and dissipation potentials corresponding to $\overleftarrow\L$ are given by
\begin{equation*}
  \overleftarrow F\!(\rho)= -F(\rho) - d\phi_\rho\tp d\QP(\rho), \quad
  \overleftarrow\Phi\!(\rho,j)=\Phi(\rho,-j), \quad
  \overleftarrow\Phi\!^*(\rho,\zeta)=\Phi^*(\rho,-\zeta).
\end{equation*}
\end{enumerate}
\end{prop}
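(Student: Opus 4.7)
The plan is to prove the three parts sequentially, with each reducing to a direct computation that hinges on the quasipotential identity $\H\bigl(\rho,d\phi_\rho\tp d\QP(\rho)\bigr)=0$ from Definition~\ref{def:I0} and the fact that $\arrowL$ differs from $j\mapsto\L(\rho,-j)$ only by a continuous linear term.

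For part (i), I would compute the convex dual directly from Definition~\ref{def:time-reversedL}:
\begin{equation*}
\arrowH(\rho,\zeta) = \sup_{j\in T_\rho\W}\bigl\langle\zeta,j\bigr\rangle - \L(\rho,-j) - \bigl\langle d\phi_\rho\tp d\QP(\rho),\,j\bigr\rangle,
\end{equation*}
and perform the change of variables $j'=-j$ to recognise the supremum as $\H\bigl(\rho,d\phi_\rho\tp d\QP(\rho)-\zeta\bigr)$ via~\eqref{eq:L and H}.

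For part (ii), convexity and lower semicontinuity of $\arrowL(\rho,\cdot)$ transfer directly from those of $\L(\rho,\cdot)$ through the affine relation. Non-negativity follows from
\begin{equation*}
\inf_j \arrowL(\rho,j) \;=\; -\arrowH(\rho,0) \;=\; -\H\bigl(\rho,d\phi_\rho\tp d\QP(\rho)\bigr)\;=\;0,
\end{equation*}
the last equality being the quasipotential property. Minimisers of $\arrowL(\rho,\cdot)$ then correspond by Fermat's rule to $\overleftarrow\jmath^0(\rho)\in\partial_\zeta\arrowH(\rho,0)$, and from part (i) together with the chain rule this subdifferential equals $-\partial\H\bigl(\rho,d\phi_\rho\tp d\QP(\rho)\bigr)$, giving the stated characterisation and uniqueness under Gateaux differentiability. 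Finally, to verify that $\QP$ is also a quasipotential of $\arrowL$, I would simply observe $\arrowH\bigl(\rho,d\phi_\rho\tp d\QP(\rho)\bigr)=\H(\rho,0)=0$ using part (i).

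For part (iii), I would differentiate at $j=0$: the chain rule gives
\begin{equation*}
d_j\arrowL(\rho,0) \;=\; -d\L(\rho,0) + d\phi_\rho\tp d\QP(\rho) \;=\; F(\rho)+d\phi_\rho\tp d\QP(\rho),
\end{equation*}
so $\arrowF(\rho)=-F(\rho)-d\phi_\rho\tp d\QP(\rho)$. Substituting this expression together with part (i) into Definition~\ref{def:F-disspot} yields
\begin{equation*}
\overleftarrow{\Phi}^{\!*}(\rho,\zeta)=\arrowH\bigl(\rho,\zeta-\arrowF(\rho)\bigr)-\arrowH\bigl(\rho,-\arrowF(\rho)\bigr)=\H\bigl(\rho,-\zeta-F(\rho)\bigr)-\H\bigl(\rho,-F(\rho)\bigr)=\Phi^*(\rho,-\zeta),
\end{equation*}
and $\overleftarrow{\Phi}(\rho,j)=\Phi(\rho,-j)$ follows by taking convex duals in the second variable.

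The argument is almost entirely mechanical, so there is no single hard step; the only conceptual content, as opposed to bookkeeping, is the use of the quasipotential identity to guarantee $\arrowL\geq 0$. The main point requiring care is the domain tracking: part (i) only needs $d\QP(\rho)$ to be well-defined, part (ii) additionally uses that $\QP$ satisfies~\eqref{eq:invariant I_0} at $\rho$, and part (iii) requires $\rho\in\Dom(F)$ so that $F(\rho)$, $\Phi$ and $\Phi^*$ are defined, with the stated domain $\Dom(\Fsy)$ being exactly the set where $d\phi_\rho\tp d\QP(\rho)$ is meaningful.
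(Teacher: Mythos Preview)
Your proposal is correct and follows essentially the same approach as the paper's proof: both compute the convex dual directly for (i), use Fermat's rule and the quasipotential identity~\eqref{eq:invariant I_0} for (ii), and differentiate $\arrowL$ at $j=0$ then substitute into Definition~\ref{def:F-disspot} for (iii). Your write-up is in fact more explicit than the paper's (which mostly says ``straightforward calculation'' for (i) and omits the verification that $\QP$ is a quasipotential for $\arrowL$), but the underlying argument is identical.
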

\begin{proof}
\emph{(i)} Follows by a straightforward calculation of the convex dual.\\
\emph{(ii)} Using the Fermat's rule $0\in \partial \overleftarrow\L\!(\rho,\overleftarrow\jmath^0(\rho))$, and therefore $\overleftarrow\jmath^0(\rho)\in\partial\overleftarrow\H\!(\rho,0)$. Using Definition~\ref{def:time-reversed L} and since $\L$ is an L-function, $\overleftarrow\L$ is convex, lower semicontinuous and using~\eqref{eq:invariant I_0} satisfies $\inf \overleftarrow\L(\rho,\cdot)=0$. Consequently $\overleftarrow\L$ is an L-function on $\Dom(\Fsy)$ (see~\eqref{def:Domain-Fsym} below) and  $\QP$ is a quasipotential associated to $\overleftarrow\L$.\\
\emph{(iii)} Using \eqref{eq:F} we find
\begin{equation*}
  -\overleftarrow F\!(\rho):=d\overleftarrow\L\!(\rho,0)=-d\L(\rho,0)+d\phi_\rho\tp d\QP(\rho)=F(\rho)+d\phi_\rho\tp d\QP(\rho)
  \end{equation*}
and using~\eqref{eq:Phis} we find  
\begin{align*}
  \overleftarrow\Phi\!^*(\rho,\zeta)&:=\overleftarrow\H\!\big(\rho,\zeta-\overleftarrow F\!(\rho)\big) - \overleftarrow\H\!\big(\rho,-\overleftarrow F\!(\rho)\big) 
    =\H\big(\rho,d\phi_\rho\tp d\QP(\rho)+\overleftarrow F\!(\rho)-\zeta \big) - \H\big(\rho,d\phi_\rho\tp d\QP(\rho)+\overleftarrow F\!(\rho) \big)\\
    &=\H\big(\rho,-F(\rho)-\zeta \big) - \H\big(\rho,-F(\rho) \big) = \Phi^*(\rho,-\zeta).
\end{align*}
Consequently $\overleftarrow\Phi\!(\rho,j)=\Phi(\rho,-j)$.
\end{proof}

Motivated by this result, we decompose the driving force $F$ (recall~\eqref{eq:F})  into a symmetric and antisymmetric part with respect to the reversal, i.e.\ $F^\sym=\frac12(F+\overleftarrow F)$ and $F^\asym=\frac12(F-\overleftarrow F)$. The following result summarises these ideas.
\begin{cor} \label{def:sym-asym-force}
Let $\L$ be an L-function on $\Z$. Define
\begin{align}\label{def:Domain-Fsym}
  \Dom(F^\sym):=\{\rho\in\Z:\QP \text{ is Gateaux differentiable at } \rho\},
  &&\text{and}&&
  \Dom(F^\asym):= \Dom(F)\cap\Dom(F^\sym),
\end{align}
and
\begin{align}
  F^\sym(\rho)&:=-\tfrac12 d\phi_\rho\tp d\QP(\rho)           &&\text{for } \rho\in\Dom(F^\sym),\label{def:Dom-Fsym}\\
  F^\asym(\rho)&:= F(\rho)+\tfrac12 d\phi_\rho\tp d\QP(\rho)  &&\text{for } \rho\in\Dom(F^\sym). \nonumber
\end{align}
Then for any $\rho\in\Dom(F^\asym)$,
\begin{equation}\label{eq:F-revF-decom}
  F(\rho)= F^\sym(\rho)+F^\asym(\rho), \quad \text{and}\quad
  \overleftarrow F(\rho)=F^\sym(\rho) - F^\asym(\rho).
\end{equation}
\end{cor}
Note that while we make use of the reversed L-function to construct the symmetric and antisymmetric force, it does not explicitly appear in their definition. 
In the case of zero antisymmetric force, i.e.\ $\Fasy(\rho)=0$, the driving forces satisfy $F(\rho)=\overleftarrow F(\rho)=F^\sym(\rho)$, which is the setting of dissipative systems (see Section~\ref{subsec:flows}).

\begingroup
\allowdisplaybreaks
\begin{example} We have
\begin{align*}
  \overleftarrow\H\!(\rho,\zeta) &= \sumsum_{(x,y)\in\fluxS} \rho_x \mfrac{\pi_y}{\pi_x}Q_{yx} (e^{ \zeta_{xy}}-1) + \rho_y \mfrac{\pi_x}{\pi_y}Q_{xy} (e^{-\zeta_{xy}}-1),\\
  \overleftarrow\L\!(\rho,j) &= \inf_{j^+\in\RR^{\fluxS}_{\geq0}}
    \sumsum_{(x,y)\in\fluxS} 
    s\big(j^+_{xy}\mid \rho_x \mfrac{\pi_y}{\pi_x} Q_{yx}\big) +  s\big(j^+_{xy}-j_{xy}\mid \rho_y \mfrac{\pi_x}{\pi_y} Q_{xy}\big),\\
  \overleftarrow F\!_{xy}(\rho) &= \mfrac12\log\frac{\rho_x\tfrac{\pi_y}{\pi_x} Q_{yx}}{\rho_y \tfrac{\pi_x}{\pi_y}Q_{xy}}.
\end{align*}
\sloppy{The expression $\frac{\pi_{x}}{\pi_{y}} Q_{xy}$ is the generator matrix for a single time-reversed jump process~\cite[Thm.~3.7.1]{Norris1998}.
Again, beware that a priori $\overleftarrow{\H}$ and $\overleftarrow{\L}$ are only defined on $\Z = \Dom(F)$, but can be continuously extended to $\P(\X)$ in a straightforward manner.}

The symmetric and antisymmetric (with respect to the reversal) components of the driving force are (also see~\cite{KaiserJackZimmer2018})
\begin{equation}\label{IPFG:symForce}
  F^\sym_{xy}(\rho)=\mfrac12\log\frac{\pi_y\rho_x}{\pi_x\rho_y}
\quad\text{ and } \quad
  F^\asym_{xy}(\rho)=\mfrac12\log\frac{\pi_xQ_{xy}}{\pi_y Q_{yx}},
\end{equation}
with $\Dom(F)=\Dom(\Fsy)=\Dom(\Fasy)=\P_+(\X)$.
Note that for reversible Markov chains, i.e.\ those satisfying \emph{detailed balance},  $F^\asym=0$. \end{example}
\endgroup

Recall the generalised Fisher information $\RF^{\lambda}_\zeta$ from Definition~\ref{def:genFI}, and that we are looking for force fields that make this quantity non-negative. The following result shows that $\RF^{\frac12}_\zeta(\rho)=0$ for $\zeta=2F(\rho),2F^\sym(\rho)$, $2F^\asym(\rho)$.  This will be crucial to derive the key decompositions of $\L$ in Section~\ref{sec:abstract FIIR equalities}.

In this result we make use of (analogous to \eqref{eq:symmetric dissipation}),
\begin{equation}
  \Dom_\symdiss(F^\asym) :=
   \Big\{
    \rho\in\Dom(F^\asym):  \ \H\big(\rho,\zeta+d\L(\rho,0)\big) = \H\big(\rho,-\zeta+d\L(\rho,0)\big), \ \forall \zeta\in T^*_{\rho}\W
   \Big\}.
\label{eq:symmDom}
\end{equation}
Note that $\Dom_\symdiss(F^\asym)\subseteq \Dom_\symdiss(F)$ since $\Dom(\Fasy)\subseteq\Dom F$.

%

\begin{lemma}\label{lem:zero H forces}
Let $\L$ be an L-function on $\Z$. We have
\begin{enumerate}[label=(\roman*)]
\item  $\forall \rho\in\Dom(F): \  \RF^{\frac12}_F(\rho) \geq 0$ and $\forall \rho\in\Dom_{\symdiss}(F): \  \RF^{\frac12}_{2F}(\rho)= 0$,
\item $\forall \rho\in\Dom(F^\sym): \   \RF^{\frac12}_{2\Fsy}(\rho) =0$,
\item $\forall \rho\in\Dom_\symdiss(F^\asym): \   \RF^{\frac12}_{2\Fasy}(\rho) =0$.
\end{enumerate}
\end{lemma}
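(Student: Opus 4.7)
The plan is to reduce each claim to evaluating $\H(\rho,\cdot)$ at a specific covector, using the definition $\RF^{1/2}_\zeta(\rho)=-\H(\rho,-\zeta)$. All three parts then follow from three elementary facts already established earlier in the excerpt: (a) $\H(\rho,0)=0$ (equivalent to $\inf\L(\rho,\cdot)=0$), (b) Fermat's rule combined with $-F(\rho)=d\L(\rho,0)\in\partial\L(\rho,0)$, which makes $-F(\rho)$ a minimiser of $\H(\rho,\cdot)$, and (c) the quasipotential identity $\H(\rho,d\phi_\rho\tp d\QP(\rho))=0$ from Definition~\ref{def:I0}.

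For (i), I would simply write $\RF^{1/2}_{F}(\rho)=-\H(\rho,-F(\rho))$ and observe that since $-F(\rho)$ minimises $\H(\rho,\cdot)$ by Fermat's rule, this value is bounded above by $\H(\rho,0)=0$, yielding the inequality. For the equality on $\Dom_\symdiss(F)$, applying Lemma~\ref{lem:symmetric dissipation}\ref{it:symmetric dissipation H} with $\zeta=F(\rho)$ gives $\H(\rho,0)=\H(\rho,-2F(\rho))$, hence $\RF^{1/2}_{2F}(\rho)=-\H(\rho,-2F(\rho))=0$.

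For (ii), the computation is essentially a restatement of the quasipotential property: by Corollary~\ref{def:sym-asym-force}, $-2\Fsy(\rho)=d\phi_\rho\tp d\QP(\rho)$, so
\begin{equation*}
  \RF^{1/2}_{2\Fsy}(\rho) = -\H\bigl(\rho,-2\Fsy(\rho)\bigr) = -\H\bigl(\rho,d\phi_\rho\tp d\QP(\rho)\bigr)=0.
\end{equation*}

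For (iii), the key observation combines the symmetric dissipation identity with the quasipotential identity. Using $2\Fasy(\rho)=2F(\rho)+d\phi_\rho\tp d\QP(\rho)$, I would apply the symmetric dissipation relation of Lemma~\ref{lem:symmetric dissipation}\ref{it:symmetric dissipation H}, namely $\H(\rho,\zeta-F(\rho))=\H(\rho,-\zeta-F(\rho))$, with the specific choice $\zeta=F(\rho)+d\phi_\rho\tp d\QP(\rho)$. This choice makes $-\zeta-F(\rho)=-2\Fasy(\rho)$, while on the other side $\zeta-F(\rho)=d\phi_\rho\tp d\QP(\rho)$, so
\begin{equation*}
  \H\bigl(\rho,-2\Fasy(\rho)\bigr)=\H\bigl(\rho,d\phi_\rho\tp d\QP(\rho)\bigr)=0,
\end{equation*}
giving $\RF^{1/2}_{2\Fasy}(\rho)=0$. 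The only step that required any thought is this last one; verifying that the argument lands in the correct domain $\Dom_\symdiss(\Fasy)\subseteq\Dom_\symdiss(F)\cap\Dom(\Fsy)$ so that both the symmetric-dissipation identity and the differentiability of $\QP$ are available simultaneously is routine but needs to be noted. No other obstacles are expected, as the entire lemma is a matching exercise between the definitions of $F$, $\Fsy$, $\Fasy$ and the two structural identities $\H(\rho,0)=0$ and $\H(\rho,d\phi_\rho\tp d\QP(\rho))=0$.
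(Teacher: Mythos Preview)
Your proposal is correct and follows essentially the same approach as the paper's proof. The only cosmetic difference is your choice of $\zeta$ in parts (i) and (iii) (you take $\zeta=F(\rho)$ and $\zeta=-\overleftarrow F(\rho)$ where the paper takes $\zeta=-F(\rho)$ and $\zeta=\overleftarrow F(\rho)$), but since the symmetric-dissipation identity is invariant under $\zeta\mapsto-\zeta$ these are equivalent; your domain remark in (iii) is a welcome clarification.
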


\begin{proof} \emph{(i)} 
Since $-F$ minimises $\H$, it follows that 
$\H(\rho,-F)=\inf \H(\rho,\cdot)\leq 
\H(\rho,0)=-\inf\L(\rho,\cdot) = 0$, and therefore $\RF^{\frac12}_F(\rho) = -\H(\rho,-F)\geq 0$. If the dissipation potential is symmetric, the choice $\zeta=-F(\rho)$ in Lemma~\ref{lem:symmetric dissipation}\ref{it:symmetric dissipation H} gives $\RF^{\frac12}_{2F}(\rho) = \H\big(\rho,-2F(\rho)\big)=\H(\rho,0)=0$.\\
\emph{(ii)} The claim follows since~\eqref{eq:invariant I_0} holds for all $\rho\in\Dom(F^\sym)$. \\
\emph{(iii)} With $\zeta=\overleftarrow F(\rho)=F^\sym(\rho)-F^\asym(\rho)$ in Lemma~\ref{lem:symmetric dissipation}\ref{it:symmetric dissipation H} we find
$\H\big(\rho,-2F^\asym(\rho)\big) = \H\big(\rho,-2F^\sym(\rho)\big)=0$.
\end{proof}

Figure~\ref{fig:contour plot} is a schematic diagram of force fields $\zeta$ for which $\RF^\lambda_\zeta$ is non-negative. Note that, while there are various possibilities for such $\zeta$, we focus on $\zeta=2F(\rho),2F^\sym(\rho),2F^\asym(\rho)$ since they correspond to the physically relevant powers defined in~\eqref{eq:intro sym power} and~\eqref{eq:intro asym power}.

\begin{figure}[ht]
\centering
\begin{tikzpicture}
\tikzstyle{every node}=[font=\scriptsize];

  \draw[fill=lightgray,rotate around={-15:(0,0)}] (-2.285,-1.283) ellipse [x radius=3, y radius=2]; 


  \draw[rotate around={-15:(0,0)}] (0,0) -- (-4.57,-2.58);
  \filldraw[rotate around={-15:(0,0)}] (-2.285,-1.29) circle (0.05) node[anchor=north]{$F(\rho)$}; 
  \filldraw[rotate around={-15:(0,0)}] (-4.57,-2.58) circle (0.05) node[anchor=east]{$2F(\rho)$};
  
  \draw[rotate around={-15:(0,0)}] (0,0) -- (-4.57,0);
  \filldraw[rotate around={-15:(0,0)}] (-2.285,0) circle (0.05) node[anchor=north east]{$\Fsy(\rho)$};
  \filldraw[rotate around={-15:(0,0)}] (-4.57,0) circle (0.05) node[anchor=east]{$2\Fsy(\rho)$};
  
  \draw[rotate around={-15:(0,0)}] (0,0) -- (0,-2.58);
  \filldraw[rotate around={-15:(0,0)}] (0,-1.29) circle (0.05) node[anchor=east]{$\Fasy(\rho)$};
  \filldraw[rotate around={-15:(0,0)}] (0,-2.58) circle (0.05) node[anchor=east]{$2\Fasy(\rho)$};

  \filldraw(0,0) circle (0.05) node[anchor=south west]{$0$};

  \draw[->](-7,0)--(2,0);
  \draw[->](0,-3.3)--(0,2);

  \draw[dotted, rotate around={-15:(-2.54,-0.65)}](-2.54,-0.65) ellipse [x radius=3, y radius=2, scale=0.58];
  \draw[dotted, rotate around={-15:(-2.54,-0.65)}] (-2.54,-0.65) ellipse [x radius=3, y radius=2, scale=0.82];
  \draw[dotted, rotate around={-15:(-2.54,-0.65)}] (-2.54,-0.65) ellipse [x radius=3, y radius=2, scale=1.16];
  \draw[dotted, rotate around={-15:(-2.54,-0.65))}] (-2.54,-0.65) ellipse [x radius=3, y radius=2, scale=1.3];
\end{tikzpicture}
\caption{Contour lines of a possible concave function $\zeta\mapsto\RF^{\frac12}_{\zeta}(\rho)$ for a fixed $\rho$, where the superlevel set $\{\zeta\in T_\rho^*\W:\RF^{\frac12}_{\zeta}(\rho)\geq0\}$ is depicted in gray. By Definitions~\ref{def:F-disspot} and \ref{def:genFI}, $F(\rho)$ is a maximiser for $\zeta\mapsto \RF^{\frac12}_{\zeta}(\rho)$, and assuming $\rho\in\Dom_\symdiss(\Fasy)$, Lemma~\ref{lem:zero H forces} says that $2F(\rho)$, $2\Fsy(\rho)$ and $2\Fasy(\rho)$ all lie on the $0$-contour line. By the convexity of the superlevel set $\{\RF^{\frac12}_\zeta(\rho)\geq0\}$ (see Proposition~\ref{prop:preFIR}), any convex combination $\zeta$ between $0$ and $2F(\rho)$, $2\Fsy(\rho)$ or $2\Fasy(\rho)$, drawn by the three lines, yield non-negative $\RF^{\frac12}_\zeta(\rho)\geq0$.
}
\label{fig:contour plot}
\end{figure}

\begin{remark}\label{rem:rev-HL-relations}
For all $\rho\in \Dom(F^\asym)$, we can write the reversed function as a tilting in the sense of~\eqref{eq:tilted H}
\begin{equation*}
  \overleftarrow\H(\rho,\zeta)=\H_{-\overleftarrow{F}}(\rho,-\zeta).
\end{equation*}
Using~\eqref{eq:LV}, the corresponding reversed L-function then satisfies
\begin{equation*}
  \overleftarrow\L(\rho,j) = \L_{-\overleftarrow F}(\rho,-j) =\L(\rho,-j) + \H\big(\rho,d\phi\tp_\rho d\QP(\rho)\big) - \langle d\phi\tp_\rho d\QP(\rho),j\rangle
              =\Phi(\rho,-j) + \Phi^*\big(\rho,-\overleftarrow{F}\big) - \langle \overleftarrow{F}, j\rangle, 
\end{equation*}
where we have used $F+\overleftarrow F = - d\phi_\rho\tp d\QP(\rho)$.
\end{remark}

\subsection{Generalised orthogonality}\label{sec:Gen-ortho}

Before we continue with deriving the main decompositions~\eqref{eq:intro-L-decom} of the L-function, we elaborate further on the decomposition of the driving force $F$ into the symmetric force $\Fsy$ and antisymmetric force $\Fasy$, and investigate the natural question whether these forces are orthogonal in some sense. It turns out that they are indeed orthogonal in a generalised sense, and using this notion of orthogonality we can already derive decompositions~\eqref{eq:intro-L-decom} for $\lambda=\frac12$.
As discussed in the introduction, in MFT  the dissipation potentials are often squares of appropriate Hilbert norms $\|\cdot\|_{\rho}$, and in that setting one can write
\begin{align*}
  \Phi^*\big(\rho,\zeta^1+\zeta^2\big)&:=\tfrac12\lVert \zeta^1+\zeta^2\rVert_{\rho}^2 =\tfrac12\lVert \zeta^1\rVert_{\rho}^2 + \langle \zeta^1,\zeta^2\rangle_{\rho} + \tfrac12\lVert \zeta^2\rVert_{\rho}^2\\
&= \Phi^*\big(\rho,\zeta^1\big) + \langle \zeta^1,\zeta^2\rangle_{\rho} + \Phi^*\big(\rho,\zeta^2\big),
\end{align*}
where $\langle \cdot,\cdot\rangle_{\rho}$ is the inner product induced by the norm. Typically $\Fsy$ and $\Fasy$ are orthogonal in the sense that $\langle \Fsy,\Fasy\rangle_{\rho}=0$. We reiterate these ideas in Section~\ref{sec:LatticeGas} which deals with the classical MFT setting of lattice gases. However this orthogonality relation is specific to the quadratic setting. A generalised  notion of orthogonality was introduced in~\cite{KaiserJackZimmer2018} for non-quadratic dissipation potential~\eqref{eq:indep Phi*} corresponding to independent Markov chains which have $\cosh$-type structure (see Example~\ref{ex:IPFG-diss-pot}) and this principle was further generalised to chemical reaction networks in~\cite{RengerZimmer2021} (see Section~\ref{subsec:reacting particle system} for details). Based on these results, we now provide a notion of generalised orthogonality which applies to arbitrary dissipation potentials arising within the abstract framework of this section (and does not require any specific structure).

\begin{definition}\label{def:genOrth}
 For any $\rho\in\Dom(F)$ and $\zeta^2 \in T^*_{\rho}\W$, define the \emph{modified dissipation potential} $\Phi^*_{\zeta^2}:T^*_{\rho}\W\rightarrow\mathbb R\cup\{\infty\}$ and the \emph{generalised orthogonality pairing} $\theta_\rho:T^*_{\rho}\W\times T^*_{\rho}\W\rightarrow \mathbb R\cup\{\infty\}$ as
\begin{align*}
  \Phi^*_{\zeta^2}(\rho,\zeta^1) &:= \tfrac12\left[\H\big(\rho,\zeta^1+\zeta^2-F(\rho)\big)+\H\big(\rho,-\zeta^1+\zeta^2-F(\rho)\big)\right]-\H\big(\rho,\zeta^2-F(\rho)\big),\\
    & \ = \tfrac12 \left[\Phi^*(\rho,\zeta^1+\zeta^2) + \Phi^*(\rho,-\zeta^1+\zeta^2)\right] -\Phi^*(\rho,\zeta^2), \\
  \theta_\rho(\zeta^1,\zeta^2) &:= \tfrac12 \left[\H\big(\rho,\zeta^1+\zeta^2-F(\rho)\big) -\H\big(\rho,-\zeta^1+\zeta^2-F(\rho)\big)\right]\\
    & \ = \tfrac12\left[\Phi^*(\rho,\zeta^1+\zeta^2)-\Phi^*(\rho,-\zeta^1+\zeta^2)\right],
\end{align*}
where we have used~\eqref{eq:Phis} to arrive at the  equalities.  
\end{definition}


The following result collects the properties of $\Phi_{\zeta^2}$ and $\theta_\rho$ clarifying the notion of orthogonality in the abstract framework. Recall the definition of $\Dom_\symdiss(F^\asym)$ from \eqref{eq:symmDom}.
\begin{prop}\label{prop:ortho-rel}
Let $\L$ be an L-function on $\Z$. For any $\rho\in\Dom(F)$, $\Phi_{\zeta^2}^*(\rho,\cdot)$ is convex, lower semicontinuous and $\inf \Phi^*_{\zeta^2}(\rho,\cdot)=0=\Phi_{\zeta^2}^*(\rho,0)$. Furthermore, for any $\zeta^1,\zeta^2\in T_\rho^*\W$, the dissipation potential $\Phi^*$ admits the decomposition
\begin{equation*}
  \Phi^*(\rho,\zeta^1+\zeta^2) = \Phi^*(\rho,\zeta^1) + \theta_\rho(\zeta^2,\zeta^1) + \Phi^*_{\zeta^1}(\rho,\zeta^2) = \Phi^*(\rho,\zeta^2) + \theta_\rho(\zeta^1,\zeta^2) + \Phi^*_{\zeta^2}(\rho,\zeta^1).
\end{equation*}
Moreover the generalised orthogonality pairing satisfies
\begin{align*}
  \theta_\rho\big(F^\sym(\rho),F^\asym(\rho)\big) &=0   &&\text{for all }\rho\in\Dom(F^\asym),\\
  \theta_\rho\big(F^\asym(\rho),F^\sym(\rho)\big) &=0   &&\text{for all }\rho\in \Dom_\symdiss(F^\asym),
\end{align*}
and therefore we have 
\begin{equation}\label{eq:orth split Fsym Fasym}
\begin{aligned}
  \Phi^*\big(\rho,F(\rho)\big) &= \Phi^*\big(\rho,F^\asym(\rho)\big) + \Phi^*_{F^\asym(\rho)}\big(F^\sym(\rho)\big)   &&\text{for all }\rho\in\Dom(F^\asym),\\
  \Phi^*\big(\rho,F(\rho)\big) &= \Phi^*\big(\rho,F^\sym(\rho)\big) + \Phi^*_{F^\sym(\rho)}\big(F^\asym(\rho)\big)    &&\text{for all }\rho\in \Dom_\symdiss(F^\asym).
\end{aligned}
\end{equation}
\end{prop}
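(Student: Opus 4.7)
The plan is to treat the four claims sequentially, since the structural work is done once the formulas in Definition~\ref{def:genOrth} are unfolded; most of the proof is bookkeeping, and there is one genuinely asymmetric place where Lemma~\ref{lem:zero H forces}(iii) must enter.

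First, for the regularity and non-negativity of $\Phi^*_{\zeta^2}(\rho,\cdot)$: convexity and lower semicontinuity are inherited from $\Phi^*(\rho,\cdot)$, because Definition~\ref{def:genOrth} writes $\Phi^*_{\zeta^2}(\rho,\zeta^1)$ as a convex combination of the two affine pre-compositions $\zeta^1 \mapsto \Phi^*(\rho,\pm \zeta^1+\zeta^2)$, minus the constant $\Phi^*(\rho,\zeta^2)$. Setting $\zeta^1=0$ gives $\Phi^*_{\zeta^2}(\rho,0)=0$ immediately. Non-negativity, and hence $\inf \Phi^*_{\zeta^2}(\rho,\cdot)=0$, is precisely Jensen's inequality applied to the midpoint identity $\zeta^2=\tfrac12(\zeta^1+\zeta^2)+\tfrac12(-\zeta^1+\zeta^2)$ and the convex function $\Phi^*(\rho,\cdot)$.

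Second, the two decompositions of $\Phi^*(\rho,\zeta^1+\zeta^2)$ are proved by direct substitution. Inserting the explicit formulas for $\theta_\rho$ and $\Phi^*_{\zeta^i}$ into the right-hand side, the $\pm\tfrac12\Phi^*(\rho,-\zeta^1+\zeta^2)$ (respectively $\pm\tfrac12\Phi^*(\rho,-\zeta^2+\zeta^1)$) contributions cancel, the $\Phi^*(\rho,\zeta^i)$ terms cancel, and what remains is $\Phi^*(\rho,\zeta^1+\zeta^2)$. No non-trivial input is needed here beyond Definition~\ref{def:genOrth}.

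Third, for the orthogonality relations I will use $F=\Fsy+\Fasy$ and $\overleftarrow{F}=\Fsy-\Fasy$ from~\eqref{eq:F-revF-decom}, together with the identity $F+\overleftarrow{F}=2\Fsy=-d\phi_\rho\tp d\QP(\rho)$. Expanding $\theta_\rho(\Fsy,\Fasy)$ by Definition~\ref{def:genOrth} yields $\tfrac12[\Phi^*(\rho,F)-\Phi^*(\rho,-\overleftarrow{F})]$; using~\eqref{eq:Phis}, the first term equals $-\H(\rho,-F)$ since $\H(\rho,0)=0$, and the second equals $-\H(\rho,-F)$ because $-\overleftarrow{F}-F=d\phi_\rho\tp d\QP(\rho)$ and the quasipotential equation~\eqref{eq:invariant I_0} gives $\H(\rho,d\phi_\rho\tp d\QP(\rho))=0$. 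This only requires $\rho\in\Dom(\Fasy)$. The companion pairing $\theta_\rho(\Fasy,\Fsy)=\tfrac12[\Phi^*(\rho,F)-\Phi^*(\rho,\overleftarrow{F})]$ however reduces, via~\eqref{eq:Phis}, to showing that $\H(\rho,-2\Fasy)=0$, and this is \emph{exactly} the statement of Lemma~\ref{lem:zero H forces}(iii), which needs the stronger hypothesis $\rho\in\Dom_\symdiss(\Fasy)$. This asymmetry between the two hypotheses is the only delicate point in the proof.

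Finally, the two orthogonal splits in~\eqref{eq:orth split Fsym Fasym} follow from Part~2 applied with $(\zeta^1,\zeta^2)=(\Fsy,\Fasy)$: the second equality of Part~2 eliminates $\theta_\rho(\Fsy,\Fasy)$ under the hypothesis $\rho\in\Dom(\Fasy)$, producing the first split; the first equality eliminates $\theta_\rho(\Fasy,\Fsy)$ under the stronger hypothesis $\rho\in\Dom_\symdiss(\Fasy)$, producing the second split.
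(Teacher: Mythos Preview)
Your proof is correct and follows essentially the same approach as the paper. The only cosmetic difference is that for the two orthogonality relations you work through the $\Phi^*$-form of $\theta_\rho$ and then invoke~\eqref{eq:Phis} to reduce to statements about $\H$, whereas the paper uses the $\H$-form of $\theta_\rho$ from Definition~\ref{def:genOrth} directly; both routes land on $\H(\rho,-2\Fsy)=0$ (the quasipotential equation, i.e.\ Lemma~\ref{lem:zero H forces}(ii)) and $\H(\rho,-2\Fasy)=0$ (Lemma~\ref{lem:zero H forces}(iii)) as the respective key inputs, and your identification of the asymmetry in the hypotheses is exactly right.
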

\begin{proof}
The convexity, lower semicontinuity of $\Phi_{\zeta^2}^*$ follows from the convexity, lower semicontinuity of $\Phi^*$ and $\Phi_{\zeta^2}^*(\rho,0)=0$ follows from the definition. Using convexity of $\Phi^*$ we find
\begin{equation*}
\Phi^*_{\zeta^2}(\rho,\zeta^1)\geq \Phi^*\left(\rho,\tfrac12(\zeta^1+\zeta^2)+\tfrac12(-\zeta^1+\zeta^2)\right)-\Phi^*(\rho,\zeta^2)=0,
\end{equation*}
and therefore $\inf \Phi^*_{\zeta^2}(\rho,\cdot)=0$.
 The two decompositions follow immediately by adding $\Phi^*_{\zeta^2}$ and $\theta_\rho$. Using Lemma~\ref{lem:zero H forces} we find
\begin{align*}
  2\theta_\rho\big(F^\sym(\rho),F^\asym(\rho)\big)&=\H\big(\rho,F^\sym(\rho)+F^\asym(\rho)-F(\rho)\big) - \H\big(\rho,-F^\sym(\rho)+F^\asym(\rho)-F(\rho)\big)\\
    &=\H(\rho,0) - \H\big(\rho-2F^\sym(\rho)\big)=0,\\
  2\theta_\rho\big(F^\asym(\rho),F^\sym(\rho)\big)&=\H\big(\rho,F^\sym(\rho)+F^\asym(\rho)-F(\rho)\big) - \H\big(\rho,F^\sym(\rho)-F^\asym(\rho)-F(\rho)\big)\\
    &=\H(\rho,0) - \H\big(\rho-2F^\asym(\rho)\big)=0.
\end{align*}
where the second decomposition additionally requires that $\rho\in \Dom_\symdiss(F^\asym)$. 
\end{proof}

%
From the general decomposition~\eqref{eq:L-decom-F} and the generalised orthogonality result above, we can already provide two distinct decompositions of $\L$, as derived in \cite[Cor.~4.3]{RengerZimmer2021} for the case of chemical reactions.
\begin{corollary}\label{cor:decomps from orth}
Let $\L$ be an L-function on $\Z$. Then for all $(\rho,j)\in T_{\Dom(\Fasy)}\W$,
\begin{align*}
  \L(\rho,j)&=\Phi(\rho,j) + \Phi^*\big(\rho,\Fasy(\rho)\big) - \langle\Fasy(\rho),j\rangle + \Phi^*_{\Fasy}\big(\rho,\Fsy(\rho)\big) - \langle\Fsy(\rho),j\rangle,
\intertext{and for all $(\rho,j)\in T_{\Dom_\symdiss(\Fasy)}\W$,}
  \L(\rho,j)&=\Phi(\rho,j) + \Phi^*\big(\rho,\Fsy(\rho)\big) - \langle\Fsy(\rho),j\rangle + \Phi^*_{\Fsy}\big(\rho,\Fasy(\rho)\big) - \langle\Fasy(\rho),j\rangle.
\end{align*}
\end{corollary}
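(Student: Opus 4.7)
The plan is to combine three ingredients that are already established in the excerpt: the basic decomposition of $\L$ in terms of the full driving force (equation~\eqref{eq:L-decom-F}), the additive splitting $F=\Fsy+\Fasy$ from Corollary~\ref{def:sym-asym-force}, and the two orthogonal decompositions of $\Phi^*(\rho,F(\rho))$ provided by Proposition~\ref{prop:ortho-rel}, equation~\eqref{eq:orth split Fsym Fasym}.

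More precisely, I would begin by observing the domain inclusions $\Dom_\symdiss(\Fasy)\subseteq\Dom(\Fasy)\subseteq\Dom(F)$, so that the fundamental identity
\begin{equation*}
  \L(\rho,j)=\Phi(\rho,j)+\Phi^*\bigl(\rho,F(\rho)\bigr)-\langle F(\rho),j\rangle
\end{equation*}
from Theorem~\ref{th:MPR force}, specialised in Definition~\ref{def:F-disspot}, applies on both sets appearing in the corollary. Next, I would exploit the bilinearity of the duality pairing together with $F(\rho)=\Fsy(\rho)+\Fasy(\rho)$ to write $\langle F(\rho),j\rangle=\langle\Fsy(\rho),j\rangle+\langle\Fasy(\rho),j\rangle$.

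It remains to split $\Phi^*(\rho,F(\rho))$. For $\rho\in\Dom(\Fasy)$ the first line of \eqref{eq:orth split Fsym Fasym} gives
\begin{equation*}
  \Phi^*\bigl(\rho,F(\rho)\bigr)=\Phi^*\bigl(\rho,\Fasy(\rho)\bigr)+\Phi^*_{\Fasy}\bigl(\rho,\Fsy(\rho)\bigr),
\end{equation*}
and substituting this into the basic identity immediately produces the first claimed decomposition. For $\rho\in\Dom_\symdiss(\Fasy)$ the second line of \eqref{eq:orth split Fsym Fasym} applies, interchanging the roles of $\Fsy$ and $\Fasy$, and yields the second claimed decomposition in exactly the same way.

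There is no real obstacle here: the corollary is essentially a bookkeeping consequence of Proposition~\ref{prop:ortho-rel}, whose proof already encapsulates the vanishing of the orthogonality pairing $\theta_\rho(\Fsy,\Fasy)$ and $\theta_\rho(\Fasy,\Fsy)$ via Lemma~\ref{lem:zero H forces}. The only point requiring mild care is checking that one uses the correct domain for each decomposition, since the symmetric-dissipation condition is needed only for the second identity (to exchange the roles of $\Fsy$ and $\Fasy$ in the orthogonality pairing).
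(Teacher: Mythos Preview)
Your proposal is correct and matches the paper's approach exactly: the paper presents this corollary as an immediate consequence of the basic decomposition~\eqref{eq:L-decom-F} combined with the orthogonal splittings~\eqref{eq:orth split Fsym Fasym} from Proposition~\ref{prop:ortho-rel}, which is precisely what you do.
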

In both decompositions, we may interpret the first three terms as an L-function with a modified force, the fourth term as a Fisher information, and the last term as a power (see Remark~\ref{rem:Fisher-Mod-DissPot} for details). 

\begin{example} Using Definition~\ref{def:genOrth} we have (see also~\cite{KaiserJackZimmer2018}) 
\begin{align*}
  \Phi^*_{\zeta^2}(\rho,\zeta^1) &= 2\sumsum_{(x,y)\in\fluxS} \sqrt{\rho_xQ_{xy}\rho_yQ_{yx}}\cosh(\zeta^2_{xy})\big(\cosh(\zeta^1_{xy})-1\big),\\
  \theta_\rho(\zeta^1,\zeta^2) &= 2\sumsum_{(x,y)\in\fluxS} \sqrt{\rho_xQ_{xy}\rho_yQ_{yx}}\sinh(\zeta^2_{xy})\sinh(\zeta^1_{xy}).
\end{align*}
\end{example}

\subsection{Decomposing the L-function}\label{sec:abstract FIIR equalities}
We now present decompositions of the L-function, which are the main results of the abstract theory presented so far. Using $\CoF=F,\Fsy,\Fasy$ in~\eqref{eq:LV} and encoding convex combinations via the parameter $\lambda$, we arrive at three distinct decompositions of $\L$; this corresponds to all the points on the three lines depicted in Figure~\ref{fig:contour plot}.

\begin{theorem}\label{th:FIIRs}
Let $\L$ be an L-function on $\Z$. It admits the following decompositions
\begin{enumerate}[label=(\roman*)]
\item For any $\rho\in\Dom_{\symdiss}(F)$, $j\in T_\rho\W$ and $\lambda\in[0,1]$,
\begin{equation}\label{eq:FIIR F}
  \L(\rho,j)=\L_{(1-2\lambda)F}(\rho,j) + \RF^\lambda_F(\rho)  - 2\lambda \langle F(\rho),j\rangle  \ \text{ with } \RF^\lambda_F(\rho)\geq0.
\end{equation}
\item For any $\rho\in\Dom(F^\asym)$, $j\in T_\rho\W$ and $\lambda\in[0,1]$,
\begin{equation}\label{eq:FIIR Fsym}
  \L(\rho,j)=\L_{F-2\lambda F^\sym}(\rho,j) + \RF^\lambda_{\Fsy}(\rho)  - 2\lambda \langle F^\sym(\rho),j\rangle 
            \  \text{ with }\RF^\lambda_{\Fsy}(\rho)\geq0.
\end{equation}
\item For any $\rho\in\Dom_\symdiss(F^\asym)$, $j\in T_\rho\W$ and $\lambda\in[0,1]$,
\begin{equation}\label{eq:FIIR Fasym}
  \L(\rho,j)=\L_{F-2\lambda F^\asym}(\rho,j) + \RF^\lambda_{\Fasy}(\rho) - 2\lambda \langle F^\asym(\rho),j\rangle 
         \  \text{with }\RF^\lambda_{\Fasy}(\rho)\geq0.
\end{equation}
\end{enumerate}
\end{theorem}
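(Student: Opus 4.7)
My plan is to treat all three decompositions uniformly as instances of the tilting identity in Lemma~\ref{lem:preFIR}. Rearranging that lemma gives the master relation
\begin{equation*}
  \L(\rho,j) = \L_\CoF(\rho,j) - \H\bigl(\rho,\CoF(\rho)-F(\rho)\bigr) + \langle \CoF(\rho)-F(\rho),\,j\rangle,
\end{equation*}
valid for every admissible force field $\CoF$. I will then substitute the three specific choices $\CoF = (1-2\lambda)F$, $\CoF = F - 2\lambda \Fsy$ and $\CoF = F - 2\lambda \Fasy$ respectively. In each case $\CoF - F$ is a scalar multiple of one of the three forces, so that by Definition~\ref{def:genFI} the $\H$-term becomes precisely $-\RF^\lambda_F$, $-\RF^\lambda_\Fsy$ or $-\RF^\lambda_\Fasy$, and the inner product term becomes the displayed power contribution. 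This immediately yields the equality parts of \eqref{eq:FIIR F}, \eqref{eq:FIIR Fsym} and \eqref{eq:FIIR Fasym}.

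For the non-negativity of the three Fisher informations I will apply Proposition~\ref{prop:preFIR}(ii), which upgrades the base case $\RF^{1/2}_\zeta(\rho)\ge 0$ to $\RF^\lambda_{\zeta/2}(\rho)\ge 0$ for all $\lambda\in[0,1]$. The base cases are exactly what Lemma~\ref{lem:zero H forces} provides: on $\Dom_\symdiss(F)$ we have $\RF^{1/2}_{2F}(\rho)=0$, giving (i); on $\Dom(\Fsy)$ we have $\RF^{1/2}_{2\Fsy}(\rho)=0$, giving (ii); and on $\Dom_\symdiss(\Fasy)$ we have $\RF^{1/2}_{2\Fasy}(\rho)=0$, giving (iii). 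After rescaling by the factor of two inside $\zeta$, the inequality $\RF^\lambda_F,\RF^\lambda_\Fsy,\RF^\lambda_\Fasy\ge 0$ follows in each case.

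The only real bookkeeping is checking that every object in the three decompositions is defined on the stated domain. The tilted L-functions $\L_\CoF$ require $\rho \in \Dom(F)\cap\Dom(\CoF)$ by Definition~\ref{def:tilt}, which for the three choices above reduces to $\Dom(F)$, $\Dom(F)\cap\Dom(\Fsy)=\Dom(\Fasy)$, and $\Dom(\Fasy)$ respectively, all compatible with the hypotheses (recalling that $\Dom_\symdiss(F^\asym)\subseteq\Dom(\Fasy)$ and $\Dom_\symdiss(F)\subseteq\Dom(F)$). The domain hypotheses for non-negativity are the tightest; this is the only step with any subtlety, since the symmetry of the dissipation potential is what forces $\RF^{1/2}_{2F}$ and $\RF^{1/2}_{2\Fasy}$ down to zero (as opposed to merely non-negative) and thereby enables the convex-combination argument in Proposition~\ref{prop:preFIR}(ii). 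Thus the entire proof is two substitutions followed by an invocation of monotonicity of $\RF^\lambda_\zeta$ in $\lambda$, with no genuine obstacle beyond tracking domains.
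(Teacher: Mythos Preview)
Your proposal is correct and follows exactly the same approach as the paper's own proof, which simply cites Lemma~\ref{lem:preFIR} for the decompositions and Proposition~\ref{prop:preFIR} together with Lemma~\ref{lem:zero H forces} for the non-negativity. You have in fact spelled out more detail than the paper does (the paper's proof is two sentences), including the domain bookkeeping; the only imprecision is the phrase ``monotonicity of $\RF^\lambda_\zeta$ in $\lambda$'' in your final sentence, which is not what Proposition~\ref{prop:preFIR}(ii) asserts --- it is a convexity interpolation between $\lambda=0$ and $\lambda=1$, not monotonicity --- but this does not affect the argument.
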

\begin{proof}
The decompositions follow directly from  Lemma~\ref{lem:preFIR}. The non-negativity of the Fisher informations follows from Proposition~\ref{prop:preFIR} and Lemma~\ref{lem:zero H forces}.
\end{proof}
\begin{remark}
The decomposition~\eqref{eq:FIIR F} holds for $\rho\in\Dom_\symdiss(F)$. Since by Lemma~\ref{lem:zero H forces}(i), $\RF_F^{\frac12}(\rho)\geq 0$ for any $\rho\in \Dom(F)$, we also have the following decomposition for any $\rho\in\Dom(F)$, $j\in T_\rho\W$ and $\lambda\in[0,\tfrac12]$
\begin{equation*}
  \L(\rho,j)=\L_{(1-\lambda)F}(\rho,j) + \RF^\lambda_{F}(\rho)  - \lambda \langle F(\rho),j\rangle  \ \text{ with } \RF^\lambda_{ F}(\rho)\geq0.
\end{equation*}
The non-negativity of $\RF^\lambda_{F}(\rho)$ follows by repeating the proof of Proposition~\ref{prop:preFIR}(ii)  for $\lambda\in[0,\tfrac12]$.
\end{remark}

The following result exhibits the significance of the choices $\lambda=\tfrac12,1$, and that the decompositions for other values can be seen as generalisations. 
\begin{cor}[$\lambda=\tfrac12,1$] 
With the choice $\lambda=\tfrac12$, the decompositions~\eqref{eq:FIIR F},~\eqref{eq:FIIR Fsym} and~\eqref{eq:FIIR Fasym}  respectively become  
\begin{align}
\L(\rho,j)&=\L_0(\rho,j) +\RF^\frac12_F(\rho) -  \langle F(\rho),j\rangle =\Phi(\rho,j) + \Phi^*\big(\rho,F(\rho)\big) - \langle F(\rho),j\rangle,\label{eq:FIIR F 1/2}\\
\L(\rho,j)&=\L_{F^\asym}(\rho,j) +\RF^\frac12_{\Fsy}(\rho) - \langle F^\sym(\rho),j\rangle\label{eq:FIIR Fsym 1/2},\\
\  \L(\rho,j)&=\L_{F^\sym}(\rho,j) +\RF^\frac12_{\Fasy}(\rho) - \langle F^\asym(\rho),j\rangle\label{eq:FIIR Fasym 1/2}.
\end{align}
With the choice $\lambda=1$, the decompositions~\eqref{eq:FIIR F},~\eqref{eq:FIIR Fsym} and~\eqref{eq:FIIR Fasym} respectively become  
\begin{align}
\L(\rho,j)&=\L_{-F}(\rho,j) -  2\langle F(\rho),j\rangle,\label{eq:FIIR-full-lam1}\\  
\L(\rho,j)&=\L_{-\overleftarrow F}(\rho,j) - 2\langle F^\sym(\rho),j\rangle =\overleftarrow\L(\rho,-j) - 2\langle F^\sym(\rho),j\rangle, \nonumber\\
\L(\rho,j)&=\L_{\overleftarrow F}(\rho,j) - 2\langle F^\asym(\rho),j\rangle, \nonumber
\end{align}
where $F,\overleftarrow F$ satisfy the relations~\eqref{eq:F-revF-decom}.  
\end{cor}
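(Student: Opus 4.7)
The plan is to substitute the values $\lambda=\tfrac12$ and $\lambda=1$ directly into the three decompositions of Theorem~\ref{th:FIIRs} and then simplify each resulting expression using the relations $F = \Fsy + \Fasy$, $\overleftarrow F = \Fsy - \Fasy$ from Corollary~\ref{def:sym-asym-force} and the vanishing Fisher information identities from Lemma~\ref{lem:zero H forces}.

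For $\lambda=\tfrac12$, first I would treat \eqref{eq:FIIR F} by noting that $(1-2\lambda)F = 0$, so the tilted L-function reduces via \eqref{eq:L_LF} to $\L_0 = \Phi$, while $\RF^{1/2}_F(\rho) = -\H(\rho, -F(\rho))$ equals $\Phi^*(\rho, F(\rho))$ by the definition \eqref{eq:Phis} together with $\H(\rho,0)=0$; this yields the two expressions in~\eqref{eq:FIIR F 1/2}. For \eqref{eq:FIIR Fsym} and \eqref{eq:FIIR Fasym}, the respective modified forces become $F - \Fsy = \Fasy$ and $F - \Fasy = \Fsy$, giving \eqref{eq:FIIR Fsym 1/2} and \eqref{eq:FIIR Fasym 1/2} immediately.

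For $\lambda=1$, the point in each case is that the Fisher-information term drops out. For the first identity, $(1-2)F = -F$, and since $\rho\in\Dom_\symdiss(F)$, Lemma~\ref{lem:zero H forces}(i) gives $\RF^{1}_F(\rho) = -\H(\rho,-2F(\rho)) = 0$, producing \eqref{eq:FIIR-full-lam1}. For the second, the modified force is $F - 2\Fsy = \Fasy - \Fsy = -\overleftarrow F$ by \eqref{eq:F-revF-decom}, and Lemma~\ref{lem:zero H forces}(ii) yields $\RF^1_{\Fsy}(\rho)=0$; Remark~\ref{rem:rev-HL-relations} then identifies $\L_{-\overleftarrow F}(\rho,j)$ with $\overleftarrow\L(\rho,-j)$, giving the stated form. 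For the third, analogously $F - 2\Fasy = \overleftarrow F$, and Lemma~\ref{lem:zero H forces}(iii) ensures $\RF^1_{\Fasy}(\rho)=0$, which completes the proof.

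There is no substantive obstacle here; the proof is essentially an algebraic verification, and the main thing to be careful about is bookkeeping the correct domain conditions in each line (in particular that Lemma~\ref{lem:zero H forces}(i) requires $\rho\in\Dom_\symdiss(F)$ and Lemma~\ref{lem:zero H forces}(iii) requires $\rho\in\Dom_\symdiss(\Fasy)$, consistent with the domains in Theorem~\ref{th:FIIRs}), together with invoking Remark~\ref{rem:rev-HL-relations} to rewrite $\L_{-\overleftarrow F}$ as a reversed L-function.
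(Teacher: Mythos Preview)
Your proposal is correct and matches the paper's own approach: the paper treats this corollary as an immediate specialisation of Theorem~\ref{th:FIIRs}, explaining only that the second equality in~\eqref{eq:FIIR F 1/2} follows from~\eqref{eq:L_LF} and~\eqref{eq:Phis} together with $\H(\rho,0)=0$, and that the rewriting $\L_{-\overleftarrow F}(\rho,j)=\overleftarrow\L(\rho,-j)$ comes from Remark~\ref{rem:rev-HL-relations}. Your write-up is in fact more detailed than the paper's, correctly tracking the domain conditions and the use of Lemma~\ref{lem:zero H forces} to kill the Fisher-information terms at $\lambda=1$.
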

The second equality in~\eqref{eq:FIIR F 1/2} follows from~\eqref{eq:L_LF} and~\eqref{eq:Phis} where we use $\H(\rho,0)=0$ and the Fisher-information term vanishes by Lemma~\ref{lem:zero H forces}. 
A careful analysis of the zero-cost flux for $\L_{\Fsy}$ and $\L_{\Fasy}$ will be presented in Subsection~\ref{subsec:flows} and Section~\ref{sec:IPFG antisym flow}.

\begin{remark}\label{rem:Fisher-Mod-DissPot}
Using \eqref{eq:L-decom-F}, we see that \eqref{eq:FIIR Fsym 1/2} and \eqref{eq:FIIR Fasym 1/2} are the same decompositions as those in Corollary~\ref{cor:decomps from orth} which use generalised orthogonality, and that the two corresponding Fisher informations are in fact modified dissipation potentials (as introduced in Section~\ref{sec:Gen-ortho})
\begin{equation*}
 \RF^\frac12_{\Fsy}(\rho) =  \Phi^*_{F^\asym}\big(\rho,F^\sym(\rho)\big), \   \ 
 \RF^\frac12_{\Fasy}(\rho) =  \Phi^*_{F^\sym}\big(\rho,F^\asym(\rho)\big).
\end{equation*}
This also explains the non-negativity of these Fisher informations for $\lambda=\frac12$.
\end{remark}

\begingroup
\allowdisplaybreaks
\begin{example} \label{ex:IPFG-FIIR}
Decompositions~\eqref{eq:FIIR F},~\eqref{eq:FIIR Fsym} and~\eqref{eq:FIIR Fasym} hold with the tilted L-functions 
\begin{align*}
  \L_{(1-2\lambda)F}(\rho,j) &= \inf_{j^+\in \RR^{\fluxS}_{\geq0}} \sumsum_{(x,y)\in\fluxS} s\big(j^+_{xy} \mid (\rho_x Q_{xy})^{1-\lambda}(\rho_y Q_{yx})^\lambda\big) \\
&\hspace{3cm} + s\big(j^+_{xy} - j_{xy} \mid (\rho_y Q_{yx})^{1-\lambda}(\rho_x Q_{xy})^{\lambda}\big),\\
 \L_{F-2\lambda F^\sym}(\rho,j) &= \inf_{j^+\in \RR^{\fluxS}_{\geq0}} \sumsum_{(x,y)\in\fluxS} s\big(j^+_{xy} \mid (\rho_x Q_{xy})^{1-\lambda}(\rho_y\tfrac{\pi_x}{\pi_y}Q_{xy})^\lambda \big) \\
&\hspace{3cm}+ s\big(j^+_{xy} - j_{xy} \mid (\rho_y Q_{yx})^{1-\lambda}(\rho_x\tfrac{\pi_y}{\pi_x}Q_{yx})^\lambda  \big),\\
  \L_{F-2\lambda F^\asym}(\rho,j) &= \inf_{j^+\in \RR^{\fluxS}_{\geq0}} \sumsum_{(x,y)\in\fluxS} s\big(j^+_{xy} \mid (\rho_x Q_{xy})^{1-\lambda}(\rho_x\tfrac{\pi_y}{\pi_x}Q_{yx})^\lambda\big) \\
&\hspace{3cm} + s\big(j^+_{xy} - j_{xy} \mid (\rho_y Q_{yx})^{1-\lambda}(\rho_y\tfrac{\pi_x}{\pi_y}Q_{xy})^\lambda\big),
\end{align*}
and the corresponding Fisher informations
\begin{align*}
 \RF^\lambda_F(\rho)=-\H\big(\rho,-2\lambda F(\rho)\big) &=\sumsum_{\substack{x,y\in\X\\x\neq y}} \rho_xQ_{xy} - (\rho_xQ_{xy})^{1-\lambda}(\rho_yQ_{yx})^{\lambda}, \\
  \RF^\lambda_{\Fsy}(\rho)=-\H\big(\rho,-2\lambda F^\sym(\rho)\big) &=\sumsum_{\substack{x,y\in\X\\x\neq y}}  \rho_x Q_{xy} - (\rho_x Q_{xy})^{1-\lambda}(\rho_y \tfrac{\pi_x}{\pi_y}Q_{xy})^\lambda,\\
  \RF^\lambda_{\Fasy}(\rho)=-\H\big(\rho,-2\lambda F^\asym(\rho)\big) &=\sumsum_{\substack{x,y\in\X\\x\neq y}} \rho_xQ_{xy} - (\rho_x Q_{xy})^{1-\lambda}(\rho_x\tfrac{\pi_y}{\pi_x} Q_{yx})^\lambda.
\end{align*}
While non-negativity of these Fisher informations is guaranteed by construction, it can also be proven directly by using $(1-\lambda) a+\lambda b \geq a^{1-\lambda}b^\lambda$. For $\lambda=\frac12$, all three Fisher informations are of the form ${\sumsum}_{x\neq y}(\sqrt{\cdot}-\sqrt{\cdot})^2$; interpreting the difference as an abstract discrete gradient, this is reminiscent of the usual Fisher information in continuous space $\frac12\int\!(\grad\sqrt{\rho(x)})^2\,dx$.

These decompositions provide new variational characterisations for the IPFG example, which coincide with the classical gradient-flow structure for Markov chains satisfying detailed balance (see Section~\ref{subsec:flows}) and lead to the FIR inequality as a special case (see Example~\ref{ex:IPFG-FIR} below). The decomposition~\eqref{eq:FIIR Fsym} with $\lambda=\frac12$ was first discussed in~\cite[Cor.~4]{KaiserJackZimmer2018}.

All three L-functions $\L_{(1-2\lambda)F}$, $\L_{F-2\lambda F^\sym}$ and $\L_{F-2\lambda F^\asym}$ are the large-deviation cost functions for processes with altered jump rates.
In particular, $\L_{F^\sym}=\L_{F-F^\asym}$ is the large-deviation cost function corresponding to the jump process with jump rates for a particle to jump from $x$ to $y$ given by
\begin{equation*}
  \kappa^\sym_{xy}(\rho):=\rho_x\sqrt{Q_{xy} Q_{yx}\mfrac{\pi_y}{\pi_x}}=\rho_x\sqrt{Q_{xy}\overleftarrow{Q}_{xy}},
\end{equation*}
where we write $\overleftarrow{v}_{xy}:=v_{yx}\frac{\pi_y}{\pi_x}$ for the jump rate of a single time-reversed jump process~\cite[Thm.~3.7.1]{Norris1998}. The linearity in $\rho_x$ reflects that the system consists of \emph{independent} Markov particles with generator $\sqrt{Q_{xy}\overleftarrow{Q}_{xy}}$ \cite{Renger2018a,Kraaij2017}.

Similarly, $\L_{F^\asym}=\L_{F-F^\sym}$ is the large-deviation cost function corresponding to a system with jump rates for one particle to jump from $x$ to $y$ given by~\cite{PattersonRenger2019}
\begin{equation}\label{eq:asym-jump-rates}
  \kappa^\asym_{xy}(\rho):=Q_{xy} \sqrt{\rho_x\rho_y \mfrac{\pi_x}{\pi_y}}=\sqrt{\rho_x\rho_y}\sqrt{Q_{xy}\overleftarrow{Q}_{yx}}.
\end{equation}
We can interpret $\L_{F^\asym}(\rho,j)$ as the flux large-deviation cost function corresponding to a system of interacting particles with jump rates $n\kappa_{xy}^\asym(\rho)$~\cite{AAPR21}. It should be noted that the usual large-deviation proof techniques break down in this particular case due to the non-uniqueness of solution to the limiting antisymmetric ODE (see Proposition~\ref{prop:IPFG-Ham-state}). 
\end{example}
\endgroup

The next corollary connects the decomposition~\eqref{eq:FIIR Fsym} to an (abstract-)FIR inequality (recall Section~\ref{subsubsec:FIR-intro}) only defined on the state-space $\Z$ and with no dependence on the flux-space $\W$. In order to make this connection we introduce the contracted L-function $\hat \L:T_\rho\Z\rightarrow\mathbb R\cup\{\infty\}$ defined as
\begin{equation}\label{def:FIIR-FIR-contract}
  \hat\L(\rho,u):= \inf_{\substack{j\in T_{\rho}\W: \, u=d\phi_{\rho}j}} \L(\rho,j).
\end{equation}
The definition of $\hat\L$ is inspired by the contraction principle in large-deviation theory, where $\hat\L$ is the large-deviation rate functional only on the state space (recall Example~\ref{ex:indep LDP}). This connection will be further clarified in Proposition~\ref{prop:LDP-contract}.

\begin{cor}[FIR inequality]\label{corr:FIR-abstract}
Let $\L$ be an L-function on $\Z$. 
For any $\rho\in\Dom(F^\asym)$, $u\in T_\rho\Z$ and $\lambda\in [0,1]$ we have
\begin{equation*}
\hat \L(\rho,u)\geq \RF^\lambda_{\Fsy}(\rho) + \lambda\langle d\QP(\rho),u \rangle,
\end{equation*}
where $\hat\L$ (with convex dual $\hat\H$) is defined in~\eqref{def:FIIR-FIR-contract} and $\RF^\lambda_{\Fsy}(\rho)=-\hat\H(\rho,\lambda d\QP)$.
\end{cor}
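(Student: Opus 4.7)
The plan is to deduce this corollary directly from the symmetric decomposition \eqref{eq:FIIR Fsym}. The idea is that \eqref{eq:FIIR Fsym} already gives a pointwise inequality once the non-negative term $\L_{F-2\lambda F^\sym}(\rho,j)$ is dropped, and then a contraction over fluxes with prescribed image under $d\phi_\rho$ produces the claimed inequality on the state space.

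Concretely, I would start from
\begin{equation*}
  \L(\rho,j)=\L_{F-2\lambda F^\sym}(\rho,j) + \RF^\lambda_{\Fsy}(\rho) - 2\lambda \langle F^\sym(\rho),j\rangle,
\end{equation*}
which, by Theorem~\ref{th:FIIRs}(ii), holds for all $\rho\in\Dom(F^\asym)$, $j\in T_\rho\W$ and $\lambda\in[0,1]$, together with $\L_{F-2\lambda F^\sym}(\rho,j)\geq 0$. Using the identity $F^\sym(\rho)=-\tfrac12 d\phi_\rho\tp d\QP(\rho)$ from Corollary~\ref{def:sym-asym-force} and the duality of $d\phi_\rho$ and $d\phi_\rho\tp$, the last term rewrites as
\begin{equation*}
  -2\lambda\langle F^\sym(\rho),j\rangle =\lambda\langle d\phi_\rho\tp d\QP(\rho),j\rangle = \lambda\langle d\QP(\rho),d\phi_\rho j\rangle.
\end{equation*}
Dropping the non-negative term $\L_{F-2\lambda F^\sym}$ then gives $\L(\rho,j)\geq \RF^\lambda_{\Fsy}(\rho)+\lambda\langle d\QP(\rho),d\phi_\rho j\rangle$ for every $j\in T_\rho\W$.

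Next, for any $u\in T_\rho\Z$ I would take the infimum over all $j\in T_\rho\W$ with $u=d\phi_\rho j$. On such $j$ the right-hand side is constant and equal to $\RF^\lambda_{\Fsy}(\rho)+\lambda\langle d\QP(\rho),u\rangle$, while the infimum of the left-hand side is by definition $\hat\L(\rho,u)$ (see~\eqref{def:FIIR-FIR-contract}); this yields the stated inequality.

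Finally, I would verify the alternative expression $\RF^\lambda_{\Fsy}(\rho)=-\hat\H(\rho,\lambda d\QP(\rho))$ via convex duality: by~\eqref{def:FIIR-FIR-contract} and the definition of the convex dual,
\begin{equation*}
  \hat\H(\rho,\xi)=\sup_{u\in T_\rho\Z}\bigl[\langle\xi,u\rangle-\hat\L(\rho,u)\bigr] =\sup_{j\in T_\rho\W}\bigl[\langle d\phi_\rho\tp\xi,j\rangle-\L(\rho,j)\bigr]=\H(\rho,d\phi_\rho\tp\xi),
\end{equation*}
so that $\hat\H(\rho,\lambda d\QP(\rho))=\H(\rho,\lambda d\phi_\rho\tp d\QP(\rho))=\H(\rho,-2\lambda F^\sym(\rho))=-\RF^\lambda_{\Fsy}(\rho)$, matching Definition~\ref{def:genFI}. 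I don't expect any serious obstacle here; the only mildly delicate point is making sure the infimum over $\{j:u=d\phi_\rho j\}$ is non-vacuous, which is ensured by working with $u\in T_\rho\Z=d\phi_\rho(T_\rho\W)$ (or else $\hat\L(\rho,u)=+\infty$ and the inequality is trivial).
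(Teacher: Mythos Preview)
Your proposal is correct and follows essentially the same approach as the paper: both start from the decomposition~\eqref{eq:FIIR Fsym}, rewrite $-2\lambda\langle F^\sym(\rho),j\rangle=\lambda\langle d\QP(\rho),d\phi_\rho j\rangle$ via Corollary~\ref{def:sym-asym-force}, drop the non-negative tilted L-function, and take the infimum over fluxes with $u=d\phi_\rho j$; the identification $\RF^\lambda_{\Fsy}(\rho)=-\hat\H(\rho,\lambda d\QP(\rho))$ is likewise obtained in both via the duality relation $\hat\H(\rho,\xi)=\H(\rho,d\phi_\rho\tp\xi)$.
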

\begin{proof}
Using convex duality and~\eqref{def:FIIR-FIR-contract} it follows that $\RF^\lambda_{\Fsy}(\rho)=-\H(\rho,\lambda d\phi\tp_\rho d\QP)=-\hat\H(\rho,\lambda d\QP)$.
Using~\eqref{eq:FIIR Fsym} and the definition of $\Fsy$~\eqref{def:Dom-Fsym} we find
\begin{align*}
\hat\L(\rho,u) &=  \inf_{\substack{j\in T_{\rho}\W: \, u=d\phi_{\rho}j}} \bigl[ \L_{F-2\lambda F^\sym}(\rho,j)  - 2\lambda \langle F^\sym(\rho),j\rangle\bigr] + \RF^\lambda_{\Fsy}(\rho)\\
&= \inf_{\substack{j\in T_{\rho}\W: \, u=d\phi_{\rho}j}} \bigl[ \L_{F-2\lambda \Fsy}(\rho,j) \bigr] + \RF^\lambda_{\Fsy}(\rho) +   \lambda \langle d\QP(\rho),u\rangle\\
&\geq \RF^\lambda_{\Fsy}(\rho) +   \lambda \langle d\QP(\rho),u\rangle,
\end{align*}
where the second equality follows since $\langle d\phi\tp_\rho\eta,j\rangle = \langle \eta,d\phi_\rho j\rangle$ and the inequality follows since tilted L-functions are non-negative by definition (see Lemma~\ref{lem:preFIR} \& Definition~\ref{def:L-function}).  
\end{proof}
\begingroup
\allowdisplaybreaks
\begin{example} \label{ex:IPFG-FIR}
We now comment on the connection with the FIR inequality in~\cite{HilderPeletierSharmaTse20}. Let $\rho\in C^1([0,T];\Dom(\Fsy))$, where we have abused notation so that $\rho$ is now a trajectory, and recall that $\Dom(\Fsy)=\P_{+}(\X)$. Since $\dot\rho(t)\in T_{\rho(t)}\Z$, using Corollary~\ref{corr:FIR-abstract}, for any $t\in [0,T]$ and $\lambda\in [0,1]$  we have
\begin{equation*}
  \hat\L(\rho(t),\dot\rho(t))\geq \RF^\lambda_{\Fsy}(\rho(t))+\lambda \tfrac{d}{dt}\QP(\rho(t)),
\end{equation*}
where we have used $\langle d\QP(\rho(t)),\dot\rho(t) \rangle =  \frac{d}{dt}\QP(\rho(t))$. Integrating in time, which is allowed since $\rho$ is a sufficiently smooth curve, we find
\begin{equation*}
  \mfrac{1}{\lambda}\int_0^T\hat\L(\rho(t),\dot\rho(t))dt + \QP(\rho(0)) \geq \mfrac{1}{\lambda}\int_0^T\RF^\lambda_{\Fsy}(\rho(t))dt+\QP(\rho(T)).
\end{equation*}
This is exactly the FIR inequality in~\cite[Thm.~1.6]{HilderPeletierSharmaTse20}, although this paper has two crucial generalisations. First, using approximation arguments, in~\cite{HilderPeletierSharmaTse20} the class of admissible curves is extended to $\rho\in AC([0,T];\Z)$, i.e.\ absolutely continuous curves in $\Z=\P(\X)$ instead of $\P_+(\X)$ discussed above (recall the discussion in Section~\ref{subsubsec:FIR-intro}). Second, in~\cite{HilderPeletierSharmaTse20} the relative entropy $\RelEnt(\rho(t)|\mu(t))$ with respect to any time-dependent solution $\mu$ of the corresponding macroscopic dynamics (which is the forward Kolmogorov equation)
\begin{equation}\label{eq:IPFG-ForKol}
  \dot\mu(t)=Q\tp\mu(t),
\end{equation}
is used as opposed to the quasipotential $\QP(\rho)=\mathrm{RelEnt}(\rho(t)|\pi)$, where $\pi$ is the invariant measure of~\eqref{eq:IPFG-ForKol}. We believe that this generalisation from the invariant measure $\pi$ to any time dependent solution $\mu(t)$ is a feature of the linear forward Kolmogorov equations (similar results also hold for linear Fokker-Planck equations~\cite[Thm.~1.1]{BogachevRocknerShaposhnikov16},~\cite[Thm.~4.18]{DLPSS17} arising from diffusion processes), and cannot be expected to hold in the setup of our paper where we are interested in nonlinear macroscopic equations. This is also the case for nonlinear diffusion processes~\cite[Thm.\ 2.3]{DuongLamaczPeletierSharma17}.
\end{example}
\endgroup

\subsection{Symmetric and antisymmetric L-functions}
\label{subsec:flows}

In this section we focus on the two terms $\L_\Fsy$ and $\L_\Fasy$ in the decompositions \eqref{eq:FIIR Fasym 1/2} and \eqref{eq:FIIR Fsym 1/2} respectively. Observe that $\L=\L_\Fsy$ if $\Fasy=0$, and therefore $\L_\Fsy$ corresponds to a system with a purely symmetric force. The relation between such systems with gradient flows is well known and follows from the theory in the previous sections, but for completeness we will make this connection explicit here. Similarly, $\L_\Fasy$ corresponds to a system with a purely antisymmetric force; in the level of abstraction of our current paper such systems are less understood. Motivated by our analysis in Section~\ref{sec:IPFG antisym flow} and the examples in Section~\ref{sec:examples} we conjecture below that these L-functions are related to Hamiltonian systems.

We first discuss the purely symmetric case. Note that when particle systems and large-deviations are involved, $\L_\Fsy$ is the large-deviation cost function of a microscopic system in detailed balance (see Corollary~\ref{cor:generalised DB}). In what follows we will make use of the contracted dissipation potential $\hat\Psi:T_\rho\Z\rightarrow\mathbb R\cup\{\infty\}$ defined as
\begin{equation}\label{def:FIIR-FIR-contract-diss}
\hat\Psi(\rho,u):= \inf_{\substack{j\in T_{\rho}\W: \, u=d\phi_{\rho}j}} \Phi(\rho,j).
\end{equation}

\begin{cor}[EDI]
\label{cor:symmetric=EDP}
Let $\L$ be an L-function on $\Z$ and $\rho\in\Dom(\Fasy)$. For any $j\in T_\rho\W$ we have
\begin{equation}\label{eq:FIIR-EDI}
  \L_\Fsy(\rho,j) = \Phi(\rho,j) + \Phi^*(\rho,-\tfrac12 d\phi_\rho\tp d\QP(\rho))+\tfrac12 \langle d\phi_\rho\tp d\QP(\rho),j\rangle,
\end{equation}
and for any $u\in T_\rho\Z$ we have 
\begin{equation}\label{eq:FIIR-EDI-cont}
  \hat\L_\Fsy(\rho,u) = \hat\Psi(\rho,u) + \hat\Psi^*\big(\rho,-\tfrac12 d\QP(\rho)\big)+\tfrac12 \langle  d\QP(\rho),u\rangle,
\end{equation}
where $\hat\L_\Fsy$, $\hat\Psi$ are defined in~\eqref{def:FIIR-FIR-contract},~\eqref{def:FIIR-FIR-contract-diss} and $\hat\Psi^*(\rho,\xi)=\Phi^*(\rho,d\phi_\rho\tp\xi)$ is the convex dual of $\hat\Psi$.
Additionally if $\rho\in \Dom_\symdiss(F^\asym)$, then for any $j\in T_\rho\W$ and $u\in T_\rho\Z$ we have the symmetry relations
\begin{equation}\label{eq:GF-symmetry}
  \L_\Fsy(\rho,j) - \L_\Fsy(\rho,-j)=\langle d\phi\tp_\rho d\QP(\rho),j \rangle, \ \ \hat\L(\rho,u) - \hat\L(\rho,-u)=\langle d\QP(\rho),u \rangle. 
\end{equation}
\end{cor}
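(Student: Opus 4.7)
The plan is to obtain \eqref{eq:FIIR-EDI} directly from Lemma~\ref{lem:preFIR} and then to contract everything to the state space to get \eqref{eq:FIIR-EDI-cont}, finally using Lemma~\ref{lem:symmetric dissipation} to read off the symmetry relations \eqref{eq:GF-symmetry}.

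First, since $\rho\in\Dom(\Fasy)\subseteq\Dom(F)\cap\Dom(\Fsy)$, I may apply Lemma~\ref{lem:preFIR} with $\CoF=\Fsy$, which gives $\L_\Fsy(\rho,j)=\Phi(\rho,j)+\Phi^*(\rho,\Fsy(\rho))-\langle\Fsy(\rho),j\rangle$. Substituting the definition $\Fsy(\rho)=-\tfrac12 d\phi_\rho\tp d\QP(\rho)$ from \eqref{def:Dom-Fsym} immediately yields \eqref{eq:FIIR-EDI}.

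For \eqref{eq:FIIR-EDI-cont} I will apply the contraction \eqref{def:FIIR-FIR-contract} to \eqref{eq:FIIR-EDI}. The middle term $\Phi^*(\rho,-\tfrac12 d\phi_\rho\tp d\QP(\rho))$ does not depend on $j$ and can be pulled out; moreover, on the constraint set $\{j: u=d\phi_\rho j\}$ one has $\langle d\phi_\rho\tp d\QP(\rho),j\rangle=\langle d\QP(\rho),d\phi_\rho j\rangle=\langle d\QP(\rho),u\rangle$, so the third term also factors out of the infimum. What remains inside the infimum is $\Phi(\rho,j)$, which by definition \eqref{def:FIIR-FIR-contract-diss} equals $\Psi(\rho,u)$. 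To finish, I need to identify the middle term with $\Psi^*(\rho,-\tfrac12 d\QP(\rho))$; this is a short Legendre-duality computation showing that the convex dual of the contracted dissipation satisfies $\Psi^*(\rho,\xi)=\Phi^*(\rho,d\phi_\rho\tp\xi)$, since
\begin{equation*}
  \Psi^*(\rho,\xi)=\sup_{u}\bigl[\langle \xi,u\rangle-\inf_{j:\,u=d\phi_\rho j}\Phi(\rho,j)\bigr]
  =\sup_{j}\bigl[\langle \xi,d\phi_\rho j\rangle-\Phi(\rho,j)\bigr]
  =\Phi^*(\rho,d\phi_\rho\tp\xi).
\end{equation*}

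Finally, for the symmetry relations \eqref{eq:GF-symmetry}, subtracting \eqref{eq:FIIR-EDI} evaluated at $j$ and at $-j$ yields
\begin{equation*}
  \L_\Fsy(\rho,j)-\L_\Fsy(\rho,-j)=\Phi(\rho,j)-\Phi(\rho,-j)+\langle d\phi_\rho\tp d\QP(\rho),j\rangle,
\end{equation*}
so the claim reduces to $\Phi(\rho,j)=\Phi(\rho,-j)$. This is exactly statement (iv) of Lemma~\ref{lem:symmetric dissipation}, which applies because $\Dom_\symdiss(F^\asym)\subseteq\Dom_\symdiss(F)$ by the remark following \eqref{eq:symmDom}. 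The second identity follows analogously from \eqref{eq:FIIR-EDI-cont}: subtracting, it suffices to show $\Psi(\rho,u)=\Psi(\rho,-u)$, and this is immediate from the definition \eqref{def:FIIR-FIR-contract-diss} of $\Psi$ as a contraction of the symmetric $\Phi$, since changing $j\to-j$ in the infimum converts the constraint $u=d\phi_\rho j$ into $-u=d\phi_\rho(-j)$.

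The only step that requires any care is the duality identity $\Psi^*(\rho,\xi)=\Phi^*(\rho,d\phi_\rho\tp\xi)$; everything else is bookkeeping. The main subtlety here is ensuring that the order of suprema and infima can be exchanged, which is justified by the boundedness of $d\phi_\rho$ assumed in Definition~\ref{def:state-flux-triple}(iv) together with the convex lower semicontinuity of $\Phi(\rho,\cdot)$ coming from $\L$ being an L-function (Definition~\ref{def:L-function}\ref{def:L-con-lsc}). I expect no deeper obstacle.
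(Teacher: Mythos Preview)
Your proof is correct and essentially matches the paper's argument. The only cosmetic differences are that the paper derives \eqref{eq:FIIR-EDI} by specialising the decomposition \eqref{eq:FIIR Fsym 1/2} (writing ``$\Fasy=0$'' to mean that $\L_\Fsy$ has purely symmetric force) rather than invoking Lemma~\ref{lem:preFIR} directly as you do, and for the first symmetry relation the paper cites Lemma~\ref{lem:symmetric dissipation}(ii) applied to $\L_\Fsy$ whereas you use the equivalent statement (iv); both routes are immediate.
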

\begin{proof}Using $\Fasy=0$ we have $F(\rho)=\Fsy(\rho)$, and the decomposition~\eqref{eq:FIIR-EDI} then follows from~\eqref{eq:FIIR Fsym 1/2} since $\L_0(\rho,j)=\Phi(\rho,j)$ (see~\eqref{eq:L_LF}), $\RF^\frac12_{\Fsy}(\rho)=\Phi^*(\rho,\Fsy(\rho))$ and using the definition of $\Fsy$~\eqref{def:Dom-Fsym}. The decomposition~\eqref{eq:FIIR-EDI-cont} follows by applying the infimum in~\eqref{def:FIIR-FIR-contract} to~\eqref{eq:FIIR-EDI} and noting that by definition of convex duality $\hat\Psi^*(\rho,\xi)=\Phi^*(\rho,d\phi\tp \xi)$ for any $\xi\in T^*_\rho\Z$. The first symmetry relation follows by Lemma~\ref{lem:symmetric dissipation}(ii) and the second symmetry relation following by taking the infimum of the first symmetry relation on both sides.   
\end{proof}
Note that the decomposition~\eqref{eq:FIIR-EDI} also follows from~\eqref{eq:FIIR Fasym 1/2} by using~\eqref{eq:L=Phi Phis1}, but for $\rho\in \Dom_\symdiss(\Fasy)$. 
Let us first comment on the \emph{contracted} symmetric function $\hat\L_\Fsy$. Clearly, its zero-cost velocity $u^0(\rho)$ satisfies the EDI
\begin{equation*}
  \hat\Psi\big(\rho,u^0(\rho)\big) + \hat\Psi^*\big(\rho,-\tfrac12 d\QP(\rho)\big)+\tfrac12 \langle  d\QP(\rho),u^0(\rho)\rangle = 0,
\end{equation*}
which is equivalent by convex duality to a generalised gradient flow~\eqref{eq:GGS}. Summarising Corollaries~\ref{cor:generalised DB} and \ref{cor:symmetric=EDP}, if a microscopic system is in detailed balance, the large-deviation cost function $\L=\L_\Fsy$ has a purely symmetric force, and hence induces a generalised gradient flow. This connection between gradient flows and detailed balance was first discussed in this generality in \cite{MielkePeletierRenger14}. For the IPFG example, the second symmetry relation in~\eqref{eq:GF-symmetry} correspond to the classical gradient structure for finite-state Markov chains in detailed balance~\cite[Sec.\ 4.1]{MielkePeletierRenger14} and the decomposition~\eqref{eq:FIIR-EDI} is the corresponding flux formulation of the gradient structure for this example~\cite[Sec.\ 4.5]{Renger2018a}. Note that, strictly speaking~\eqref{eq:FIIR-EDI} is not a gradient flow in the density-flux space. However a careful rewriting allows us to see $\L_\Fsy$ as a gradient flow, as summarised in the following remark.

%
%
\begin{remark}
With $\L^\W_\Fsy(w,j):=\L_\Fsy(\phi\lbrack w \rbrack,j)$, and applying the chain rule $d_w\QP^\W(w)=d\phi_{\phi\lbrack w \rbrack}\tp d_\rho\QP(\phi\lbrack w\rbrack)$, we arrive at 
\begin{equation}  \label{eq:FIIR-EDIW} 
  \L^\W_\Fsy(w,j) = \Phi^\W(w,j) + {\Phi^\W}^*\big(w,-\tfrac12 d_w\QP^\W(w)\big)+\tfrac12 \langle d_w\QP^\W(\rho),j\rangle.
\end{equation}
In this formulation $\L_{\Fsy}$ is indeed a gradient flow in the density-flux space~\cite{Renger2018b}.
\end{remark}

As far as we are aware, the purely antisymmetric cost $\L_\Fasy$ has not been studied in the literature, and we could not produce rigorous results for it in the abstract setting of this section. However, as will be discussed in forthcoming sections, we are able to show that for certain examples the zero-cost \emph{velocity} associated to $\L_\Fasy$ is non-dissipative, in the sense that one can associate a non-trivial conserved energy and a skew-symmetric operator to it, which motivates the following conjecture.
\begin{conjecture}\label{conj}
 Let $\L$ be an L-function on $\Z$ and $\hat\L_{\Fasy}$ be the contracted L-function corresponding to $\hat\L_{\Fasy}$, i.e.\ 
\begin{equation*}
\hat\L_{\Fasy}(\rho,u):= \inf_{\substack{j\in T_{\rho}\W: \, u=d\phi_{\rho}j}} \L_{\Fasy}(\rho,j). 
\end{equation*}
Then there exists an energy $\E:\Z\rightarrow\RR$ and a skew-symmetric operator $\pJ:\rho\mapsto (T_\rho^*\Z\to T_\rho\Z)$ such that the zero-cost velocity of $\hat\L_\Fasy$ can be written as
\begin{equation*}
  u^0(\rho)=\pJ(\rho)D\E(\rho).
\end{equation*}
\end{conjecture}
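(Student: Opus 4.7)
The plan is to first read off the zero-cost flux of $\L_\Fasy$ explicitly. From Definition~\ref{def:tilt} with $\CoF = \Fasy$ and $\Fasy - F = -\Fsy$, one has $\H_\Fasy(\rho,\zeta) = \H(\rho,\zeta - \Fsy(\rho)) - \H(\rho,-\Fsy(\rho))$, and using~\eqref{eq:H=Psis1} the zero-cost flux is
\[ j^0_\Fasy(\rho) \;=\; d_\zeta\H_\Fasy(\rho,0) \;=\; d_\zeta\H(\rho,-\Fsy(\rho)) \;=\; d\Phi^*\bigl(\rho,\Fasy(\rho)\bigr), \]
so the contracted zero-cost velocity is $u^0(\rho) = d\phi_\rho\, d\Phi^*(\rho,\Fasy(\rho))$. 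This is the concrete object whose Hamiltonian form must be exhibited.

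The natural candidate for the energy is $\E := \QP$, motivated by the general idea behind the symmetric/antisymmetric split: $\L_\Fsy$ dissipates $\QP$ (see Corollary~\ref{cor:symmetric=EDP}), so $\L_\Fasy$ should conserve it. To verify conservation I would compute
\[ \langle d\QP(\rho), u^0(\rho)\rangle \;=\; \langle d\phi_\rho\tp d\QP, d\Phi^*(\rho,\Fasy)\rangle \;=\; -2\,\langle \Fsy(\rho), d\Phi^*(\rho,\Fasy(\rho))\rangle, \]
and try to deduce that this pairing vanishes from the generalised orthogonality $\theta_\rho(\Fsy,\Fasy)=0$ of Proposition~\ref{prop:ortho-rel}, which reads $\Phi^*(\rho,\Fsy+\Fasy) = \Phi^*(\rho,-\Fsy+\Fasy)$. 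Differentiating this identity formally in $\Fsy$ (under the symmetry assumption $\rho\in\Dom_\symdiss(\Fasy)$ used throughout, and differentiability of $\Phi^*$) yields exactly $\langle \Fsy, d\Phi^*(\rho,\Fasy)\rangle = 0$. Making this step rigorous — in particular, choosing an admissible perturbation direction in $T_\rho^*\W$ — is the first technical task.

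Given conservation of $\QP$, the remaining task is to produce a linear, skew-symmetric $\pJ(\rho): T_\rho^*\Z \to T_\rho\Z$ with $\pJ(\rho)\,d\QP(\rho) = u^0(\rho)$. In the quadratic case $\Phi^*(\rho,\zeta) = \tfrac12\langle K(\rho)\zeta,\zeta\rangle$ this is immediate, since $d\Phi^*(\rho,\Fasy) = K(\rho)\Fasy$ is linear in $\Fasy$ and one can define $\pJ(\rho) := d\phi_\rho K(\rho) A(\rho) d\phi_\rho\tp$, where $A(\rho)$ is the skew operator determined by writing $\Fasy$ as a linear image of $-\tfrac12 d\phi_\rho\tp d\QP$ modulo the kernel; skew-symmetry follows from the orthogonality of $\Fsy$ and $\Fasy$ in the Hilbert-space inner product. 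A plausible abstract extension is to define
\[ \pJ(\rho)\xi \;:=\; d\phi_\rho\, d^2\Phi^*\bigl(\rho,\Fasy(\rho)\bigr)\,\bigl[\text{odd part in }\xi\bigr], \]
i.e.\ to linearise $\Phi^*$ around the antisymmetric force, and to verify skew-symmetry by differentiating the orthogonality identity a second time.

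I expect this last step to be the main obstacle, and indeed the authors themselves flag it as open. The difficulty is intrinsic: $d\Phi^*(\rho,\Fasy)$ depends nonlinearly on $\Fasy$, so there is no canonical way to factor it as $\pJ(\rho) d\QP(\rho)$ with $\pJ$ linear. A pragmatic fallback, consistent with the paper's strategy of illustrating abstract structure through examples, would be to establish the conjecture by construction in the concrete settings of Section~\ref{sec:examples} — building $\pJ$ from graph-theoretic cycle structures for IPFG and chemical reactions, and checking in each case that the skew operator so obtained is compatible with the second-derivative formula above — and only then try to distil the common ingredients into a general hypothesis (e.g.\ a prescribed cycle decomposition of $\Fasy$) under which the abstract conjecture can be proved.
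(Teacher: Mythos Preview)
The statement is a \emph{conjecture}; the paper explicitly says it ``could not produce rigorous results for it in the abstract setting'' and only verifies it in specific examples (IPFG in Section~\ref{sec:IPFG antisym flow}, zero-range and lattice gases in Section~\ref{sec:examples}). So there is no general proof to compare against, and your closing remarks correctly acknowledge this. However, the concrete strategy you sketch before that fallback contains a genuine error.

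Your proposal takes $\E=\QP$ as the conserved energy. This is wrong, and the paper's examples show it: for IPFG the energy is $\E(\rho)=1-\sqrt{\pi}\cdot\sqrt{\rho}$ (Proposition~\ref{prop:IPFG-Ham-state}), for zero-range it is $\E(\rho)=1-\sum_x g_x(\rho_x)$ (Proposition~\ref{prop:0range-Ham-state}), and for lattice gases it is $\E(\rho)=\int U\,d\rho$ --- none of these is the quasipotential. The step where this goes wrong is your differentiation argument. The orthogonality relation $\theta_\rho(\Fsy,\Fasy)=0$ reads $\Phi^*(\rho,\Fsy+\Fasy)=\Phi^*(\rho,-\Fsy+\Fasy)$, a single scalar identity at the specific point $(\Fsy(\rho),\Fasy(\rho))$; it is not a one-parameter family in $\Fsy$ that one can differentiate at $0$. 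Concretely, in the IPFG case $\theta_\rho(\zeta^1,\zeta^2)=2\sum\sqrt{\cdot}\,\sinh(\zeta^1)\sinh(\zeta^2)$ while $\langle\zeta^1,d\Phi^*(\rho,\zeta^2)\rangle=2\sum\sqrt{\cdot}\,\zeta^1\sinh(\zeta^2)$: the first vanishes for $(\Fsy,\Fasy)$ but the second does not, so $\QP$ is \emph{not} conserved along the antisymmetric flow.

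The paper's approach in the examples is entirely different and does not go through $\QP$ at all. For IPFG the key idea is the nonlinear change of variables $\omega_x=\sqrt{\rho_x}$, which turns the antisymmetric ODE into a \emph{linear} system $\dot\omega=\tfrac12 A\omega$ with $A$ skew-symmetric; the Hamiltonian structure is then built for this linear system (Appendix~\ref{app:HamFlow}) with energy linear in $\omega$, and transformed back. There is no attempt to linearise $\Phi^*$ around $\Fasy$ or to relate $\pJ$ to $d^2\Phi^*$. If you pursue the example-by-example route, you should follow this line rather than the $\E=\QP$ ansatz.
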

Clearly, the skew-symmetry of $\pJ(\rho)$ implies that the energy $\E(\rho(t))$ will be conserved along solutions of $\dot\rho(t)=\pJ(\rho(t))D\E(\rho(t))$. In fact, for the IPFG and lattice gas examples, the corresponding $\pJ$ even satisfies the Jacobi identity, so that the purely antisymmetric velocity has a Hamiltonian structure (see Sections~\ref{sec:IPFG antisym flow}, \ref{sec:LatticeGas} for details).


\section{Formal connection with large deviations}\label{sec:large deviations}

In Section~\ref{sec:abstract} we focussed on the purely macroscopic setting. In this section we motivate the abstract structures introduced therein by connecting them to Markov processes and their large deviations. Although the results presented in this section are largely known in the literature in specific settings, we include them here in a more general setting to provide rationale for the abstract framework discussed in the last section. While these results are formal due to the level of generality at which we work, they can be made rigorous case by case. 

Throughout this section we assume a \emph{microscopic} dynamics described by a sequence of Markov processes $(\rho\super{n}(t),W\super{n}(t)\big)$ defined on $\Z\times\W$. Typically, $\rho\super{n}(t)$ is the empirical measure, concentration or density corresponding to $\mathcal{O}(n)$ particles, and $W\super{n}(t)$ is the integrated/cumulative particle flux (recall Example~\ref{ex:indep LDP} and see Section~\ref{sec:examples} for further examples). For now, we assume a fixed deterministic initial condition $\rho\super{n}(0)$ for the empirical measure; this will be relaxed later on. We always assume that the initial condition for the flux satisfies $W\super{n}(0)=0$ almost surely, since the particles have not moved yet at initial time. For any $t\geq0$, the integrated flux $W\super{n}(t)$ contains all information required to reconstruct the current state of the system, i.e.\ almost surely 
\begin{equation*}
\rho\super{n}(t)=\phi\lbrack W\super{n}(t)\rbrack.
\end{equation*}
Equivalently, if the random paths allow for a notion of (measure-valued) time-integration, we write
\begin{equation*}
  \dot\rho\super{n}(dt)=d\phi_{\rho\super{n}(t)} \dot W\super{n}(dt).
\end{equation*}

%

We assume that the sequence $(\rho\super{n}(t),W\super{n}(t)\big)$ satisfies a law of large numbers, whereby the microscopic process $\big(\rho\super{n}(t),W\super{n}(t)\big)$ converges to a macroscopic, deterministic trajectory $(\rho(t),w(t))$, which satisfies an equation of the form~\eqref{eq:coupled ev eq}, where at this stage we are only interested in the instantaneous flux $j=\dot w$. Consequently, the corresponding path probability measures $\PP\super{n}=\Law(\rho\super{n},W\super{n})$ will concentrate on that path $(\rho,w)$ as $n\to\infty$. 

Finally we assume that the sequence $(\rho\super{n}(t),W\super{n}(t)\big)$ satisfies a corresponding large-deviation principle in $\Z\times\W$, which can be formally written as 
\begin{equation}\label{eq:ldp ldp}
  \PP\super{n}\big( (\rho\super{n},W\super{n})\approx (\rho,w)\big) \sim e^{-n\int_0^T\!\L(\rho(t),\dot w(t))\,dt}.
\end{equation}
This large-deviation principle characterises the exponentially vanishing probability of paths starting from the fixed deterministic initial conditions which do not converge to the macroscopic path $(\rho,w)$. The function $\L$ is non-negative and its zero-cost flux corresponds to the macroscopic path, since for that path $\PP\super{n} \sim 1$. 

In what follows, we first focus on the classical technique for proving the aforementioned large-deviation statement, which motivates the tilted L-function introduced in Lemma~\ref{lem:preFIR}. Consequently we motivate the Definition~\ref{def:I0} of the quasipotential via the large deviations of invariant measures, and the Definition~\ref{def:time-reversedL} of the reversed L-function using time-reversal.

\subsection{Tilting, contraction and mixture}\label{sec:tilt}

Rigorous proofs of large-deviation principles for Markov processes tend to be rather technical. We nevertheless briefly review the classical proof technique, since it is closely related to the macroscopic framework introduced in Subsection~\ref{subsec:abstract forces}. For an example of this technique see~\cite[Chap.~10]{KipnisLandim1999}.

\begin{formal}\label{th:ldp}
Let $\Q\super{n}$ be the generator of the Markov process $(\rho\super{n}(t),W\super{n}(t))$, define
\begin{align*}
  \H\super{n}(\rho,w,\zeta) 
    &:=
  \frac1n e^{-n\langle\zeta,w\rangle}\Q\super{n} e^{n\langle\zeta,w\rangle},
\end{align*}
and let the limit $\H(\rho,\zeta)=\lim_{n\to\infty}\H\super{n}(\rho,w,\zeta)$ exist and be dependent on $w$ only via the relation $\rho=\phi\lbrack w\rbrack$. Then the process $(\rho\super{n},W\super{n})$ satisfies the large-deviation principle~\eqref{eq:ldp ldp} with
\begin{equation*}
  \L(\rho,j):=\sup_{\zeta\in T_\rho^*\W} \langle \zeta,j\rangle - \H(\rho,\zeta).
\end{equation*}
\end{formal}
The assumption that $\H$ depends on $w$ only via $\rho=\phi\lbrack w\rbrack$ will generally be justified if the noise only depends on the state $\rho$ of the system.

\begin{proof}[Main proof technique]
In order to derive the large deviations \eqref{eq:ldp ldp} for a given, atypical path $(\rho,w)$, one changes the probability measure $\PP\super{n}$ to a \emph{tilted} probability measure $\PP\super{n}_\zeta$. The tilting is defined via a time-dependent force field $\zeta(t)$ to be chosen later, and the Radon-Nikodym derivative is explicitly given by (see~\cite{Palmowski2002} for the generator of the tilted process and related technical details)
\begin{align}
  \frac{d\PP\super{n}_\zeta}{d\PP\super{n}}(\hat\rho,\hat w)
    &=
  \exp\Big\lbrack n   \int_0^T\!\Big( \langle \zeta(t),\dot{\hat w}(dt)\rangle - \H\super{n}\big(\hat\rho(t),\hat w(t),\zeta(t)\big) \Big)\,dt \Big\rbrack.
\label{eq:ldp tilting RN}
\end{align}
One can then (formally) estimate, for a small ball $\mathcal{B}_\vep(\rho,w)$ around the given atypical path $(\rho,w)$,
\begin{align*}
  -\mfrac1n\log\PP\super{n}\big( \mathcal{B}_\vep(\rho,w)\big)
    &=
  -\mfrac1n\log \int_{\mathcal{B}_\vep(\rho,w)}\!\frac{d\PP\super{n}}{d\PP\super{n}_\zeta}(\hat\rho,\hat w) \,\PP\super{n}_\zeta\big(d(\hat\rho,\hat w)\big) \\
    &\approx
  \mfrac1n\log \frac{d\PP\super{n}_\zeta}{d\PP\super{n}}(\rho,w) - \mfrac1n\log \PP\super{n}_\zeta\big(\mathcal{B}_\vep(\rho,w)\big)
  \qquad\text{(for small $\vep$)}\\
    &=
  \int_0^T\!\Big( \langle \zeta(t),\dot w(dt)\rangle - \H\super{n}\big(\rho(t),w(t),\zeta(t)\big) \Big)\,dt- \mfrac1n\log \PP\super{n}_\zeta\big(\mathcal{B}_\vep(\rho,w)\big).
\end{align*}
We choose $\zeta(t)$ to be optimum in $\sup_{\hat\zeta}\langle\hat\zeta,\dot w(t)\rangle-\H(\rho(t),\hat\zeta)$. It turns out that with this choice, the tilted probability $\PP\super{n}_\zeta$ will concentrate on the given path $(\rho,w)$ and therefore the final term in the right hand side vanishes (even for small $\vep$), which results in
\begin{equation*}
  -\mfrac1n\log\PP\super{n}\big( \mathcal{B}_\vep(\rho,w)\big)
    \stackrel{n\to\infty}{\approx}
  \int_0^T\!\sup_{\zeta} \Big( \langle \zeta,\dot w(dt)\rangle - \H\big(\rho(t),\zeta\big) \Big)\,dt
    =\int_0^T\!\L\big(\rho(t),\dot w(t)\big)\,dt.
\end{equation*}
\end{proof}

\begin{remark} On this formal level we do not specify the precise topological space in which the large-deviation principle holds; typically one can choose the Skorohod space $D(0,T;\Z\times\W)$, possibly requiring weaker topologies on $\Z\times\W$. However, this topological setting does not influence the geometric picture of Section~\ref{subsec:abstract setup}. We also stress that although the described proof strategy is classic, there are known cases were it fails \cite{Heydecker2023}. A different proof technique is developed in~\cite{FengKurtz06}, but the main argument described above are the same.
\end{remark}

Following similar arguments one can derive the large deviations of the tilted measures. 
\begin{corollary}\label{cor:LDP tilted L}
 For a given path $\zeta(t)$, 
the tilted probability $\PP\super{n}_\zeta$ from~\eqref{eq:ldp tilting RN} satisfies the large-deviation principle
\begin{equation}
  \PP\super{n}_\zeta\big( (\rho\super{n},W\super{n})\approx (\rho,w)\big) \sim e^{-n\int_0^T\!\L_{\zeta(t)}(\rho(t),\dot w(t))\,dt},
\label{eq:ldp tilted ldp}
\end{equation}
where $\L_{\zeta}$ is the convex dual of 
\begin{equation*}
  \H_\zeta(\rho,\hat\zeta):=\H(\rho,\zeta+\hat\zeta)-\H(\rho,\zeta).
\end{equation*}
\end{corollary}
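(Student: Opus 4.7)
The plan is to mimic the change-of-measure argument used in the proof sketch of Formal Theorem~\ref{th:ldp}, taking $\PP\super{n}$ (with its now-known LDP) as the reference and $\PP\super{n}_\zeta$ as the target. Concretely, for an atypical path $(\rho,w)$ and a small ball $\mathcal B_\vep(\rho,w)$, I would write
\begin{equation*}
  -\mfrac1n\log\PP\super{n}_\zeta\bigl(\mathcal B_\vep(\rho,w)\bigr)
  = -\mfrac1n\log\int_{\mathcal B_\vep(\rho,w)}\!\mfrac{d\PP\super{n}_\zeta}{d\PP\super{n}}(\hat\rho,\hat w)\,\PP\super{n}\bigl(d(\hat\rho,\hat w)\bigr),
\end{equation*}
and use that for small $\vep$ the continuous exponent in~\eqref{eq:ldp tilting RN} is approximately constant on $\mathcal B_\vep(\rho,w)$ with value $\exp\bigl[n\int_0^T\bigl(\langle\zeta(t),\dot w(dt)\rangle - \H(\rho(t),\zeta(t))\bigr)dt\bigr]$. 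Combining this with the LDP for $\PP\super{n}$ and sending $n\to\infty$, $\vep\downarrow0$ as in the proof sketch of Formal Theorem~\ref{th:ldp} yields the tentative rate
\begin{equation*}
  \int_0^T\!\bigl[\L(\rho(t),\dot w(t))-\langle\zeta(t),\dot w(dt)\rangle+\H(\rho(t),\zeta(t))\bigr]\,dt.
\end{equation*}

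Next I would identify this integrand with $\L_{\zeta(t)}(\rho(t),\dot w(t))$ by an elementary convex-duality computation: the change of variable $\eta=\hat\zeta+\zeta$ in $\L_\zeta(\rho,j)=\sup_{\hat\zeta}\bigl[\langle\hat\zeta,j\rangle-\H_\zeta(\rho,\hat\zeta)\bigr]$ gives
\begin{equation*}
  \L_\zeta(\rho,j)=\sup_{\eta}\bigl[\langle\eta-\zeta,j\rangle-\H(\rho,\eta)\bigr]+\H(\rho,\zeta)
  =\L(\rho,j)-\langle\zeta,j\rangle+\H(\rho,\zeta),
\end{equation*}
which matches the integrand above. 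Moreover $\H_\zeta(\rho,\cdot)$ inherits convexity, lower semicontinuity and properness from $\H(\rho,\cdot)$, and satisfies $\H_\zeta(\rho,0)=0$; by the criterion stated just after~\eqref{eq:L and H} this certifies that $\L_\zeta$ is a bona fide L-function, whose zero-cost flux describes the typical behaviour of the tilted Markov process.

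The hard part is exactly the technical step that the paper flags as formal in Formal Theorem~\ref{th:ldp}: justifying the Laplace/Varadhan-type interchange of limits. This requires exponential tightness of $\PP\super{n}_\zeta$ (which in turn typically follows from exponential integrability of the Radon--Nikodym derivative, imposing some regularity on the tilt $\zeta(t)$) together with a pathwise interpretation of $\int_0^T\langle\zeta(t),\dot{\hat w}(dt)\rangle$ when $\hat w$ is only a measure-valued trajectory. A convenient sanity check for the absence of a normalisation constant in the rate is that $\EE_{\PP\super{n}}\bigl[d\PP\super{n}_\zeta/d\PP\super{n}\bigr]=1$, which via Varadhan's lemma forces $\sup_{\rho,j}\bigl[\langle\zeta,j\rangle-\H(\rho,\zeta)-\L(\rho,j)\bigr]=0$; this is automatic by convex duality between $\L$ and $\H$, so no extra constant appears beyond what is recorded in \eqref{eq:ldp tilted ldp}.
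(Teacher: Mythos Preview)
Your argument is correct at the formal level the paper works at, but it takes a different route from the paper's own proof. You treat the result as an instance of the tilted large-deviation principle: you use the already-established LDP for $\PP\super{n}$ as a black box, pull the Radon--Nikodym density~\eqref{eq:ldp tilting RN} out of the integral on small balls, and then identify the resulting integrand $\L(\rho,j)-\langle\zeta,j\rangle+\H(\rho,\zeta)$ with $\L_\zeta(\rho,j)$ by the change of variable $\eta=\hat\zeta+\zeta$ in the Legendre transform. This is essentially a pathwise Varadhan-lemma computation.

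The paper instead re-runs the proof of Formal Theorem~\ref{th:ldp} with $\PP\super{n}_\zeta$ playing the role of the base measure. It computes a \emph{further} tilt,
\[
  \frac{d\PP\super{n}_{\zeta+\hat\zeta}}{d\PP\super{n}_{\zeta}}
  =\exp\Big[n\int_0^T\!\Big(\langle\hat\zeta,\dot{\hat w}\rangle-\H\super{n}(\hat\rho,\hat w,\zeta+\hat\zeta)+\H\super{n}(\hat\rho,\hat w,\zeta)\Big)\,dt\Big],
\]
and observes that this has exactly the form~\eqref{eq:ldp tilting RN} with $\H\super{n}$ replaced by $\H\super{n}(\rho,w,\zeta+\cdot)-\H\super{n}(\rho,w,\zeta)$, whose limit is $\H_\zeta$. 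The LDP with cost $\L_\zeta$ then follows by the identical argument as before. The paper's route makes transparent that tilting simply shifts the Hamiltonian and that the tilted process is again of the same type, without ever invoking the LDP for $\PP\super{n}$ itself; your route is closer to standard large-deviation machinery and makes the convex-dual identity $\L_\zeta(\rho,j)=\L(\rho,j)-\langle\zeta,j\rangle+\H(\rho,\zeta)$ explicit. Both are valid at this heuristic level.
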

The proof follows from the same arguments as Formal Theorem~\ref{th:ldp}, with \eqref{eq:ldp tilting RN} replaced by
\begin{align*}
  \frac{d\PP\super{n}_{\zeta+\hat\zeta}}{d\PP\super{n}_{\zeta}}(\hat\rho,\hat w)
    &=
  \frac{d\PP\super{n}_{\zeta+\hat\zeta}}{d\PP\super{n}_{\phantom{\hat \zeta}}}(\hat\rho,\hat w) \frac{d\PP\super{n}_{\phantom{\hat \zeta}}}{d\PP\super{n}_{\zeta \phantom{\hat \zeta}}}(\hat\rho,\hat w)\\
    &=
  \exp\Big\lbrack n   \int_0^T\!\Big( \langle \hat\zeta(t),\dot{\hat w}(dt)\rangle - \H\super{n}\big(\hat\rho(t),\hat w(t),\zeta(t)+\hat\zeta(t)\big) + \H\super{n}\big(\hat\rho(t),\hat w(t),\zeta(t)\big) \Big)\,dt \Big\rbrack.
\end{align*}
Note that $\H_{\zeta-F}$ is exactly as in~\eqref{def:tilt} and consequently we interpret the tilted L-functions introduced in  Definition~\ref{def:tilt} as the large-deviation cost functions for the tilted probability measures.

From the Formal Theorem~\ref{th:ldp}, one immediately obtains the following large-deviation principle for the state by applying the contraction principle {\cite[Thm.~4.2.1]{DemboZeitouni09}}, which motivates the definition~\eqref{eq:intro-contraction} \footnote{In practice, the mapping from paths $(\rho,w)$ to paths $\rho$ is clearly continuous. In order to make the statement rigorous one only needs to show that the infimum can be moved inside the integral.}
\begin{prop}\label{prop:LDP-contract}
Assume that the large-deviation principle~\eqref{eq:ldp ldp} holds for the pair $(\rho\super{n},W\super{n})$. Then the large-deviation principle also holds for $\rho\super{n}$, i.e.\ 
\begin{equation}\label{eq:ldp contraction}
  \PP\super{n}( \rho\super{n}\approx \rho\big) \sim e^{-n\int_0^T\!\hat\L(\rho(t),\dot\rho(t))\,dt}, 
  \quad\text{with}\quad
  \hat\L(\rho,\dot\rho):=\inf_{j:\dot\rho=d\phi_\rho j} \L(\rho,j).
\end{equation}
Moreover, $\hat\H(\rho,\xi):=\sup_{\dot\rho\in T_\rho\Z} \langle \xi,\dot\rho\rangle - \hat\L(\rho,\dot\rho)=\H(\rho,d\phi_\rho\tp\xi)$.
\end{prop}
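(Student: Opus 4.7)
The plan is to apply the contraction principle~\cite[Thm.~4.2.1]{DemboZeitouni09} to the continuous projection $(\rho,w)\mapsto\rho$ at the path level, and then interchange the resulting infimum with the time integral. The assumed LDP~\eqref{eq:ldp ldp} for the joint process $(\rho\super{n},W\super{n})$ has rate functional $(\rho,w)\mapsto \int_0^T \L(\rho(t),\dot w(t))\,dt$, finite only when $\rho=\phi[w]$. Contraction under the projection immediately yields
\begin{equation*}
  -\tfrac{1}{n}\log\PP\super{n}\big(\rho\super{n}\approx\rho\big) \stackrel{n\to\infty}{\longrightarrow} \inf_{w:\phi[w]=\rho}\int_0^T \L\big(\rho(t),\dot w(t)\big)\,dt.
\end{equation*}

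Next I would differentiate the constraint $\phi[w(t)]=\rho(t)$ in $t$, which by the assumed differentiability of $\phi$ gives the pointwise constraint $d\phi_{\rho(t)}\dot w(t)=\dot\rho(t)$. Since the integrand depends only on the instantaneous values of $\rho$ and $\dot w$, the variational problem decouples in time and one obtains
\begin{equation*}
  \inf_{w:\phi[w]=\rho}\int_0^T \L\big(\rho(t),\dot w(t)\big)\,dt = \int_0^T \inf_{j:d\phi_{\rho(t)}j=\dot\rho(t)} \L\big(\rho(t),j\big)\,dt = \int_0^T \hat\L\big(\rho(t),\dot\rho(t)\big)\,dt,
\end{equation*}
which is exactly~\eqref{eq:ldp contraction}. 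The Hamiltonian identity $\hat\H(\rho,\xi)=\H(\rho,d\phi_\rho\tp\xi)$ then follows by swapping the two suprema in the Legendre transform:
\begin{equation*}
  \hat\H(\rho,\xi) = \sup_{u\in T_\rho\Z}\sup_{j:d\phi_\rho j=u} \langle\xi,u\rangle-\L(\rho,j) = \sup_{j\in T_\rho\W}\langle d\phi_\rho\tp\xi,j\rangle-\L(\rho,j) = \H\big(\rho,d\phi_\rho\tp\xi\big),
\end{equation*}
where the second equality uses the adjoint identity $\langle\xi,d\phi_\rho j\rangle=\langle d\phi_\rho\tp\xi,j\rangle$ from Definition~\ref{def:state-flux-triple}.

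The main obstacle is justifying the exchange of infimum and integral, i.e.\ constructing a jointly measurable curve $t\mapsto j(t)$ with $d\phi_{\rho(t)}j(t)=\dot\rho(t)$ attaining the inner infimum. This would rely on the convexity and lower semicontinuity of $\L(\rho,\cdot)$ from Definition~\ref{def:L-function}\ref{def:L-con-lsc} together with mild regularity of the constraint set $\{j\in T_\rho\W: d\phi_\rho j=u\}$ in $(\rho,u)$, and can be handled by a standard measurable-selection theorem. Since the LDP~\eqref{eq:ldp ldp} is itself formal at this level of generality (compare Formal Theorem~\ref{th:ldp}), the contraction argument here is likewise formal and is made rigorous case by case in the concrete applications of Section~\ref{sec:examples}.
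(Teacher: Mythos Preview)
Your proposal is correct and follows exactly the approach the paper indicates: the paper does not give a separate proof but simply invokes the contraction principle~\cite[Thm.~4.2.1]{DemboZeitouni09} in the sentence preceding the proposition. Your write-up is in fact more detailed than the paper's, which treats the result as immediate; your explicit computation of $\hat\H(\rho,\xi)=\H(\rho,d\phi_\rho\tp\xi)$ via the double supremum and your remark on the formal nature of the infimum--integral interchange are both appropriate and consistent with the paper's stated level of rigour in Section~\ref{sec:large deviations}.
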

So far we have assumed that the initial condition $\rho\super{n}(0)$ is fixed and deterministic. If the initial condition is random then we have the following result, which will be useful in what follows.
\begin{prop}[Mixing {\cite{Biggins2004}}]\label{prop:ldp mixture}
Assume that the large-deviation principle~\eqref{eq:ldp ldp} holds for the pair $(\rho\super{n},W\super{n})$ with a deterministic initial condition. If the initial condition is replaced by a sequence $\rho\super{n}(0)\in\Z$ which satisfies the large-deviation principle
\begin{equation*}
  \PP\super{n}\big( \rho\super{n}(0)\approx \rho\big) \sim e^{-n\I_0(\rho)}
\end{equation*}
for some functional $\I_0:\Z\to\lbrack0,\infty\rbrack$ and $W\super{n}(0)=0$ almost surely, then 
the pair $(\rho\super{n},W\super{n})$ with random initial condition $\rho\super{n}(0)\in\Z$ satisfies 
the large deviation principle 
\begin{equation}\label{eq:ldp mixture}
  \PP\super{n}\big( (\rho\super{n},W\super{n})\approx (\rho,w)\big) \sim e^{-n\I_0(\rho(0)) -n\int_0^T\!\L(\rho(t),\dot w(t))\,dt}.
\end{equation}
\end{prop}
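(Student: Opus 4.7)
The plan is to view the pair $(\rho^{(n)}, W^{(n)})$ with random initial condition as a mixture of path measures indexed by deterministic initial conditions, and then invoke the mixture-LDP theorem of Biggins~\cite{Biggins2004}. Since $W^{(n)}(0)=0$ almost surely, the joint law decomposes as
\begin{equation*}
\PP^{(n)}(A) \;=\; \int_{\Z} \PP^{(n)}_{\rho_0}(A)\, \mu^{(n)}_0(d\rho_0),
\end{equation*}
where $\mu^{(n)}_0 := \Law(\rho^{(n)}(0))$ and $\PP^{(n)}_{\rho_0}$ denotes the law on path space of $(\rho^{(n)}, W^{(n)})$ conditional on the deterministic initial condition $\rho^{(n)}(0) = \rho_0$, $W^{(n)}(0) = 0$. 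By hypothesis $\mu^{(n)}_0$ satisfies the LDP with rate $\I_0$, and the assumption~\eqref{eq:ldp ldp} provides the path-level LDP for each $\PP^{(n)}_{\rho_0}$ with good rate functional
\begin{equation*}
\J_{\rho_0}(\rho,w) \;:=\; \int_0^T\! \L(\rho(t), \dot w(t))\,dt \quad \text{if } \rho(0)=\rho_0,
\end{equation*}
and $+\infty$ otherwise.

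Next, I would invoke Biggins' mixture theorem, which asserts that under appropriate uniformity of the parametrised LDP in $\rho_0$, the mixture measure satisfies an LDP with rate
\begin{equation*}
\J^{\mathrm{mix}}(\rho,w) \;=\; \inf_{\rho_0\in\Z} \big\{ \I_0(\rho_0) + \J_{\rho_0}(\rho,w) \big\}.
\end{equation*}
The hard constraint $\rho(0)=\rho_0$ embedded in $\J_{\rho_0}$ collapses the infimum to the single point $\rho_0=\rho(0)$, yielding precisely the rate $\I_0(\rho(0)) + \int_0^T \L(\rho(t), \dot w(t))\,dt$ asserted in~\eqref{eq:ldp mixture}. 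Heuristically, this matches the informal identity $\PP^{(n)}(\rho^{(n)} \approx \rho,\,W^{(n)}\approx w) \sim \PP^{(n)}(\rho^{(n)}(0)\approx \rho(0))\cdot\PP^{(n)}_{\rho(0)}(\rho^{(n)}\approx \rho, W^{(n)}\approx w)$.

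The main obstacle is verifying the regularity hypothesis of Biggins' theorem, namely that the parametrised LDPs for $\PP^{(n)}_{\rho_0}$ hold uniformly enough in $\rho_0$. One typically needs exponential tightness of the family $\{\PP^{(n)}_{\rho_0}\}_{n,\rho_0}$ together with an equicontinuity-type condition relating $\J_{\rho_0}$ to $\rho_0$. In the concrete Markov-process examples of Section~\ref{sec:examples} this uniformity follows from the Feller property of $\Q^{(n)}$ and from continuity of the limiting Hamiltonian $\H(\rho,\zeta)$ in $\rho$; at the formal level of Theorem~\ref{th:ldp} it must be taken as an additional hypothesis implicit in the assumed LDP structure, which is consistent with the deliberately formal nature of the statement.
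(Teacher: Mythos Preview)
Your proposal is correct and matches the paper's approach exactly: the paper does not give a proof but simply cites Biggins~\cite{Biggins2004}, and your sketch is precisely how one applies that mixture theorem here. If anything, you have supplied more detail than the paper itself, including the correct observation that the infimum over $\rho_0$ collapses because of the hard constraint $\rho(0)=\rho_0$.
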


\begin{remark}
The abstract framework introduced in Subsection~\ref{subsec:abstract setup} automatically fixes the state $\rho(0)=\phi\lbrack0\rbrack$, which coincides with deterministic initial conditions in context of large deviations. Strictly speaking, to work with varying random initial conditions would require additional flexibility in the abstract framework. This can be achieved by either replacing the mapping $\phi$ (recall Definition~\ref{def:state-flux-triple}) by a family of mappings $(\phi_{\rho(0)})_{\rho(0)}$, or by keeping a fixed reference state $\phi\lbrack0\rbrack$, and redefining the initial integrated flux as $w(0)\in\phi^{-1}\lbrack\rho(0)\rbrack$, exploiting the surjectivity of $\phi$. To keep the notation simple, we stick to the setup of a deterministic initial condition, and with a slight abuse of notation always tacitly assume that $\rho(t)=\phi\lbrack w(t)\rbrack=\phi_{\rho(0)}(w(t))$.
\end{remark}

\subsection{Quasipotential}\label{sec:QuasiLDP}
We now motivate Definition~\ref{def:I0} of the quasipotential $\QP$. The following result is largely  known in the literature, see for instance~\cite[Sec.~2.2]{BDSGJLL2002}, \cite[Sec.~3.3]{Bouchet2020},~~\cite[Sec.~4]{BorkarSundaresan13}  and~\cite[Cor.~2]{JiaJiangLi21}, although it is not often made explicit at the level of generality used in this section.   
\begin{theorem}\label{th:LDP QP} Assume that the Markov process $\rho\super{n}(t)$ satisfies the large-deviation principle~\eqref{eq:ldp contraction} and has an invariant measure $\Pi\super{n}\in\P(\Z)$ that satisfies the large-deviation principle
\begin{equation}\label{eq:inv ldp}
  \Pi\super{n}\big(\mu\super{n}\approx \mu\big)\sim e^{-n\QP(\mu)},
\end{equation}
where $\mu\super{n}$ denotes a random variable distributed with $\Pi\super{n}$. Then we have
\begin{enumerate}[label=(\roman*)]
\item $\displaystyle \QP(\mu) \equiv \inf_{\substack{\hat\rho\in C^1_b([0,T];\Z):\\ \hat\rho(T)=\mu}} \Big\{ \QP\big(\hat\rho(0)\big) + \int_0^T\!\hat\L\big(\hat\rho(t),\dot{\hat\rho}(t)\big)\,dt \Big\}$ \quad for any $T\geq0$,
\eqnum\label{eq:V=V+L}
\item\label{prop:I0}
 $\displaystyle \H\big(\mu,d\phi_\mu\tp d\QP(\mu)\big)= \hat\H\big(\mu,d\QP(\mu)\big)\equiv0$, 
\end{enumerate}
where $\hat\L,\hat\H$ are defined in Proposition~\ref{prop:LDP-contract}.
\end{theorem}

Note that \eqref{eq:V=V+L} implies that $\V$ is always a Lyapunov function along the zero-cost dynamics, which can also be deduced from the decomposition~\eqref{eq:FIIR Fsym}. 

\begin{proof}[Formal proof]
For arbitrary $T>0$ and fixed deterministic initial condition $\rho\super{n}(0)=\rho(0)$, the state $\rho\super{n}_T$ satisfies the large-deviation principle~\cite[Thm.~4.2.1]{DemboZeitouni09},
\begin{align}
  P_T\super{n}\big(d\mu\mid\rho(0)\big)
    &:=
  \PP\super{n}\big( \rho\super{n}(T) \approx \mu \mid \rho\super{n}(0)=\rho(0)\big) 
    \sim
  e^{-nI_T(\mu\mid\rho(0))}, \quad\text{with} \notag\\
  I_T(\mu\mid\rho(0))&:=\inf_{\substack{\hat\rho\in C^1_b([0,T];\Z):\\\hat\rho(0)=\rho(0),\hat\rho(T)=\mu}} \,\,\int_0^T\!\hat\L\big(\hat\rho(t),\dot{\hat\rho}(t)\big)\,dt.
\label{eq:ConLag}
\end{align}
By definition the invariant measure is invariant under the transition probability, i.e.\ for any $T>0$,
\begin{equation*}
  \Pi\super{n}(d\mu)=\int\!P_T\super{n}(d\mu\mid\rho(0))\Pi\super{n}(d\rho(0)).
\end{equation*}
Hence the large-deviation functional of the left-hand side is equal to the large-deviation rate of the right-hand side, which using a mixing argument~\cite{Biggins2004} is given by
\begin{align*}
  \QP(\mu)
    =\inf_{\rho(0)\in\Z} \big\{ \QP(\rho(0)) + I_T\big(\mu\mid\rho(0)\big) \big\} 
    =\inf_{\rho(0)\in\Z} \inf_{\substack{\hat\rho\in C^1_b([0,T];\Z):\\\hat\rho(0)=\rho(0),\hat\rho(T)=\mu}} \Big\{ \QP(\rho(0)) +  \,\,\int_0^T\!\hat\L\big(\hat\rho(t),\dot{\hat\rho}(t)\big)\,dt \Big\}
\end{align*}
which proves the first claim. 
%
From here on the arguments are purely macroscopic. We proceed by noting that 
\begin{equation*} 
  \Xi_T(\rho):=\inf_{\substack{\hat\rho\in C^1_b([0,T];\Z):\\ \hat\rho(T)=\rho}} \QP\big(\hat\rho(0)\big) + \int_0^T\!\hat\L\big(\hat\rho(t),\dot{\hat\rho}(t)\big)\,dt,
\end{equation*}
which has the form of the value function from classical control theory, and hence solves the Hamilton-Jacobi-Bellman equation
\begin{equation}\label{eq:HJB}
  \dot \Xi_T(\rho) = -\hat\H\big(\rho,d \Xi_T(\rho)\big),
  \quad
  \Xi_0(\rho) = \QP(\rho).
\end{equation}
We have already shown that $\Xi_T\equiv\QP$ does not depend on $T$, and therefore $\dot\Xi_T(\rho)\equiv 0$, which proves the second claim.
\end{proof}

\begin{remark}
Strictly speaking, $\QP$ should be a \emph{viscosity solution} of the Hamilton-Jacobi-Bellman~\eqref{eq:HJB} and hence also of the stationary version Theorem~\ref{th:LDP QP}\ref{prop:I0}. However, it is not precisely clear to us which boundary conditions should be imposed in the definition of the viscosity solution. This issue is particularly challenging since most classical Hamilton-Jacobi-Bellman theory is developed for quadratic $\hat\H$ only. Therefore, Theorem~\ref{th:LDP QP}\ref{prop:I0} should be seen as formal. We remind the reader that a viscosity solution $\QP(\rho)$ is a solution in the classical sense at points of differentiability. At least on a formal level, this already suffices for the applications in this paper.
\end{remark}

\begin{remark}\label{rem:non-unique inv meas}
In Theorem~\ref{th:LDP QP}\ref{prop:I0} we do not require that the invariant measure is unique, neither do we claim that the quasipotential $\QP(\rho)$ will be unique. 
In particular, we do not require stable points $\pi\in\Z$ for which $\hat\L(\pi,0)=0$ to be unique. In case of uniqueness, the quasipotential from Theorem~\ref{th:LDP QP}\ref{prop:I0} will also satisfy the classical definition of the quasipotential~\cite{Freidlin1994}
\begin{equation*}
  \QP(\rho)=\inf_{\substack{\hat\rho\in C^1_b(-\infty,0;\Z):\\ \hat\rho(0)=\rho}} \int_{-\infty}^0\!\hat\L\big(\hat\rho(t),\dot{\hat\rho}(t)\big)\,dt.
\end{equation*}
In case of multiple stable points, one usually defines a family of \emph{non-equilibrium quasipotentials} indexed by the stable points~\cite{Freidlin1994}. Any one of these will also satisfy Theorem~\ref{th:LDP QP}\ref{prop:I0}, which is sufficient for our purpose. Therefore the abstract theory from Section~\ref{sec:abstract} can  be constructed with any of these quasipotentials.
\end{remark}

\subsection{Time reversal}\label{sec:time-rev}

In the following proposition we relate the large-deviation rate functions for Markov processes and their time-reversed counterparts, which motivates the notion of reversed L-function introduced in Definition~\ref{def:time-reversedL}. Since the proof below is standard in MFT, we only outline the proof idea for completeness.

\begin{prop}[{\cite[Sec.~II.C]{BDSGJLL2015MFT}, \cite[Sec.~4.2]{Renger2018a}}]\label{prop:LDP time reversal}
Let $\big(\rho\super{n}(t),W\super{n}(t)\big)$ be a Markov process with random initial distribution $\Pi\super{n}$ for $\rho\super{n}(0)$ and $W\super{n}(0)=0$ almost surely, where $\Pi\super{n}\in\P(\Z)$ is the invariant measure of $\rho\super{n}(t)$. Define the time-reversed process \footnote{This construction requires a vector structure on $\W$. For all applications that we have in mind this holds trivially, as long as we work with net fluxes (see the discussion in Example~\ref{ex:IPFG-diss-pot}).}
\begin{equation*}
  \overleftarrow \rho\super{n}(t):=\rho\super{n}(T-t), \quad
  \overleftarrow W\super{n}(t):=W\super{n}(T-t) - W\super{n}(T).
\end{equation*}
Assume that $\Pi\super{n}$ satisfies a large-deviation principle~\eqref{eq:inv ldp},   $\big(\rho\super{n}(t),W\super{n}(t)\big)$ with deterministic initial condition satisfies a large-deviation principle~\eqref{eq:ldp ldp} with cost function $\L$, and $\big(\overleftarrow\rho\super{n}(t),\overleftarrow W\super{n}(t)\big)$ with deterministic initial condition  satisfies a large-deviation principle~\eqref{eq:ldp ldp} with cost function $\overleftarrow\L$. Then for any $(\mu,j)\in\Z\times\W$, $\overleftarrow\L$ is related to $\L$ and $\QP$ via the relation
\begin{equation*}
  \overleftarrow\L(\mu,j) = \L(\mu,-j) + \langle d\phi_\rho\tp d\QP(\mu),j\rangle.
\end{equation*}
\end{prop}
\begin{proof} Note that if $\rho\super{n}(0)$ is distributed according to $\Pi\super{n}$, then so is $\overleftarrow\rho\super{n}(0)$, and if $W\super{n}(0)=0$ almost surely, then $\overleftarrow W\super{n}(0)=0$ almost surely as well. Since 
\begin{equation*}
  \PP\super{n}\big( \big(\rho\super{n},W\super{n}\big) \in (d\rho,dW)\big) = \PP\super{n}\big( \big(\overleftarrow\rho\super{n},\overleftarrow W\super{n}\big) \in (d\overleftarrow\rho,d\overleftarrow W)\big),
\end{equation*}
using Proposition~\ref{prop:ldp mixture},  we find for all paths $(\rho,w)$,
\begin{equation*}
 \QP(\rho(0)) + \int_0^T\!\L\big(\rho(t),\dot w(t)\big)\,dt = \QP(\rho(T)) + \int_0^T\!\overleftarrow\L\big(\rho(t),-\dot w(t)\big)\,dt.
\end{equation*}
Since the equality above holds for any $T>0$, we can write
\begin{align*}
  \big\langle d\phi_{\rho(0)}\tp d\QP(\rho(0)),\dot w(0)\big\rangle &= \big\langle d\QP(\rho(0)),\dot\rho(0)\big\rangle 
  = \lim_{T\to0} \frac{\QP(\rho(T))-\QP(\rho(0))}{T} \\
&= \lim_{T\to0} \frac1T \int_0^T\!\big\lbrack \L(\rho(t),\dot w(t)) - \overleftarrow\L(\rho(t),-\dot w(t))\big\rbrack\,dt= \L\big(\rho(0),\dot w(0)) - \overleftarrow\L\big(\rho(0),-\dot w(0)\big),
\end{align*}
for any $\rho(0)$ and $\dot w(0)$ (assuming sufficient regularity on $t\mapsto\L(\rho(t),\dot w(t)) - \overleftarrow\L(\rho(t),-\dot w(t))$).  The claimed result then follows by choosing any path $\rho,w$ for which $\rho(0)=\mu$ and $\dot w(0)=j$.
\end{proof}

A special and important case of the previous result pertains to detailed balance.

\begin{corollary} Let $\big(\rho\super{n}(t),W\super{n}(t)\big)$ and $\big(\overleftarrow\rho\super{n}(t),\overleftarrow W\super{n}(t)\big)$ be as in Proposition~\ref{prop:LDP time reversal}. If, under initial distribution $\Pi\super{n}\in\P(\Z)$ of $\rho\super{n}(0)$ and $\overleftarrow\rho\super{n}(0)$ and $W\super{n}(0)=\overleftarrow W\super{n}(0)=0$ almost surely,
\begin{equation}
  \PP\super{n}\big( \big(\rho\super{n},W\super{n}\big) \in (d\rho,dW)\big) = \PP\super{n}\big( \big(\overleftarrow\rho\super{n},\overleftarrow W\super{n}\big) \in (d\rho,dW)\big),
\label{eq:generalised DB}
\end{equation}
then $\L=\overleftarrow\L$.
\label{cor:generalised DB}
\end{corollary}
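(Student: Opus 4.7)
The plan is to combine the assumed equality of path laws in \eqref{eq:generalised DB} with the uniqueness of the large-deviation rate functional to derive a pointwise identity between $\L$ and $\overleftarrow\L$. Since both $\big(\rho\super{n},W\super{n}\big)$ and $\big(\overleftarrow\rho\super{n},\overleftarrow W\super{n}\big)$ start with $\Pi\super{n}$ as initial distribution for the state and zero initial flux, and since Proposition~\ref{prop:ldp mixture} (the mixing principle) tells us the global rate functional on path space is the sum of the initial cost $\QP(\rho(0))$ and the integrated path cost, the equality of laws will force the equality of global rate functionals.

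Concretely, first I would observe that \eqref{eq:generalised DB} means that for each $n$, the two processes induce the same probability measure on the path space $C([0,T];\Z\times\W)$. Large-deviation principles identify rate functionals uniquely, so applying Proposition~\ref{prop:ldp mixture} to both sides yields, for every admissible path $(\rho,w)$ with $w(0)=0$,
\begin{equation*}
  \QP\big(\rho(0)\big) + \int_0^T\!\L\big(\rho(t),\dot w(t)\big)\,dt
  \;=\;
  \QP\big(\rho(0)\big) + \int_0^T\!\overleftarrow\L\big(\rho(t),\dot w(t)\big)\,dt.
\end{equation*}
The initial-cost terms cancel, leaving an equality of the integrated actions for every trajectory $(\rho,w)$.

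Next I would localise in time to extract pointwise equality. Fix any $\mu\in\Z$ and any $j\in T_\mu\W$, and pick a smooth trajectory $(\rho,w)$ with $\rho(0)=\mu$ and $\dot w(0)=j$. Dividing the cancelled identity above by $T$ and sending $T\to 0$ (under mild regularity of $t\mapsto \L(\rho(t),\dot w(t))$ and the analogous map for $\overleftarrow\L$, exactly as in the proof of Proposition~\ref{prop:LDP time reversal}) gives
\begin{equation*}
  \L(\mu,j)=\overleftarrow\L(\mu,j).
\end{equation*}
Since $\mu$ and $j$ were arbitrary, this is the claimed identity $\L=\overleftarrow\L$.

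Note that, as a consistency check, substituting this equality into the relation
\[
  \overleftarrow\L(\mu,j)=\L(\mu,-j)+\langle d\phi_\mu\tp d\QP(\mu),j\rangle
\]
from Proposition~\ref{prop:LDP time reversal} recovers precisely the symmetric-dissipation relation of Lemma~\ref{lem:symmetric dissipation}(ii), confirming that generalised detailed balance forces the antisymmetric force to vanish. The only genuine subtlety is justifying the $T\to 0$ localisation step, which requires enough regularity of $\L$ in its first argument along smooth trajectories; this is standard and is already used implicitly in Proposition~\ref{prop:LDP time reversal}, so no new machinery is needed.
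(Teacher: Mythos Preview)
Your proof is correct and matches the paper's implicit reasoning: the paper states this corollary without proof, treating it as immediate from the setup of Proposition~\ref{prop:LDP time reversal}, and your argument via uniqueness of large-deviation rate functionals, mixing (Proposition~\ref{prop:ldp mixture}), and the same $T\to0$ localisation used in that proposition is exactly the intended route. The consistency check with Lemma~\ref{lem:symmetric dissipation}(ii) is a nice addition but not part of the proof proper.
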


For the applications that we have in mind, the condition~\eqref{eq:generalised DB} holds precisely when $\rho\super{n}(t)$ is in detailed balance with respect to $\Pi\super{n}$, see for example \cite[Prop.~4.1]{Renger2018a}. The relation $\L=\overleftarrow\L$ is the time-reversal symmetry from \cite{MielkePeletierRenger14}, which implies that $\L$ induces a gradient flow, or $\Fasy=0$ in the context of this paper.

\section{Zero-cost velocity for IPFG antisymmetric L-function}
\label{sec:IPFG antisym flow}

In Subsection~\ref{subsec:flows} we argued that the both the purely symmetric flux and velocity are dissipative, that is, they are generalised gradient flows of the energy $\frac12\V$ (and $\frac12\V^\W$ respectively). Moreover, $\L_\Fsy$ \emph{defines} the variational structure of those gradient flows via the equalities \eqref{eq:FIIR-EDI} and \eqref{eq:FIIR-EDIW}.

The interpretation of $\L_{F^\asym}$ is more complicated. In general $\L_{F^\asym}$ will not have $\QP$ as its quasipotential, and using Lemmas~\ref{lem:symmetric dissipation} and \ref{lem:preFIR} for any $\rho\in\Dom_\symdiss(F^\asym)$ and $j\in T_\rho\W$ it
satisfies the time-reversal relation
\begin{equation*}
  \L_{-F^\asym}(\rho,j)=\L_{F^\asym}(\rho,-j).
\end{equation*}
This relation in fact holds for any tilted L-function, but $-F^\asym$ can be interpreted as the time-reversed counterpart of $F^\asym$ in the sense that $\overleftarrow{F^\sym+F^\asym}=F^\sym-F^\asym$ (see Remark~\ref{rem:rev-HL-relations}). Formally this means that time-reversal reverses the fluxes, which is a physical indication that $\L_{F^\asym}$ might correspond to Hamiltonian dynamics, as proposed in  Conjecture~\ref{conj}.

In this section we illustrate this principle for the IPFG example with L-function $\L$ from Example~\ref{eq:indep evolution flux}. As far as we are aware this is 
has not been studied in the literature, and as a first step we will focus solely on the trajectories of the zero-cost velocity $u(t)=\dot\rho(t)=u^0(\rho(t))$ of $\L_{F^\asym}$, largely ignoring fluxes as well as the variational structure.

Let $(\rho,j)$ satisfy $\L_{F^\asym}\big(\rho(t),j(t)\big)=0$ or equivalently $j(t)\in\partial\Phi^*\big(\rho(t),F^\asym(\rho(t))\big)$, where the subdifferential is with respect to the second variable. Substituting $\lambda=\tfrac12$ in $\L_{F-2\lambda\Fsy}$ (defined in Example~\ref{ex:IPFG-FIIR}), for any $x\in\X$, $\rho:[0,T]\to\P(\X)$ satisfies the ODE \footnote{Although $F^\asym$ is only defined on the interior, this ODE can be defined on the whole domain by continuous extension of $d\Phi^*(\rho,F^\asym)$.}
\begin{equation}\label{eq:IPFG antisymmetric ODE}
  \dot\rho_x(t) = -\Ddiv_x j(t) = \sum_{\substack{y\in\X\\y\neq x}} \Bigl( Q_{yx} \sqrt{\mfrac{\pi_y}{\pi_x}} - Q_{xy}\sqrt{\mfrac{\pi_x}{\pi_y}}\Bigr) \sqrt{\rho_x(t)\,\rho_y(t)}.
\end{equation}

Introducing the change of variables $\omega_x(t):=\sqrt{\rho_x(t)}$, the zero-cost velocity~\eqref{eq:IPFG antisymmetric ODE} transforms into a linear ODE with a matrix $A\in\mathbb R^{\X\times\X}$, i.e.\
\begin{align}\label{eq:IPFG antisymmetric ODE u} 
  \dot \omega(t) = \frac12 A\omega(t), \quad \text{with} \quad A_{xy}:= Q_{yx} \sqrt{\mfrac{\pi_y}{\pi_x}} - Q_{xy}\sqrt{\mfrac{\pi_x}{\pi_y}}.
\end{align}
Solutions to this equation have a nice geometric interpretation, see Figure~\ref{fig:rotations in u-space} for an example in three dimensions. Clearly, $\lvert\omega(t)\rvert_2^2=\lvert\rho(t)\rvert_1=1$ and so the solutions are confined to 
the unit sphere $S^{\X-1}$. On the other hand, the matrix $A$ is skewsymmetric with imaginary eigenvalues and represents rotations around the axis $\sqrt{\pi}$, implying that the solutions are confined to a plane perpendicular to $\sqrt{\pi}$. Therefore, solutions $\omega(t)$ lie on the intersection of these planes with the unit sphere, resulting in periodic orbits that conserve the distance of the plane to the origin. In the following result we show that this transformed system is indeed a Hamiltonian system with a suitable energy and Poisson structure which satisfies the Jacobi identity (see Lemma~\ref{lem:JacId-Alternate} for a useful alternative characterisation of the Jacobi identity in our context).


\begin{prop}
\label{prop:IPFG-Ham-trans}
The ODE~\eqref{eq:IPFG antisymmetric ODE u} admits a Hamiltonian structure $(\mathbb R^{\X\times\X},\tilde\E,\widetilde\pJ)$, i.e.\ $\dot \omega = \widetilde\pJ(\omega)\grad\tilde\E(\omega)$, where the linear energy $\tilde\E:\mathbb R^{\X}\rightarrow\mathbb R$ and Poisson structure $\widetilde\pJ:\mathbb R^{\X}\rightarrow \mathbb R^{\X\times\X}$ are given by 
\begin{equation*}
  \tilde \E(\omega):=1-\sqrt{\pi}\cdot \omega, \qquad \widetilde\pJ(\omega):=\frac12\Bigl(\sqrt{\pi}\otimes \left(A\omega\right) - \left(A\omega\right)\otimes \sqrt{\pi}\Bigr).
\end{equation*}
Here $\omega\cdot v$ is the standard Euclidean inner product and $\omega\otimes v$ is the outer product of vectors $\omega,v$. 
\end{prop}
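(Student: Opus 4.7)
The plan is to check three things: the ODE identity $\dot\omega = \widetilde\pJ(\omega)D\tilde\E(\omega)$, skew-symmetry of $\widetilde\pJ(\omega)$, and the Jacobi identity for the corresponding bracket (cf.\ Lemma~\ref{lem:JacId-Alternate}). A single algebraic fact underpins all three steps. Writing $A_{xy} = Q_{yx}\sqrt{\pi_y/\pi_x} - Q_{xy}\sqrt{\pi_x/\pi_y}$ and using both the row-sum $\sum_y Q_{xy}=0$ and the invariance $\sum_x \pi_x Q_{xy}=0$ (with an index swap in one of the two terms) yields
\begin{equation*}
  \sum_{x\in\X} \sqrt{\pi_x}\,A_{xy} = 0 \quad \text{for every } y\in\X.
\end{equation*}
Since $A$ is skew-symmetric by construction, this also means $\sqrt\pi\in\ker A$, and therefore $(A\omega)\cdot\sqrt\pi = 0$ for every $\omega\in\RR^{|\X|}$.

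For the ODE identity, $D\tilde\E(\omega)=-\sqrt\pi$ and $|\sqrt\pi|^2 = \sum_x\pi_x = 1$, so expanding the outer products yields
\begin{equation*}
  \widetilde\pJ(\omega)D\tilde\E(\omega)
    = -\tfrac12\sqrt\pi\,\bigl((A\omega)\cdot\sqrt\pi\bigr) + \tfrac12 A\omega\,\bigl(\sqrt\pi\cdot\sqrt\pi\bigr)
    = \tfrac12 A\omega,
\end{equation*}
the first term vanishing by the nullspace identity. Skew-symmetry of $\widetilde\pJ(\omega)$ is immediate from the antisymmetric combination $u\otimes v - v\otimes u$.

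The main obstacle is the Jacobi identity. Componentwise $\widetilde\pJ_{ij}(\omega)=\tfrac12(\sqrt{\pi_i}(A\omega)_j - (A\omega)_i\sqrt{\pi_j})$, so $\partial_l\widetilde\pJ_{jk}(\omega) = \tfrac12(\sqrt{\pi_j}A_{kl} - A_{jl}\sqrt{\pi_k})$. In the product $\sum_l \widetilde\pJ_{il}\partial_l\widetilde\pJ_{jk}$ two of the four cross-terms vanish by the nullspace identity applied to the $j$- and $k$-indexed rows of $A$, while the remaining two collapse via $\sum_l(A\omega)_l A_{kl} = (A^2\omega)_k$ to
\begin{equation*}
  \sum_l \widetilde\pJ_{il}(\omega)\,\partial_l\widetilde\pJ_{jk}(\omega)
    = \tfrac14\sqrt{\pi_i}\bigl[\sqrt{\pi_j}(A^2\omega)_k - \sqrt{\pi_k}(A^2\omega)_j\bigr].
\end{equation*}
Summing this and its two cyclic permutations in $(i,j,k)$, the six terms cancel in pairs (e.g., the two coefficients of $(A^2\omega)_k$ are $\sqrt{\pi_i}\sqrt{\pi_j}$ and $-\sqrt{\pi_j}\sqrt{\pi_i}$, coming from the $(i,j,k)$ and $(j,k,i)$ orderings), which completes the verification.
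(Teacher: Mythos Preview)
Your proof is correct and follows essentially the same route as the paper. The paper factors the argument through a general result (Theorem~\ref{thm:lin-Ham} in Appendix~\ref{app:HamFlow}) for linear ODEs $\dot\omega=\tfrac12A\omega$ with $A\omega_*=A\tp\omega_*=0$, and then specialises to $\omega_*=\sqrt\pi$; you instead carry out the same three verifications (ODE identity, skew-symmetry, Jacobi) directly and componentwise. The key inputs --- $A\sqrt\pi=0$, $|\sqrt\pi|^2=1$, and the cyclic cancellation of the six $\sqrt{\pi_i}\sqrt{\pi_j}(A^2\omega)_k$-type terms --- are identical in both approaches; the paper's version in the appendix just phrases the Jacobi step via the directional-derivative criterion of Lemma~\ref{lem:JacId-Alternate} rather than in indices.
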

\begin{proof}
In Appendix~\ref{app:HamFlow} we present a Hamiltonian structure for a general class of ODEs, which includes the transformed system~\eqref{eq:IPFG antisymmetric ODE u}. The proof of Proposition~\ref{prop:IPFG-Ham-trans} follows directly from Theorem~\ref{thm:lin-Ham} with the choice $d=|\X|$, $\omega_*=\sqrt{\pi}$ and observing that $|\omega_*|^2=\sum_x \pi_x = 1$ and $A\sqrt{\pi}=A^T\sqrt{\pi}=0$ since $\pi$ is the invariant solution corresponding to the original dynamics~\eqref{eq:IPFG antisymmetric ODE}.
\end{proof}

We would now like to transform the Hamiltonian structure of the transformed ODE~\eqref{eq:IPFG antisymmetric ODE u} back to obtain a Hamiltonian structure for the original non-linear equation~\eqref{eq:IPFG antisymmetric ODE}. This transforms the positive octant of the sphere in Figure~\ref{fig:rotations in u-space} to the simplex in Figure~\ref{fig:IPFG solutions}(c). However, transforming back via $\omega_x(t)=\sqrt\rho_x(t)$ is valid only if $\omega_x(t)\geq 0$ for every $x\in\X$. In the following result we state the criterion for this to hold. 

\begin{prop}\label{prop:IPFG-Ham-state}
Define the threshold 
\begin{equation*}
  \sigma:=\min_{x\in\X} \big(1-\sqrt{1-\pi_x}\big),
\end{equation*}
the energy $\E:\mathbb R^{\X}\rightarrow\mathbb R$ and the Poisson structure $\pJ:\mathbb R^{\X}\rightarrow \mathbb R^{\X\times\X}$ as 
\begin{equation*}
 \E(\rho):= 1- \sqrt{\pi}\cdot\sqrt{\rho}, \quad \left(\pJ(\rho)\right)_{xy}:=2 \sum_{z\in \X}  \bigl( \sqrt{\pi_x} A_{yz} - \sqrt{\pi_y} A_{xz}\bigr)\sqrt{\rho_x \rho_y\rho_z},
\end{equation*}
where $A$ is defined in~\eqref{eq:IPFG antisymmetric ODE u}.
If the energy of the initial distribution $\rho^0\in \P(\X)$ for the ODE~\eqref{eq:IPFG antisymmetric ODE} satisfies $0\leq \E(\rho^0)<\sigma$, then~\eqref{eq:IPFG antisymmetric ODE} has a unique solution and admits a Hamiltonian structure $(\mathbb R^{\X\times\X},\E,\pJ)$, i.e.\ $\dot \rho = \pJ(\rho)\grad\E(\rho)$. 
If the energy of the initial distribution satisfies  $\E(\rho^0)\geq\sigma$, then~\eqref{eq:IPFG antisymmetric ODE} has non-unique, non-energy-conserving solutions.
\end{prop}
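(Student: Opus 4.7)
The plan is to linearise~\eqref{eq:IPFG antisymmetric ODE} via the componentwise substitution $\omega=\sqrt{\rho}$, which turns it into the linear system~\eqref{eq:IPFG antisymmetric ODE u} whose Hamiltonian structure $(\tilde{\E},\widetilde{\pJ})$ was established in Proposition~\ref{prop:IPFG-Ham-trans}. On the open set where all $\omega_x>0$, the map $\rho\leftrightarrow\omega$ is a smooth diffeomorphism, so one can pull back $(\tilde{\E},\widetilde{\pJ})$ to obtain a Hamiltonian structure in $\rho$-coordinates, and forward uniqueness for $\rho$ is inherited from that of the linear ODE. The entire assertion therefore reduces to determining when the linear orbit stays strictly inside the positive octant of the unit sphere $S^{|\X|-1}$.

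The main step is a geometric analysis. As observed after~\eqref{eq:IPFG antisymmetric ODE u}, the orbit $\omega(t)$ lies on the circle
\begin{equation*}
C_c:=\bigl\{\omega\in\RR^{|\X|}:|\omega|=1,\ \sqrt{\pi}\cdot\omega=c\bigr\},\qquad c:=1-\E(\rho^0),
\end{equation*}
centred at $c\sqrt{\pi}$ with radius $\sqrt{1-c^2}$ in the hyperplane perpendicular to $\sqrt{\pi}$. Decomposing the standard basis vector as $e_x=\sqrt{\pi_x}\sqrt{\pi}+v_x$ with $v_x\perp\sqrt{\pi}$ and $|v_x|=\sqrt{1-\pi_x}$, one reads off
\begin{equation*}
\min_{\omega\in C_c}\omega_x=c\sqrt{\pi_x}-\sqrt{1-c^2}\sqrt{1-\pi_x},
\end{equation*}
so requiring strict positivity for every $x\in\X$ is equivalent to $c^2>1-\pi_x$ for every $x$, i.e.\ $\E(\rho^0)<1-\sqrt{1-\pi_x}$ for every $x$, which is exactly $\E(\rho^0)<\sigma$.

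Under this condition, define $\rho(t):=\omega(t)^2$ where $\omega$ solves~\eqref{eq:IPFG antisymmetric ODE u} with $\omega(0)=\sqrt{\rho^0}$. Differentiation gives $\dot\rho_x=\omega_x(A\omega)_x=\sum_y A_{xy}\sqrt{\rho_x\rho_y}$, so $\rho$ solves~\eqref{eq:IPFG antisymmetric ODE}, and uniqueness follows from that of the linear ODE combined with the smoothness of $\omega\mapsto\omega^2$ on the open positive octant. The Hamiltonian structure is then obtained by the standard pullback under the diffeomorphism $\omega=\sqrt{\rho}$, whose Jacobian is $\tfrac12\,\mathrm{diag}(1/\sqrt{\rho})$: setting $\E(\rho):=\tilde{\E}(\sqrt{\rho})$ and
\begin{equation*}
\pJ(\rho):= 4\,\mathrm{diag}(\sqrt{\rho})\,\widetilde{\pJ}(\sqrt{\rho})\,\mathrm{diag}(\sqrt{\rho}),
\end{equation*}
a direct calculation using the formula for $\widetilde{\pJ}$ from Proposition~\ref{prop:IPFG-Ham-trans} reproduces the expression for $\pJ$ in the statement. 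Skew-symmetry and the Jacobi identity are inherited from $\widetilde{\pJ}$ under the diffeomorphism, and the Hamiltonian identity $\dot\rho=\pJ(\rho)D\E(\rho)$ follows automatically.

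When $\E(\rho^0)\geq\sigma$ the linear orbit meets the hyperplane $\{\omega_{x^*}=0\}$ for some $x^*\in\X$ at a finite time $t_*$ (tangentially if $\E(\rho^0)=\sigma$, transversally if $\E(\rho^0)>\sigma$). At $\rho_{x^*}=0$ the right hand side of~\eqref{eq:IPFG antisymmetric ODE} fails to be locally Lipschitz because of the factor $\sqrt{\rho_{x^*}\rho_y}$, producing the familiar $\dot y=\sqrt{y}$ type non-uniqueness: the \emph{absorbing} solution $\rho_{x^*}(t)\equiv 0$ for $t\geq t_*$ (with the remaining coordinates evolving by the restriction of~\eqref{eq:IPFG antisymmetric ODE} to the face $\{\rho_{x^*}=0\}$) coexists with a second solution that leaves the face again (in the tangent case this is simply the smooth continuation $\rho_{x^*}=\omega_{x^*}^2$ of the linear orbit; in the transversal case one delays the relaunch to a later time $\tilde t>t_*$ at which the sign of $\sum_y A_{x^*y}\sqrt{\rho_y(\tilde t)}$ permits $\rho_{x^*}$ to grow away from zero). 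Using $A\tp\sqrt{\pi}=0$, a short calculation shows that on the face
\begin{equation*}
\mfrac{d}{dt}\bigl(\sqrt{\pi}\cdot\sqrt{\rho(t)}\bigr)=-\tfrac12\sqrt{\pi_{x^*}}\sum_{z\neq x^*}A_{x^*z}\sqrt{\rho_z(t)},
\end{equation*}
which is generically nonzero, so $\E$ is not conserved along the absorbing branch. The only nontrivial step is the geometric identification of $\sigma$; the pullback computation and the non-uniqueness construction are then routine.
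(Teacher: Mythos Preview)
Your proof is correct and follows essentially the same strategy as the paper: linearise via $\omega=\sqrt{\rho}$, invoke Proposition~\ref{prop:IPFG-Ham-trans}, identify the threshold $\sigma$ geometrically, and pull back the Hamiltonian structure through the diffeomorphism. The only notable difference is in how the threshold is computed: the paper minimises $\tilde\E$ over the boundary faces $\{\omega_x=0\}\cap S^{|\X|-1}$ via a Lagrange-multiplier argument, whereas you minimise $\omega_x$ over the level set $C_c$ via the orthogonal decomposition $e_x=\sqrt{\pi_x}\sqrt{\pi}+v_x$; these are dual formulations of the same calculation and yield the same $\sigma$. You also supply more detail than the paper on the non-uniqueness mechanism (the absorbing branch versus relaunch) and explicitly verify that $\E$ is not conserved along the absorbing branch using $A\tp\sqrt{\pi}=0$, which the paper only asserts.
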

\begin{proof}
We first analyse the critical case, where the periodic orbit $\omega(t)$ of \eqref{eq:IPFG antisymmetric ODE u} touches one of the boundaries of $S^{\X-1}\cap\RR^\X_{\geq0}$. The energy level of such an orbit can be calculated by solving the constrained minimisation problem
\begin{equation*}
  \min \big\{\tilde\E(\omega):\omega \in S^{\X-1}, \omega_x=0 \text{ for some } x\in \X\big\} =\min_{x\in\X} \, \min \big\{\tilde\E(\omega):\omega \in S^{\X-1}, \, \omega_x=0 \big\}.
\end{equation*}
Assume $x\in\X$ is optimal. For the interior minimisation problem, the optimal $\omega$ with $\omega_x=0$ solves
\begin{equation*}
  0 = \partial_{\omega_y} \big\lbrack \tilde\E(\omega) + \tfrac12\lambda\lvert\omega\rvert^2_2\big\rbrack = -\sqrt{\pi_y} + \lambda \omega_y, \quad \text{for all }y\neq x,
\end{equation*}
where the Lagrange multiplier $\lambda\geq0$ is such that the constraint $\lvert\omega\rvert_2^2=1$ holds. It follows that $\omega_y=\sqrt{\pi_y}/\sqrt{1-\pi_x}$, and so $\tilde\E(\omega)=1-\sqrt{1-\pi_x}=:\sigma$, yielding the critical case.

Using Proposition~\ref{prop:IPFG-Ham-trans} we thus find that if $\E(\rho^0)=\tilde\E(\omega^0)<\sigma$, the solution $\omega(t)$ of the linear system satisfies $\tilde\E(\omega(t))=\tilde\E(\omega^0)$ and remains positive (coordinate-wise), so that $\rho(t)=\sqrt{\omega(t)}$ solves \eqref{eq:IPFG antisymmetric ODE}, and has the corresponding transformed Hamiltonian structure. Note that this is possible since Poisson structures are preserved by coordinate transformations~\cite[Sec.~4.2]{Mielke1991}. The uniqueness of the thus constructed solution $\rho(t)$ follows since $\sqrt{\rho_x(t)\rho_y(t)}$ is strictly bounded away from zero, and therefore the right hand side of ~\eqref{eq:IPFG antisymmetric ODE} is Lipschitz. 


Now we show the non-uniqueness when $\E(\rho^0)\geq\sigma$, for simplicity with $|\X|=3$ only. The idea is to use the argument above to construct an energy-conserving solution until time $t_1$ it hits a boundary, say $\hat x=0$, a solution that moves along the boundary until an arbitrary time but sufficiently large time $t_1+\delta>0$, and an energy-conserving solution that moves away from the boundary again. See Figure~\ref{fig:IPFG solutions}(c). More precisely, let $\omega^0_x=\sqrt{\rho^0_x}$ and define
\begin{equation*}
	\rho_x(t):=\begin{cases}
		(e^{\tfrac12 A t}\omega^0)_x^2,										&0\leq t<t_1,\\
		(e^{\tfrac12 \bar{A} t}\omega^1)_x^2,				&t_1\leq t\leq t_1+\delta,\\
		(e^{\tfrac12 A t}\omega^2)_x^2,							&t>t_1+\delta	.
	\end{cases}		
\end{equation*}
Here $t_1:=\min\{t\geq0:(e^{\tfrac12 A t}\omega^0)_{\hat x}=0\}$, $\omega^1:=e^{\tfrac12 A t_1}\omega^0$ and $\omega^2:=e^{\tfrac12 A (t_1+\delta)}\omega^1$, and $\bar A_{xy}:= A_{xy}\mathds1_{\{x,y\neq \hat x\}}$. Note that $\delta>0$ must be large enough so that outgoing instead of incoming characteristics cross the boundary $\hat x=0$ and small enough that the corners in the simplex are avoided. It is easily checked that $\rho(t)$ is continuously differentiable and satisfies the ODE~\eqref{eq:IPFG antisymmetric ODE}. Since $\delta>0$ is arbitrary we have constructed an infinite number of solutions.
\end{proof}

\begin{figure}[ht]
\centering
\tikzset{->-/.style={decoration={
  markings,
  mark=at position #1 with {\arrow{>}}},postaction={decorate}}}

\begin{tikzpicture}[scale=2]
\tikzstyle{every node}=[font=\scriptsize];

  \draw[fill=gray] (0,1) arc (90:0:1) .. controls (0.9,-0.12) and (0.1,-0.25) .. (-0.5,-0.25) .. controls (-0.5,0.7) and (-0.12,0.9) .. (0,1);
  \draw[pattern=north west lines, pattern color=lightgray](1.6,0)--(0,1.6)--(-0.8,-0.4)--(1.6,0); 
  \draw[rotate around={45:(0.27,0.4)},->-=0.33,->-=0.67,->-=0.95,fill=lightgray](0.27,0.4) ellipse (0.3 and 0.4);

  \draw[dotted](0,0) --(0.27,0.4);
  \draw(0.27,0.4) -- (0.34,0.5) node[anchor=south]{$\sqrt{\pi}$};

  \filldraw (0.27,0.4) circle (0.01);
  \filldraw (0.34,0.5) circle (0.01);

  \draw[dotted](0,0)--(1.6,0);
  \draw[->](1.6,0)--(1.8,0) node[anchor=south]{$\omega_1$};
  \draw[dotted](0,0)--(0,1.6);
  \draw[->](0,1.6)--(0,1.8) node[anchor=west]{$\omega_3$};
  \draw[dotted](0,0)--(-0.8,-0.4);
  \draw[->](-0.8,-0.4)--(-1,-0.5) node[anchor=south]{$\omega_2$};
\end{tikzpicture}
\caption{For $|\X|=3$, the trajectories $\omega(t)$ rotate around the $\sqrt{\pi}$-axis, and lie at the intersection of the two-dimensional sphere $S^2$ and a plane perpendicular to the $\sqrt{\pi}$-axis. The transformation $\rho_x=\sqrt{\omega}_x$ maps the (octant) sphere to the simplex of Figure~\ref{fig:IPFG solutions}(c).} 
\label{fig:rotations in u-space}
\end{figure}

In the following remark we comment on the role of $\lambda\neq\frac12$ in $\L_{F-2\lambda F^\sym}$.

\begin{remark}
 One can also study the zero-cost velocity associated to $\L_{F-2\lambda F^\sym}$ from \eqref{eq:FIIR Fsym} for  $\lambda\in(0,1)$. For $\lambda<\frac12$, the symmetric part is dominant and the trajectories spiral inwards towards $\pi$, i.e.\ $\pi$ is a spiral sink, and for $\lambda>\frac12$, the antisymmetric part is dominant and the trajectories spiral outwards from $\pi$, i.e.\ $\pi$ is a spiral source (compare with Figure~\ref{fig:IPFG solutions}(c) for $\lambda=\frac12$).  
\end{remark}

\begin{remark}
As pointed out to us by Andr{\'e} Schlichting, the energy $\E(\rho)=\frac12\sum_{x\in\X}(\sqrt{\pi_x}-\sqrt{\rho_x})^2$ is exactly the squared Hellinger distance  between $\rho$ and the steady state $\pi$. At this stage we do not know the physical meaning behind the Hellinger distance, but it appears naturally in the context of purely time-antisymmetric flows.
\end{remark}

\section{Examples}\label{sec:examples}

Throughout Section~\ref{sec:abstract} we applied the abstract theory developed therein to the example of independent Markovian particles. We now apply the abstract theory to three examples of interacting particle systems. In Section~\ref{subsec:zero-range process} we consider the example of zero-range processes with an atypical scaling limit which leads to an ODE system in the limit as opposed to the usual parabolic scaling. Section~\ref{subsec:reacting particle system} deals with the case of chemical reaction networks in complex balance. Finally in Section~\ref{sec:LatticeGas} we consider the case of lattice gases with parabolic scaling (which lead to diffusive systems).


For each of these examples we derive the decompositions in Theorem~\ref{th:FIIRs},
\begin{align*}
  \L(\rho,j)&=\L_{(1-2\lambda)F}(\rho,j) + \RF^\lambda_F(\rho) - 2\lambda \langle F(\rho),j\rangle,\\
  \L(\rho,j)&=\L_{F-2\lambda F^\sym}(\rho,j) + \RF^\lambda_{\Fsy}(\rho) - 2\lambda \langle F^\sym(\rho),j\rangle,\\  
  \L(\rho,j)&=\L_{F-2\lambda F^\asym}(\rho,j) + \RF^\lambda_{\Fasy}(\rho) - 2\lambda \langle F^\asym(\rho),j\rangle,
\end{align*}
and explicitly calculate all the different terms. We stress that these decompositions were \emph{previously unknown} for zero-range processes and chemical reactions; we include the lattice gas example to show that for quadratic cost functions our decompositions coincide with existing results in MFT. 

We expect that by using approximation arguments similar to \cite[Thm~1.6]{HilderPeletierSharmaTse20}, \cite[Sec.~5]{RengerZimmer2021} and~\cite[Part~II.A]{Hoeksema23}, one can derive global-in-time decompositions of the rate functionals $\int_0^T\! \L(\rho(t),j(t))\,dt$; this is beyond the scope of the current paper.

\subsection{Zero-range processes}\label{subsec:zero-range process}

\iparagraph{Microscopic particle system.} To simplify and unify notation, we first consider the irreducible Markov process on a finite graph $\X$ from the IPFG example, with generator (represented by a matrix) $Q\in\RR^{\X\times\X}$, 
and assume that it has a unique and coordinate-wise positive invariant measure $\pi\in\P_+(\X)$. Similar to the setup in Example~\ref{ex:indep LDP} we study the Markov process $(\rho\super{n}(t),W\super{n}(t))$ on $\P(\X)\times\fluxS$, where $\rho\super{n}(t)$ is the particle density of \emph{interacting} particles and $W\super{n}(t)$ is the integrated net flux (both defined in Example~\ref{ex:indep LDP}). The interaction between the particles is so that the jump rate $n\kappa_{xy}(\rho)$ from $x$ to $y$ only depends on the density at the source node $x$ only (``zero-range'')
\begin{equation*}
  \kappa_{xy}(\rho)=\kappa_{xy}(\rho_x)=Q_{xy}\pi_x \eta_x\big(\mfrac{\rho_x}{\pi_x}\big),
\end{equation*}
for a family of functions $\eta_x:\lbrack0,\infty)\to\lbrack0,\infty)$ satisfying:
\begin{enumerate}[label=(\roman*)]
\item each $\eta_x$ is strictly increasing,
\item $\eta_x(0)=0$ and $\eta_x(1)=1$, 
\item $\log\eta_x(z)$ is integrable near $z=0$.
\end{enumerate}
The condition $\eta_x(0)=0$ ensures that $\rho_x\geq 0$, i.e.\ there are no negative densities. The  condition $\eta_x(1)=1$ ensures that $\pi$ is also an invariant measure for the many-particle limit~\eqref{eq:0range-limit}, and is assumed only for convenience (see Remark~\ref{rem:eta-prop} below). The integrability condition is necessary and sufficient for the large-deviation principle to hold~\cite{AAPR21}. Observe that the particular choice $\eta_x\equiv\mathrm{id}$ corresponds to the IPFG model.

The pair $\big(\rho\super{n},W\super{n}(t)\big)$ has the $n$-particle generator 
\begin{align*}
 (\Q\super{n} f)(\rho,w)= n\sumsum_{(x,y)\in\fluxS} &  \kappa_{xy}(\rho_x)\bigl[ f(\rho-\tfrac1n\mathds1_x+\tfrac1n\mathds1_y,w+\tfrac1n\mathds1_{xy})-f(\rho,w)\bigr] \\&
+ \kappa_{yx}(\rho_y)\bigl[ f(\rho-\tfrac1n\mathds1_y+\tfrac1n\mathds1_x,w-\tfrac1n\mathds1_{xy})-f(\rho,w)\bigr].
\end{align*}

As opposed to the typical diffusive scaling for zero-range processes~\cite{BDSGJLL2015MFT}, we keep the graph $\X$ fixed. The many-particle limit for this process as $n\rightarrow\infty$ is the solution to the ODE system~\cite[Sec.~3.1]{RengerZimmer2021}
\begin{equation}\label{eq:0range-limit}
\begin{cases}  
\dot w_{xy}(t) = \kappa_{xy}(\rho_x(t)) - \kappa_{yx}(\rho_y(t)), \  \ & (x,y)\in\fluxS, \\
  \dot\rho_x(t) = -\Ddiv_x \dot w(t), \ \ & x\in\X
\end{cases}
\end{equation}
where
$\Ddiv$ is again the discrete divergence defined in~\eqref{eq:discrete divergence}. 
The Markov process $(\rho\super{n}(t),W\super{n}(t))$ satisfies a large-deviation principle with the rate functional~\eqref{eq:flux LDP} where the corresponding $\L$ and its dual $\H$ are now given by \cite{PattersonRenger2019,GabrielliRenger2020,AAPR21}
\begin{equation}\label{def:Lag-0range}
\begin{aligned}
  \L(\rho,j) &= \inf_{j^+\in \RR^{\fluxS}_{\geq0}} \sumsum_{(x,y)\in\fluxS} \bigl[ s\big(j^+_{xy} \mid \kappa_{xy}(\rho_x)\big) + s\big(j^+_{xy} - j_{xy} \mid \kappa_{yx}(\rho_y)\big)\bigr],\\
  \H(\rho,\zeta)&=\sumsum_{(x,y)\in\fluxS} \bigl[ \kappa_{xy}(\rho_x)\big(e^{\zeta_{xy}}-1\big) + \kappa_{yx}(\rho_y)\big(e^{-\zeta_{xy}}-1\big)\bigr],
\end{aligned}
\end{equation}
and $s(\cdot\mid\cdot)$ is defined in~\eqref{def:entropy-s}.

\iparagraph{State-flux triple and L-function.} The manifolds $\Z,\W$ with the corresponding tangent and cotangent spaces and the map $\phi:\Z\rightarrow\W$  with $d\phi_\rho=-\Ddiv,\ d\phi\tp=\Dnabla$ are exactly as in Example~\ref{ex:IPFG-state-flux-triple}. It is easily checked that $\L$ and $\H$ are convex duals of each other, so that $\L$ is indeed convex and lower semicontinuous.

\iparagraph{Quasipotential.} Define $\QP:\Z\rightarrow\RR\cup\{\infty\}$ as
\begin{equation}\label{def:0-range-quasipot}
  \QP(\rho)=
    \begin{cases}
     \displaystyle \sum_{x\in\X} \int_{0}^{\rho_x}\!\log\eta_x\Big(\mfrac{z}{\pi_x}\Big)\,dz,      &\rho\in\P(\X),\\
      \infty,                                                                   &\text{otherwise},
    \end{cases}
\end{equation}


Note that $\QP$ depends on $Q$ through the steady state $\pi$ only. Moreover, the integral is well-defined due to the integrability condition on $\eta_x$. This function can be found as the large-deviation rate of the explicitly known invariant measure $\Pi\super{n}$ using  Theorem~\ref{th:LDP QP},~\cite[Prop.~3.2]{KipnisLandim1999} and \cite[Sec.~4.1]{GabrielliRenger2020}. However, in the next proposition we show that it is the correct quasipotential without any reference to a microscopic particle system, in the macroscopic sense of Definition~\ref{def:I0}. 

\begin{prop}
The function $\QP$ defined in~\eqref{def:0-range-quasipot} satisfies $\H(\rho,d\phi\tp d\QP(\rho))=0$ at all points of differentiability $\rho\in\P_+(\Z)$ of $\V$.
\label{prop:zero-range QP}
\end{prop}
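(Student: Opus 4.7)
The plan is a direct verification via explicit computation, paralleling the IPFG calculation shown earlier in the excerpt for the generator-based quasipotential.

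First, I would compute $d\QP$ at an interior point $\rho$ where $\rho_x > \bar\rho_x$ for all $x$. Differentiating the integral in \eqref{def:0-range-quasipot} termwise yields
\begin{equation*}
  \bigl(d\QP(\rho)\bigr)_x = \log\eta_x\!\bigl(\tfrac{\rho_x}{\pi_x}\bigr),\qquad x\in\X.
\end{equation*}
Using $d\phi_\rho\tp = \Dnabla$ from Example~\ref{ex:IPFG-state-flux-triple}, for each $(x,y)\in\fluxS$,
\begin{equation*}
  \bigl(d\phi_\rho\tp d\QP(\rho)\bigr)_{xy} = \log\frac{\eta_y(\rho_y/\pi_y)}{\eta_x(\rho_x/\pi_x)}.
\end{equation*}

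Next, I would substitute this into the explicit form of $\H$ in \eqref{def:Lag-0range}. Writing $f_x := \eta_x(\rho_x/\pi_x)$ and using the factorisation $\kappa_{xy}(\rho_x)=Q_{xy}\pi_x f_x$, $\kappa_{yx}(\rho_y)=Q_{yx}\pi_y f_y$, the exponentials collapse and one obtains
\begin{equation*}
  \H\bigl(\rho,d\phi_\rho\tp d\QP(\rho)\bigr) = \sumsum_{(x,y)\in\fluxS} \Bigl[ Q_{xy}\pi_x(f_y - f_x) + Q_{yx}\pi_y(f_x - f_y)\Bigr].
\end{equation*}

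The final step is to show this sum vanishes. I would symmetrise the sum over $\fluxS$ to a sum over all ordered pairs $x\neq y$ by relabelling indices in the $Q_{yx}\pi_y$ term (swapping the roles of $x$ and $y$), obtaining
\begin{equation*}
  \H\bigl(\rho,d\phi_\rho\tp d\QP(\rho)\bigr) = \sumsum_{\substack{x,y\in\X\\ x\neq y}} Q_{xy}\pi_x(f_y - f_x) = \sum_{y\in\X} f_y (Q\tp\pi)_y - \sum_{x\in\X} \pi_x f_x \sum_{y\in\X} Q_{xy},
\end{equation*}
where the diagonal terms cancel trivially. The first sum vanishes since $\pi$ is invariant ($Q\tp\pi=0$), and the second vanishes since $Q$ is a generator ($\sum_y Q_{xy}=0$), which completes the proof.

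There is no real obstacle: the argument is a structural calculation identical in spirit to the IPFG example in the excerpt. The only mildly delicate point is the symmetrisation step from the half-edge sum over $\fluxS$ to a full sum over ordered pairs, which is what allows the two invariance identities for $Q$ and $\pi$ to be used cleanly. Notably, this computation makes transparent why the factor $\log\eta_x(\rho_x/\pi_x)$ in \eqref{def:0-range-quasipot} is the correct choice: it is precisely what is needed for the exponentials in $\H$ to recombine into the linear expression $Q\tp\pi$ acting on the nonlinear profile $f$.
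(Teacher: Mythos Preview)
Your proof is correct and follows essentially the same route as the paper's: both compute $d\QP(\rho)_x=\log\eta_x(\rho_x/\pi_x)$, substitute $\Dnabla d\QP$ into $\H$, simplify using $\kappa_{xy}(\rho_x)=Q_{xy}\pi_x\eta_x(\rho_x/\pi_x)$, and then symmetrise the half-edge sum to a full sum over ordered pairs so that the invariance $Q\tp\pi=0$ (together with the zero row-sum property of $Q$) forces the expression to vanish. The only cosmetic difference is that the paper collects the final sum as $\sum_y \eta_y(\rho_y/\pi_y)\sum_{x\neq y}(\pi_xQ_{xy}-\pi_yQ_{yx})$ rather than splitting it into your two separate vanishing sums.
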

\begin{proof}
At the points of differentiability of $\QP$ we have
\begin{align*}
  \H\big(\rho,d\phi\tp_\rho d\QP(\rho)\big)  &= \H\big(\rho,\Dnabla \log\eta(\tfrac{\rho}{\pi})\big)  = \sumsum_{(x,y)\in\fluxS} \Bigr(\kappa_{xy}(\rho_x) \Bigl[\frac{\eta_y(\rho_y/\pi_y)}{\eta_x(\rho_x/\pi_x)}-1\Bigr] + \kappa_{yx}(\rho_y)\Bigl[\frac{\eta_x(\rho_x/\pi_x)}{\eta_y(\rho_y/\pi_y)}-1\Bigr] \Bigl)\\
& = \sumsum_{\substack{x,y\in\X\\x\neq y}} \bigl(\pi_x Q_{xy}\eta_y\big(\tfrac{\rho_y}{\pi_y}\big)-\pi_x Q_{xy}\eta_x\big(\tfrac{\rho_x}{\pi_x}\big)   \bigr)  =\sum_{y\in\X}\eta_y\big(\tfrac{\rho_y}{\pi_y}\big)\sum_{\substack{x\in\X\\x\neq y}} ( \pi_x Q_{xy}-\pi_y Q_{yx}) = 0,
\end{align*}
where the fourth and fifth equality follows by exchanging indices and the final equality follows since $Q\tp\pi=0$.   
\end{proof}

\begin{remark}\label{rem:eta-prop} Let us discuss the various assumptions on $\eta_x$. Since $\eta_x$ is nonnegative and strictly increasing, it follows that $\QP(\rho)$ is strictly convex for any $\rho\in\P(\X)$, and consequently has a unique minimiser. The property $\eta(1)=1$ ensures that $\pi$ is this unique minimiser of $\QP$. If this condition is not satisfied then, as we show below, one can always construct $\overline Q\in\RR^{\X\times\X}$, $\overline\pi\in \P_+(\X)$ and family $\overline\eta_x$ with $\overline\eta_x(1)=1$, such that 
$\kappa_{xy}(\rho)=\overline Q_{xy}\overline\pi_x \overline\eta_x\big(\mfrac{\rho_x}{\overline\pi_x}\big)$,
$\overline Q\tp\bar\pi=0$, and $\overline\pi$ is the unique stable point of~\eqref{eq:0range-limit}. 
To calculate these modified objects, we minimise $\QP(\rho)$ for $\rho\in\P(\X)$, which gives the minimiser
\begin{equation*}
\overline \pi_x:=\pi_x \eta_x^{-1}(e^{-\lambda}), \quad \text{where} \ \lambda\in\RR \ \text{ satisfies } \ \sum_{x\in\X}\pi_x\eta_x^{-1}(e^{-\lambda})=1,
\end{equation*}
and define
\begin{equation*}
\overline \eta_x(z):=\eta_x\big(z\eta_x^{-1}(e^{-\lambda})\big)e^\lambda, \quad \overline Q_{xy}:= Q_{xy}\frac{e^{-\lambda}}{\eta_x^{-1}(e^{-\lambda})}.
\end{equation*}
It is easily checked that these modified objects satisfy all the properties described above, and one can  work with these objects instead. 
\end{remark}

\iparagraph{Dissipation potential, forces and orthogonality.} As in Example~\eqref{ex:IPFG-diss-pot}, using Definition~\ref{def:F-disspot} the driving force is
\begin{equation*}
  F_{xy}(\rho) = \frac12\log\frac{\kappa_{xy}(\rho_x)}{\kappa_{yx}(\rho_y)}=\frac12\log\frac{\pi_x Q_{xy}\eta_x(\tfrac{\rho_x}{\pi_x})}{\pi_y Q_{yx}\eta_y(\tfrac{\rho_y}{\pi_y})}, \quad \Dom(F)=\P_+(\X).
\end{equation*}
with the dissipation potentials 
\begin{align*}
\Phi^*(\rho,\zeta)&= 2\sumsum_{(x,y)\in\fluxS} \sqrt{\kappa_{xy}(\rho_x) \kappa_{yx}(\rho_y)} \,(\cosh(\zeta_{xy})-1),\\
\Phi(\rho,j)&= 2\sumsum_{(x,y)\in\fluxS} \sqrt{\kappa_{xy}(\rho_x) \kappa_{yx}(\rho_y)} \,\Bigl(\cosh^*\Bigl(\frac{j_{xy}}{2\sqrt{\kappa_{xy}(\rho_x) \kappa_{yx}(\rho_y)}}\Bigr)+1\Bigr).
\end{align*}
Since $\ell\mapsto\cosh(\ell)$ is an even function, using Lemma~\ref{lem:symmetric dissipation} it follows that $\Dom_{\symdiss}(F)=\Dom(F)$, i.e.\ the dissipation potential is symmetric. 

Using Corollary~\ref{def:sym-asym-force} we find 
\begin{equation*}
  F_{xy}^\sym(\rho)=-\Bigl(\frac12d\phi_\rho\tp d\QP(\rho)\Bigr)_{xy} = \frac12\log\frac{\eta_x(\tfrac{\rho_x}{\pi_x})}{\eta_y(\tfrac{\rho_y}{\pi_y})}, \quad
  F_{xy}^\asym(\rho)=F_{xy}(\rho)-F^\sym_{xy}(\rho) = \frac12\log\frac{\pi_xQ_{xy}}{\pi_yQ_{yx}},
\end{equation*}
with $\Dom(\Fsy)=\Dom(\Fasy)=\P_+(\X)$.  Observe that the expressions of $\Fsy$ and $\Fasy$ imply that their domains can be easily extended to $\P_+(\X)$ and $\P(\X)$ respectively; however the theory of Section~\ref{sec:abstract} will not automatically be valid on that extension. Also note that $F^\asym_{xy}=0$ if the particle system satisfies detailed balance with respect to $\pi$.
The orthogonality relations in Proposition~\ref{prop:ortho-rel} apply with (see~\cite{RengerZimmer2021})
\begin{align*}
  \Phi^*_{\zeta^2}(\rho,\zeta^1)&=2\sumsum_{(x,y)\in\fluxS} 
\sqrt{\kappa_{xy}(\rho_x)\kappa_{yx}(\rho_y)}
\, \cosh(\zeta^2_{xy})[ \cosh(\zeta^1_{xy})-1], \\
\theta_\rho(\zeta^1,\zeta^2)
&=2\sumsum_{(x,y)\in\fluxS} \sqrt{\kappa_{xy}(\rho_x) \kappa_{yx}(\rho_y)} \, \sinh(\zeta^1_{xy}) \sinh(\zeta^2_{xy}).
\end{align*}
\iparagraph{Decomposition of the L-function.}  The decompositions in Theorem~\ref{th:FIIRs} hold with the L-functions
\begin{align*}
  \L_{(1-2\lambda)F}(\rho,j) &= \inf_{j^+\in \RR^{\fluxS}_{\geq0}} \sumsum_{(x,y)\in\fluxS} s\big(j^+_{xy} \mid (\pi_xQ_{xy}\eta_x(\tfrac{\rho_x}{\pi_x}))^{1-\lambda}(\pi_y Q_{yx}\eta_y(\tfrac{\rho_y}{\pi_y}))^\lambda \big) \\
&\hspace{3cm}+ s\big(j^+_{xy} - j_{xy} \mid (\pi_xQ_{xy}\eta_x(\tfrac{\rho_x}{\pi_x}))^\lambda (\pi_y Q_{yx}\eta_y(\tfrac{\rho_y}{\pi_y}))^{1-\lambda}  \big),\\
  \L_{F-2\lambda F^\sym}(\rho,j) &= \inf_{j^+\in \RR^{\fluxS}_{\geq0}} \sumsum_{(x,y)\in\fluxS} s\big(j^+_{xy} \mid (\pi_xQ_{xy}\eta_x(\tfrac{\rho_x}{\pi_x}))^{1-\lambda} (\pi_x Q_{xy}\eta_y(\tfrac{\rho_y}{\pi_y}))^\lambda \big) \\
&\hspace{3cm}+ s\big(j^+_{xy} - j_{xy} \mid (\pi_yQ_{yx}\eta_y(\tfrac{\rho_y}{\pi_y}))^{1-\lambda} (\pi_yQ_{yx}\eta_x(\tfrac{\rho_x}{\pi_x}))^\lambda  \big),\\
  \L_{F-2\lambda F^\asym}(\rho,j) &= \inf_{j^+\in \RR^{\fluxS}_{\geq0}}  \sumsum_{(x,y)\in\fluxS} s\big(j^+_{xy} \mid (\pi_xQ_{xy}\eta_x(\tfrac{\rho_x}{\pi_x}))^{1-\lambda}(\pi_yQ_{yx}\eta_x(\tfrac{\rho_x}{\pi_x}))^\lambda\big) \\
  &\hspace{3cm}+ s\big(j^+_{xy} - j_{xy} \mid (\pi_y Q_{yx}\eta_y(\tfrac{\rho_y}{\pi_y}))^{1-\lambda}(\pi_xQ_{xy}\eta_y(\tfrac{\rho_y}{\pi_y}))^\lambda\big),
\end{align*}
and the corresponding Fisher informations
\begin{align*}
  \RF^\lambda_F(\rho)      =-\H\big(\rho,-2\lambda F(\rho)\big)   &= \sumsum_{\substack{x,y\in\X\\x\neq y}} \pi_x Q_{xy}\eta_x(\tfrac{\rho_x}{\pi_x}) - (\pi_x Q_{xy}\eta_x(\tfrac{\rho_x}{\pi_x}))^{1-\lambda}(\pi_y Q_{yx}\eta_y(\tfrac{\rho_y}{\pi_y}))^\lambda,\\
  \RF^\lambda_{\Fsy}(\rho) =-\H\big(\rho,-2\lambda F^\sym(\rho)\big) &=\sumsum_{\substack{x,y\in\X\\x\neq y}} \pi_x Q_{xy}\eta_x(\tfrac{\rho_x}{\pi_x}) - (\pi_x Q_{xy}\eta_x(\tfrac{\rho_x}{\pi_x}))^{1-\lambda}(\pi_x Q_{xy}\eta_y(\tfrac{\rho_y}{\pi_y}))^\lambda,\\
  \RF^\lambda_{\Fasy}(\rho)=-\H\big(\rho,-2\lambda \Fasy(\rho)\big) &=\sumsum_{\substack{x,y\in\X\\x\neq y}} \pi_x Q_{xy}\eta_x(\tfrac{\rho_x}{\pi_x}) - (\pi_x Q_{xy}\eta_x(\tfrac{\rho_x}{\pi_x}))^{1-\lambda}(\pi_y Q_{yx}\eta_x(\tfrac{\rho_x}{\pi_x}))^\lambda.
\end{align*}
In particular, with $\eta_x\equiv\mathrm{id}$, we indeed arrive at the expressions in Example~\ref{ex:IPFG-FIIR}.

With the expressions above the zero-range model satisfies the FIR inequality from Corollary~\ref{corr:FIR-abstract} for $\lambda=\frac12$, which is consistent with~\cite[Cor.~4.3]{RengerZimmer2021} but also holds more generally for $\lambda\in [0,1]$. We 
also mention that the zero-cost flux for the symmetric $\L_{\Fsy}$ satisfies EDI (see Corollary~\ref{cor:symmetric=EDP}), i.e.\ it induces a gradient flow structure. We now turn our attention to its antisymmetric counterpart.

\iparagraph{Zero-cost velocity for antisymmetric L-function.}
As in the IPFG case in Section~\ref{sec:IPFG antisym flow}, we now consider the zero-cost velocity associated to $\L_{\Fasy}$ which for any $x\in\X$ solves the ODE
\begin{equation}\label{eq:zero-range antisym ODE}
\dot\rho_x(t) = \sum_{\substack{y\in\X\\y\neq x}} A_{xy} \sqrt{\pi_x\pi_y\eta_x\big(\tfrac{\rho_x(t)}{\pi_x}\big)\eta_y\big(\tfrac{\rho_y(t)}{\pi_y}\big)}, \quad \text{with} \quad A_{xy}:= Q_{yx} \sqrt{\mfrac{\pi_y}{\pi_x}} - Q_{xy}\sqrt{\mfrac{\pi_x}{\pi_y}}. 
\end{equation}

Note that the corresponding ODE for IPFG~\eqref{eq:IPFG antisymmetric ODE} follows with $\eta_x\equiv\mathrm{id}$.
The geometric arguments of Section~\ref{sec:IPFG antisym flow} cannot be fully repeated, because it is unclear how to transform~\eqref{eq:zero-range antisym ODE} into a linear equation. However, by analogy to that section, we make an educated guess for the energy and the Poisson structure, which is summarised in the following result. We will make use of the following family of functions $g_x:[0,1]\rightarrow\RR$
\begin{equation*}
g_x(a) :=\int_0^a\!\mfrac{1}{\sqrt{\eta_x(\tfrac{b}{\pi_x})}}\,db,
\end{equation*}
for every $x\in\X$. Using these functions we now show that the Conjecture~\ref{conj} holds for the zero-range process. 

\begin{prop}\label{prop:0range-Ham-state}
Assume that $\eta_x$ is such that $g_x$ is well defined for any $x\in\X$. Define the threshold 
\begin{equation}\label{def:0range-thresh}
  \sigma:= \min_{x\in\X} \min_{\substack{\rho\in\RR^{\X}\\\rho_x=0}}\bigl[ 1 - \sum_{\substack{z\in\X\\z\neq x}} g_z(\rho_z) + \lambda_x\bigl(\sum_{\substack{z\in\X\\z\neq x}} \rho_z -1\bigr)\bigr], \quad \ \text{where} \ \lambda_x>0 \ \text{ satisfies } \ \sum_{\substack{z\in\X\\z\neq x}}\pi_z\eta_z^{-1}\big(\mfrac{1}{\lambda_x^2}\big)=1,
\end{equation}
and the energy $\E:\RR^{\X}\rightarrow\RR\cup\{\infty\}$ and the skew-symmetric matrix field $\pJ:\RR^{\X}\rightarrow \RR^{\X\times\X}$  as 
\begin{equation*}
\E(\rho):=1-\sum_{x\in\X} g_x(\rho_x), \quad (\pJ(\rho))_{xy}:= 2\sum_{z\in\X} \sqrt{\pi_x\pi_y\pi_z \eta_{x}\big(\tfrac{\rho_x}{\pi_x}\big)\eta_{y}\big(\tfrac{\rho_y}{\pi_y}\big)\eta_{z}\big(\tfrac{\rho_z}{\pi_z}\big)} \Bigl( \sqrt{\pi_x}A_{yz} - \sqrt{\pi_y} A_{xz}\Bigr),
\end{equation*}
where $A$ is defined in~\eqref{eq:zero-range antisym ODE}. If the energy of initial distribution $\rho^0\in\P(\X)$ for the ODE~\eqref{eq:zero-range antisym ODE}  satisfies $0\leq \E(\rho^0)<\sigma$, then~\eqref{eq:IPFG antisymmetric ODE} has a unique solution and $\dot \rho = \pJ(\rho)\grad\E(\rho)$. 
If the energy of the initial distribution satisfies  $\E(\rho^0)\geq\sigma$, then~\eqref{eq:zero-range antisym ODE} has non-unique, non-energy-conserving solutions.
\end{prop}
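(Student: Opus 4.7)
The plan is to parallel the proof of Proposition~\ref{prop:IPFG-Ham-state}, with two key differences: we cannot linearise the ODE via a coordinate change, so the verification $\dot\rho = \pJ(\rho) D\E(\rho)$ must be done directly, and the skew-symmetric operator $\pJ$ must be shown to satisfy the Jacobi identity (which is automatic in the IPFG case since Poisson structures are invariant under diffeomorphisms and the linearised system in Proposition~\ref{prop:IPFG-Ham-trans} is manifestly Hamiltonian).

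First I would compute $\big(D\E(\rho)\big)_y = -g'_y(\rho_y) = -1/\sqrt{\eta_y(\rho_y/\pi_y)}$ and plug into $\pJ(\rho) D\E(\rho)$. The factor $\sqrt{\eta_y(\rho_y/\pi_y)}$ in $\pJ_{xy}$ cancels cleanly against $D\E(\rho)_y$, leaving two sums indexed by $y,z$: one proportional to $\sqrt{\pi_x}A_{yz}$ and one to $\sqrt{\pi_y}A_{xz}$. For the first, the inner sum $\sum_y\sqrt{\pi_y}A_{yz}$ vanishes because $\sum_y Q_{zy}=0$ and $Q\tp\pi=0$ (this is exactly the algebraic manifestation of the invariance of $\pi$). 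For the second, the sum over $y$ collapses via $\sum_y \pi_y = 1$, and after matching prefactors one recovers the right-hand side of~\eqref{eq:zero-range antisym ODE}. Skew-symmetry $\pJ_{xy} = -\pJ_{yx}$ is immediate from the definition. The Jacobi identity I would handle by invoking the alternate characterisation provided in Lemma~\ref{lem:JacId-Alternate}, reducing it to an algebraic check on the combination of $A$ and the $\sqrt{\pi_x\eta_x(\rho_x/\pi_x)}$ factors; the structural similarity with the IPFG Poisson bracket (which is pulled back from a genuine Hamiltonian system on the sphere) suggests this verification is tractable.

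Next I would identify the threshold. For each $x\in\X$, consider the constrained minimisation of $\E(\rho)$ on the simplex $\{\sum_z\rho_z=1\}$ subject to the additional constraint $\rho_x=\bar\rho_x$; by Lagrange multipliers the optimum satisfies $g'_y(\rho_y)=2\lambda$ for $y\neq x$, i.e.\ $\rho_y = \pi_y \eta_y^{-1}\!\big(\tfrac{1}{4\lambda^2}\big)$ with $\lambda$ fixed by $\sum_{y\neq x}\pi_y\eta_y^{-1}\!\big(\tfrac{1}{4\lambda^2}\big)=1$, yielding the expression for $\sigma$. For the uniqueness dichotomy, energy conservation along Hamiltonian flows (which follows from skew-symmetry of $\pJ$, once the Hamiltonian identity is established) confines the trajectory to the level set $\{\E=\E(\rho^0)\}$. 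When $\E(\rho^0)<\sigma$, this level set is strictly contained in $\{\rho:\rho_x>\bar\rho_x \text{ for all } x\}$, so each $\eta_x(\rho_x/\pi_x)$ stays bounded away from its singular value, the right-hand side of \eqref{eq:zero-range antisym ODE} is locally Lipschitz in $\rho$, and Picard--Lindelöf gives a unique global solution.

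For $\E(\rho^0)\geq\sigma$, the level set meets the boundary $\{\rho_x = \bar\rho_x\}$. The Hamiltonian solution hits this boundary in finite time, and since at such points $\sqrt{\eta_x(\rho_x/\pi_x)}\to 0$ faster than Lipschitz (a half-power-type singularity, just as in IPFG), the ODE fails to be Lipschitz there; one can construct non-unique continuations that sit on the boundary for an arbitrary waiting time before re-entering the interior, and such continuations will in general not conserve $\E$. I expect the main obstacle to lie in the Jacobi-identity verification, since one must carefully exploit $Q\tp\pi=0$ and the specific $\sqrt{\pi_x}$-weighted structure of $\pJ$; the threshold and uniqueness statements, by contrast, follow the IPFG template almost verbatim once energy conservation is in hand.
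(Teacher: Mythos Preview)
Your overall plan matches the paper's approach: verify $\dot\rho=\pJ(\rho)D\E(\rho)$ directly using $\sum_y\pi_y=1$ and $A\tp\sqrt{\pi}=0$, identify $\sigma$ via a constrained minimisation with Lagrange multiplier, and use energy conservation (from skew-symmetry alone) to keep the trajectory in the Lipschitz region when $\E(\rho^0)<\sigma$. Two points need correction.

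First, and most importantly, drop the Jacobi-identity verification. The proposition only claims that $\pJ$ is skew-symmetric and that $\dot\rho=\pJ(\rho)D\E(\rho)$; it does \emph{not} assert a genuine Poisson/Hamiltonian structure. In fact the paper notes immediately after the proof that $\pJ(\rho)$ ``generally does not satisfy the Jacobi identity'' in the zero-range setting. So your anticipated ``main obstacle'' is not an obstacle to the proposition at all --- it is an obstacle to a stronger statement that is false in general. Skew-symmetry is all you need for energy conservation, and hence for the uniqueness argument.

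Second, the boundary constraint in the threshold computation should be $\rho_x=0$, not $\rho_x=\bar\rho_x$. The right-hand side of \eqref{eq:zero-range antisym ODE} loses Lipschitz continuity precisely where some $\eta_x(\rho_x/\pi_x)$ vanishes, i.e.\ at $\rho_x=0$ (since $\eta_x(0)=0$); the quantity $\bar\rho_x$ from the quasipotential construction plays no role here. With that adjustment your Lagrange-multiplier computation is exactly what the paper does.
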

\begin{proof}
For any $x\in\X$ we have
\begin{align*}
(\pJ(\rho)\nabla\E(\rho))_x &=\sum_{y\in\X} (\pJ(\rho))_{xy} (\nabla\E(\rho))_y  = \sum_{y,z\in\X} \sqrt{\pi_x\pi_z \eta_{x}\big(\tfrac{\rho_x}{\pi_x}\big)\eta_{z}\big(\tfrac{\rho_z}{\pi_z}\big)} \Bigl(  \pi_y A_{xz}- \sqrt{\pi_x\pi_y}A_{yz} \Bigr)\\
&= \sum_{z\in\X} \sqrt{\pi_x\pi_z \eta_{x}\big(\tfrac{\rho_x}{\pi_x}\big)\eta_{z}\big(\tfrac{\rho_z}{\pi_z}\big)}A_{xz} = \dot\rho_x(t),  
\end{align*}
where the third equality follows since $\sum_y \pi_y=1$ and $(A\tp\sqrt{\pi})_y=0$ for any $y\in\X$. Finally, note that~\eqref{eq:zero-range antisym ODE} has unique solutions if the right hand side is Lipschitz, which follows if $\rho_x>0$, since $\eta_x(0)=0$, for every $x\in\X$. The expression~\eqref{def:0range-thresh} for this threshold follows by solving 
\begin{equation*}
\min \big\{\E(\rho):\rho \in \P(\X), \rho_x=0 \text{ for some } x\in \X\big\} =\min_{x\in\X} \, \min \big\{\E(\rho):\rho \in \P(\X), \, \rho_x=0 \big\},
\end{equation*} 
where $\lambda_x$ in~\eqref{def:0range-thresh} is the Lagrange multiplier for the constraint $\sum_{z\neq x}\rho_z =1$.  The non-uniqueness of solutions follows if $\E(\rho^0)\geq \sigma$ due to non-Lipschitz right-hand side in~\eqref{eq:zero-range antisym ODE}. 
\end{proof}

The equation~\eqref{eq:zero-range antisym ODE} may have an underlying Hamiltonian structure, but while the matrix field $\pJ(\rho)$ proposed here is skew-symmetric, it generally does not satisfy the Jacobi identity.

\subsection{Complex-balanced chemical reaction networks}\label{subsec:reacting particle system}

\iparagraph{Microscopic particle system.} We now describe a particle system that is commonly used to model chemical reactions. For a detailed review of this particle system with motivation and connections to related particle systems see~\cite{AndersonKurtz11}.

Let $\X$ be a finite set of species, $\Reac$ be the finite set of reactions between the species, and let the vectors $\gamma\super{r}\in \RR^\X$ denote the net number of particles of each species that are created/annihilated during a reaction $r\in\Reac$. Furthermore, let $\Reac=\Rf\cup\Rb$ such that each forward reaction $r\in\Rf$ corresponds to a backward reaction $\bw(r)\in\Rb$, meaning that $\gamma\super{\bw(r)} = -\gamma\super{r}$ for all $r\in\Rf$ \footnote{This does not necessarily mean that each forward reaction $\alpha\super{r}\to\alpha\super{\bw(r)}$ corresponds to a backward reaction $\alpha\super{\bw(r)}\to\alpha\super{r}$, which is known in the literature as a reversible network \cite[Def.~2.2]{AndersonCraciunKurtz2010}. The difference between the two notions is clearly seen in the example at the end of our Appendix~\ref{sec:CB equiv HJ}.
\label{ftn:reversible network}}.
The set $\Rf$ will play the role of $\fluxS$ from Example~\ref{ex:indep LDP}.

The microscopic model involves a finite volume $V$ that controls the number of randomly reacting particles in the system. For a fixed $V$, we study the random \emph{concentration} or \emph{empirical measure} $\rho_x\super{V}(t)$, which is the number of particles belonging to species $x\in\X$. Note that the total number of particles may not be conserved here, as opposed to the setting 
of Example~\ref{ex:indep LDP}. 
We also consider the \emph{integrated net reaction flux} for $r\in\Rf$,
\begin{equation*}
  W\super{V}_{r}(t)=\mfrac{1}{V} \#\big\{\text{reactions $r$ occurred in time }(0,t]\big\} - \mfrac{1}{V} \#\big\{\text{reactions $\bw(r)$ occurred in time }(0,t]\big\}.
\end{equation*}

Forward and backward microscopic reactions $r$ take place with given microscopic jump rates $V\k_r\super{V}(\rho\super{V})$ and $V\k_{\bw(r)}\super{V}(\rho\super{V})$ respectively. Typically these jump rates are modelled with combinatoric terms~\eqref{eq:CME rates}, see also~\cite{AndersonKurtz11}. Since our framework is purely macroscopic, the precise expressions for the microscopic jump rates are not relevant; the only crucial point is that both converge sufficiently strongly to macroscopic reaction rates $\kr(\rho)$ and $\kbr(\rho)$.
The pair $(\rho\super{V}(t), W\super{V}(t))$ is a Markov process on $\RR^{\X}\times\RR^{\Rf}$ with generator 
\begin{align*}
  &(\Q\super{V}f)(\rho,w)= \\
  &\quad V\sum_{r\in\Rf} \kappa_r\super{V}(\rho) \bigl[ f(\rho+\tfrac1V \gamma\super{r}, w+\tfrac1V \mathds1_r) - f(\rho,w)\bigr] + \kappa_{\bw(r)}\super{V}(\rho)\bigl[ f(\rho+\tfrac1V \gamma\super{\bw(r)}, w+\tfrac1V \mathds1_{\bw(r)}) - f(\rho,w)\bigr].
\end{align*}

Using the matrix notation $\Gamma:=[\gamma\super{r}]_{r\in\Rf} \in \RR^{\X\times\Rf}$, in the limit $V\rightarrow\infty$ the pair $(\rho\super{V}, W\super{V})$ converges to the solution of (see~\cite{Kurtz1970a} and \cite[Sec.~3.1]{RengerZimmer2021})
\begin{equation}
\begin{cases}
  \dot w_r(t) = \kappa_r(\rho(t)) - \kappa_{\bw(r)}(\rho(t)), \  \ & r \in\Rf\\
  \dot \rho_x(t) = (\Gamma\dot w(t))_x, \ \ & x\in\X.
\end{cases}
\label{eq:reac ODE}
\end{equation}

The Markov process $(\rho\super{V}(t), W\super{V}(t))$ satisfies a large-deviation principle~\eqref{eq:flux LDP} where $\L,\H$ are now given by (see~\cite[Thm.~1.1]{PattersonRenger2019} and \cite[Cor.~3.1]{RengerZimmer2021})
\begin{align*}
  \L(\rho,j)&=  \inf_{j^+\in \RR^{\Rf}_{\geq0}} \sum_{r\in\Rf} s( j^+_{r}\mid \kappa_{r}(\rho)) + s( j^+_{r} - j_r \mid \kappa_{\bw(r)}(\rho) ),  \\
  \H(\rho,\zeta)& = \sum_{r\in\Rf} \kr(\rho)(e^{\zeta_r}-1)+ \kbr(\rho)(e^{-\zeta_r}-1),
\end{align*}
and $s(\cdot\mid\cdot)$ is defined in~\eqref{def:entropy-s}. As in the IPFG and zero-range models, the infimum over one-way fluxes $j^+$ can be derived using the contraction principle.

We mention that at this level of generality one can already derive many interesting MFT properties, see \cite{RengerZimmer2021}. After all, the IPFG and zero-range models fall within this class. However, in order to apply our framework and obtain explicit results, the quasipotential needs to be known. To this aim we make two crucial assumptions.

First, the system satisfies \emph{mass-action kinetics} i.e.\ there exists \emph{stoichiometric vectors} or \emph{complexes} $\alpha\super{r}\in\RR^\X_{\geq0}$ (encoding the number of reactants involved) and \emph{reaction constants} $\cf>0$ for each $r\in\Reac$ such that
\begin{equation*}
  \gamma\super{r} = \alpha\super{\bw(r)} -\alpha\super{r},\qquad
  \gamma\super{\bw(r)} = \alpha\super{r} -\alpha\super{\bw(r)},  
\end{equation*}  
and the forward and backward rates satisfy, setting $\rho^{\alpha\super{r}}:=\prod_{x\in\X} \rho_x^{\alpha\super{r}_x}$,
\begin{align}\label{ass:mass-action}
  \kr(\rho)=\cf\rho^{\alpha\super{r}}, \qquad \forall r\in\Reac.
\end{align}

Second, we assume that the system is in \emph{complex balance}~\cite[Sec.~3.2]{AndersonCraciunKurtz2010} with respect to some $\pi \in \RR^\X_{>0}$, i.e.
\begin{equation}\label{def:complex-bal}
  \forall  \ \psi\in \RR^\CC : \ \sum_{r\in\Rf}(\cf \pi^{\alpha\super{r}} - \cb\pi^{\alpha\super{\bw(r)}})(\psi_{\alpha\super{r}}-\psi_{\alpha\super{\bw(r)}})=0,
\end{equation}
where $\CC:=\{\alpha\super{r}:r\in\Reac\}$ signifies the set of \emph{complexes}. This immediately implies that $\pi$ is a steady state of the macroscopic dynamics~\eqref{eq:reac ODE}. Observe that complex balance w.r.t. $\pi$ is a macroscopic notion, whereas detailed balance of the Markov process w.r.t. $\Pi\super{V}$ is a microscopic notion. However, for reversible networks (see footnote~\ref{ftn:reversible network}) microscopic detailed balance corresponds to the macroscopic notion of detailed balance $\cf \pi^{\alpha\super{r}} = \cb\pi^{\alpha\super{\bw(r)}}$ \cite[Th.~4.5]{AndersonCraciunKurtz2010}, which is clearly a stronger than complex balance. Most importantly, whereas detailed balance corresponds to purely dissipative dynamics~\cite{MielkePeletierRenger14}, complex balance allows for non-dissipative effects. 



\iparagraph{State-flux triple and L-function.} 
Fix a reference or initial concentration $\rho^0\in\RR^\X_{\geq0}$ and recall the matrix notation $\Gamma w =  \sum_{r\in\Rf} \gamma\super{r} w_r$. 
The state space is the flat manifold of concentrations that can be produced from $\rho^0$ via reactions, with corresponding local (co)tangent spaces:
\begin{align}
  \Z=\rho^0+\Ran(\Gamma),
  &&
  T_\rho\Z = \Ran(\Gamma),
  && \qquad T_\rho^*\Z=\RR^\X/\Ker(\Gamma\tp).
\label{eq:chem state space}
\end{align}
As in the case of IPFG and zero-range, we include negative concentrations to simplify the geometric setting; this set $\Z$ is known in the literature as the stoichiometric compatibility class, whereas the subset of $\Z$ of coordinate-wise non-negative concentrations is called the stoichiometric simplex \footnote{Under the complex balance assumption the steady state $\pi$ is unique and stable within such simplex~\cite[Thm.~3.2]{AndersonCraciunKurtz2010}.}. Moreover, as in the previous examples, $T_\rho\Z$ restricts the directions of $\RR^\X$ in which one can differentiate, and $T_\rho^*\Z$ appears as a quotient space. Indeed the Euclidean inner product between tangents $u=\Gamma j\in \Ran(\Gamma)$ and cotangents $\xi\in\RR^\X/\Ker(\Gamma\tp)$ is again invariant under addition of vectors $\nu\in\Ker(\Gamma\tp)$, since $_{T_\rho^*\Z}\langle\xi+c\nu,u\rangle_{T_\rho\Z}=(\xi+c\nu)\cdot \Gamma j)=\xi\cdot u$. The space $\Ker(\Gamma\tp)$ encode the quantities (usually numbers of atoms) that are conserved under the reactions.

The flux space and its associated tangent and cotangent spaces are simply the Euclidean space
\begin{align*}
  \W=T_\rho\W=T_\rho^*\W=\RR^{\Rf},
\end{align*}
and the continuity map $\phi:\W\rightarrow\Z$ and its differential are 
\begin{align*}
  \phi(w)= \rho^0 + \Gamma w,
  &&
  d\phi_\rho  = \Gamma,
  &&
  d\phi\tp_\rho=\Gamma\tp. 
\end{align*}
Note that with this setup, $\phi$ is indeed surjective. Again, $\L$ is convex and lower semicontinuous since $\L$ is its own convex bidual.

\iparagraph{Quasipotential.} The quasipotential is again the relative entropy with respect to the invariant measure,
\begin{align}
  \QP(\rho)=
  \begin{cases}
    \sum\limits_{x\in\X} s(\rho_x\mid\pi_x), & \rho\in\Z,\\
    \infty,                                  &\text{otherwise.}
   \end{cases}
\label{eq:chem QP}
\end{align}
Similar to Example~\ref{eq:chem QP}, as a function on the state manifold $\Z$, this quasipotential is differentiable on $\Z_+:=\{\rho\in\Z:\rho>0 \text{ (coordinate-wise)}\}$, with Gateaux derivative $d\QP(\rho)=\{(\log(\rho_x/\pi_x))_{x\in\X}+\xi: \xi\in \Ker(\Gamma\tp)\}$.

Recall the relation between the quasipotential and the large-deviation rate functional for the invariant measure of the microscopic system from Theorem~\ref{th:LDP QP}. Whereas in the IPFG model this relative entropy appears as the large-deviation rate functional for independent particles by Sanov's Theorem, in the complex balance case this is the rate functional of the explicitly known invariant measure of the microscopic particle system~\cite[Thm.~4.1]{AndersonCraciunKurtz2010}. As in the previous examples, it can also be checked purely macroscopically that this is the correct quasipotential satisfying~\eqref{eq:invariant I_0}. In fact, it turns out that~\eqref{eq:invariant I_0} is \emph{equivalent} to complex balance; both directions of the equivalence will be shown in Theorem~\ref{th:CB equiv HJ} in Appendix~\ref{sec:CB equiv HJ}. 

\begin{remark} As mentioned in Subsections~\ref{subsec:results} and \ref{sec:QuasiLDP}, the quasipotential $\V$ is always a Lyapunov function along the zero-cost dynamics~\eqref{eq:coupled ev eq}. For the case of chemical reactions this was worked out explicitly in \cite{ACGF2015}.
\end{remark}

\iparagraph{Dissipation potential, forces and orthogonality.} The driving force is
\begin{equation*}
  F_r(\rho) = \frac12 \log\frac{\kr(\rho)}{\kbr(\rho)}=\frac12\log\Bigl(\frac{\cf}{\cb} \rho^{-\gamma\super{r}}\Bigr),
  \qquad
  \Dom(F)=\Z_+,
\end{equation*}
recalling that $\kr(\rho)=\cf \rho^{\alpha\super{r}}$ and $\Z_+$ denote the positive concentrations in $\Z$. The dissipation potentials are
\begin{align*}
  \Phi^*(\rho,\zeta)&= 2\sum\limits_{r\in\Rf} \sqrt{\kr(\rho)\kbr(\rho)} \ (\cosh(\zeta_r)-1), \\
  \Phi(\rho,j)&= 2\sum\limits_{r\in\Rf} \sqrt{\kr(\rho)\kbr(\rho)} \ \Bigl(\frac{\cosh^*(j_r)}{2\sqrt{\kr(\rho)\kbr(\rho)}}+1\Bigr).
\end{align*}
Note that $\Dom_{\symdiss}(F)=\Dom(F)$, i.e.\ the dissipation potential is symmetric. 

Following Corollary~\ref{def:sym-asym-force}, the symmetric and antisymmetric forces are
\begin{align*}
  F_{r}^\sym(\rho)=-\Bigl(\frac12d\phi_\rho\tp d\QP(\rho)\Bigr)_{r} = -\frac12\log\Bigl(\mfrac{\rho}{\pi}\Bigr)^{\gamma\super{r}}, \ 
  F_{r}^\asym(\rho)=F_{r}(\rho)-F^\sym_{r}(\rho) = \frac12\log\Bigl( \frac{\cf }{\cb}\pi^{-\gamma\super{r}} \Bigr),
\end{align*}
with $\Dom(\Fsy)=\Dom(\Fasy)=\Dom(F)=\Z_+$. The orthogonality relations in Proposition~\ref{prop:ortho-rel} apply with
\begin{align*}
  \Phi^*_{\zeta^2}(\rho,\zeta^1)&=2\sum_{r\in\Rf} 
    \sqrt{\kr(\rho)\kbr(\rho)} 
    \, \cosh(\zeta^2_{r})[ \cosh(\zeta^1_{r})-1], \\ 
  \theta_\rho(\zeta^1,\zeta^2)
    &=2\sum_{r\in\Rf} \sqrt{\kr(\rho)\kbr(\rho)} \, \sinh(\zeta^1{r}) \sinh(\zeta^2_{r}).
\end{align*}
This notion of generalised orthogonality is consistent with the derivations in \cite{RengerZimmer2021}.

\iparagraph{Decomposition of the L-function.} The decompositions in Theorem~\ref{th:FIIRs} hold with the L-functions 
\begin{align*}
\L_{(1-2\lambda)F}(\rho,j) &= \inf_{j^+\in\RR^\Rf_{\geq0}} \sum_{r\in\Rf} s\big(j^+_{r} \mid (\kr(\rho))^{1-\lambda}(\kbr(\rho))^\lambda \big) 
+ s\big(j^+_{r} - j_{r} \mid (\kr(\rho))^{\lambda}(\kbr(\rho))^{1-\lambda}  \big),\\
\L_{F-2\lambda F^\sym}(\rho,j) &= \inf_{j^+\in\RR^\Rf_{\geq0}} \sum_{r\in\Rf} s\Bigl(j^+_{r} \mid \kr(\rho)\bigl(\mfrac{\rho}{\pi}\bigr)^{\lambda\gamma\super{r}} \Bigr)+s\Bigl(j^+_{r} - j_r\mid \kbr(\rho)\bigl(\mfrac{\rho}{\pi}\bigr)^{-\lambda\gamma\super{r}} \Bigr), \\
 \L_{F-2\lambda F^\asym}(\rho,j)  &= \inf_{j^+\in\RR^\Rf_{\geq0}} \sum_{r\in\Rf} s\Bigl(j^+_{r} \mid (\kr(\rho))^{1-\lambda} (\kbr(\rho))^\lambda \bigl(\mfrac{\rho}{\pi}\bigr)^{-\lambda\gamma\super{r}} \Bigr)\\
 &\hspace{2cm}  +s\Bigl(j^+_{r} - j_r\mid (\kr(\rho))^{\lambda} (\kbr(\rho))^{1-\lambda} \bigl(\mfrac{\rho}{\pi}\bigr)^{\lambda\gamma\super{r}} \Bigr), 
\end{align*}
with the corresponding Fisher informations 
\begin{align*}
\RF^\lambda_F(\rho)&=-\H\big(\rho,-2\lambda F(\rho)\big) =\sum_{r\in\Reac} \kr(\rho)  -  (\kr(\rho))^{1-\lambda} (\kbr(\rho))^{\lambda}  ,\\
\RF^\lambda_{\Fsy}(\rho)&=-\H\big(\rho,-2\lambda F^\sym(\rho)\big)=\sum_{r\in\Reac} \kr(\rho) - \kr(\rho) \bigl(\mfrac{\rho}{\pi}\bigr)^{\lambda\gamma\super{r}} ,\\
  \RF^\lambda_{\Fasy}(\rho)&=-\H\big(\rho,-2\lambda F^\asym(\rho)\big) =\sum_{r\in\Reac} \kr(\rho) - (\kr(\rho))^{1-\lambda}(\kbr(\rho))^\lambda \bigl( \mfrac{\pi}{\rho}\bigr)^{\lambda\gamma\super{r}}.
\end{align*}
The zero-cost flux for $\L_{\Fsy}$ is related to a gradient flow by Corollary~\ref{cor:symmetric=EDP}; this has been discussed in~\cite[Cor.~4.8]{Renger2018a}. As opposed to IPFG and zero-range examples, the construction of a Poisson structure for $\L_{\Fasy}$ is difficult in the chemical reaction setting due to the non-locality of the jump rates and the interplay with the stoichiometric vectors, and remains an open question.

\subsection{Lattice gases}\label{sec:LatticeGas}

In this section we focus on the typical setting of MFT~\cite{BDSGJLL2015MFT}, namely discrete state-space particle systems whose hydrodynamic limit is the following drift-diffusion equation on the torus $\mathbb T^d$:
\begin{align}
  \dot\rho(t) &= - \Cdiv j(t),\notag\\
  j(t)&=j^0\big(\rho(t)\big), \hspace{1cm}\text{with } j^0(\rho):=-\Cnabla \rho - \chi(\rho) (\Cnabla \Pot + A). 
\label{eq:hydro-lim}
\end{align}
As before $\rho\in \P(\mathbb T^d)$ is the limiting density of the particle system, but now $\Cnabla,\Cdiv$ denote the continuous differential operators in $\RR^d$. We assume that the strictly positive potential $\Pot\in C^\infty(\mathbb T^d;(0,\infty))$, covector field $A\in C^\infty(\mathbb T^d;\mathbb R^d)$ and the `mobility' $\chi \in C^\infty(\RR;[0,\infty))$ are smooth, and that furthermore \footnote{In order to make sure that the rate functional $\int_0^T\!\hat\L(\rho(t),\dot\rho(t))\,dt=\infty$ whenever negative concentrations are reached, one should in fact require $\chi(a)\equiv0$ for all $a\notin\lbrack0,1\rbrack$, and assume that $\chi$ is continuous and smooth away from its zeros.},
\begin{align}
  \Cdiv A &= 0 \notag\\
  \nabla U\cdot A&=0\notag\\
  \mfrac{d^2}{da^2} h(a) &= \mfrac{1}{\chi(a)}
  \label{eq:mobility ass}
\end{align}
for some $h: [0,\infty) \rightarrow [0,\infty)$,

Most results about this class of models are well known; we present them here to show that our abstract theory is consistent with `classical' MFT.

\iparagraph{Microscopic particle system.}  Although the macroscopic framework works for general mobilities, we only describe two standard microscopic particle systems that give rise to different mobilities. For independent random walkers $\chi(a) = a$, $h(a) = a\log a - a + 1$ and for the simple-exclusion process $\chi(a) = a(1-a)$, $h(a)=\ a \log a + (1-a) \log (1-a)$. Since these two particle systems with limit~\eqref{eq:hydro-lim} have been extensively studied in the literature, we only present the essential features here. 

For both particle systems, the particles can jump to neighbouring sites on the lattice $\TT^d\cap (\frac1n\ZZ)^d$. In order to pass to the hydrodynamic limit~\eqref{eq:hydro-lim} and derive the corresponding large deviations, the state space will be embedded in the continuous torus. The first particle system consists of independent random walkers with drift. For any $n\in\NN$, the corresponding empirical measure-flux pair $(\rho\super{n}(t),W\super{n}(t))$ is a Markov process in $\P(\TT^d)\times\M(\TT^d;\RR^d)$ with generator (see~\cite{Renger2018b})
\begin{align*}
  (\Q\super{n} f)(\rho,w) &= n^2\sum_{\substack{\tau\in\ZZ^d\\\lvert \tau\rvert=1}} \int_{\TT^d}\!n^d\rho(dx) e^{-\left[\frac12 U(x+\frac1n \tau)-\frac12 U(x)+\frac1{2n} A(x)\cdot \tau\right]}\\[-0.4cm]
 &\hspace{3cm} \times\Big\lbrack f\big(\rho - \tfrac1{n^d}\delta_{x} + \tfrac1{n^d}\delta_{x+\frac1n \tau}, w + \tfrac{1}{n^{d+1}} \tau\delta_{x+\frac1{2n}\tau}\big) - f(\rho,w)\Big\rbrack.
\end{align*}
This system can also be derived as the spatial discretisation of interacting stochastic differential equations, although in such continuous-space setting it becomes less straight-forward how to define particle fluxes.

The second particle system is the weakly asymmetric simple exclusion process (WASEP) which has been extensively studied in the MFT literature (see for instance~\cite{BdSGjLL07,BDSGJLL2015MFT}). In this case the Markov process $(\rho\super{n}(t),W\super{n}(t))$ has generator
\begin{align*}
  (\Q\super{n} f)(\rho,w) &=n^2\sum_{\substack{\tau\in\ZZ^d\\\lvert \tau\rvert=1}} \int_{\TT^d}\! n^d\rho(dx) \big(1-n^d \rho(\{x+\tfrac1n \tau\})\big) e^{-\left[\frac12 U(x+\frac1n \tau)-\frac12 U(x)+\frac1{2n} A(x)\cdot \tau\right]}\\[-0.4cm]
&\hspace{3cm} \times \Big\lbrack f\big(\rho - \tfrac1{n^d}\delta_{x} + \tfrac1{n^d}\delta_{x+\frac1n \tau}, w + \tfrac1{n^{d+1}} \tau\delta_{x+\frac1{2n} \tau}\big) - f(\rho,w)\Big\rbrack.
\end{align*}

Observe that in both generators, the flux $w$ has a different scaling than the particle density $\rho$. This is required to ensure that the discrete-space, finite-$n$ continuity equation converges to the continuous-space continuity equation with differential operator $-\div$. 

Letting $n\rightarrow\infty$ we arrive at the hydrodynamic limit~\eqref{eq:hydro-lim}
with $\chi(a):=a$ for the first particle system and $\chi(a):=a(1-a)$ for the second particle system. The corresponding large-deviation cost function and its dual are

\begin{equation}\label{eq:lattice-LD-lag}
\begin{aligned}
  \H(\rho,\zeta)&=\lVert \zeta\rVert^2_{L^2(\chi(\rho))} + \int_{\T^d}\!\zeta(x) j^0(\rho)(x)\,dx,\\
  \L(\rho,j)    &= \begin{cases}\mfrac14\lVert j -j^0(\rho)\rVert^2_{L^2(1/\chi(\rho))}, &  \  \text{if } j-j^0(\rho)\in  L^2(1/\chi(\rho))\\
  +\infty, & \ \text{else.}
  \end{cases}		
\end{aligned}
\end{equation}
Here $L^2(\chi(\rho))$ is the $\chi(\rho)$-weighted $L^2$-space on $\T^d$ with  $\|f\|_{L^2(\chi(\rho))}^2:=\int_{\T^d} f(x)^2 \chi(\rho(x))dx$ and $\|\cdot\|_{L^2(1/\chi(\rho))}$ is the dual norm to $\|\cdot\|_{L^2(\chi(\rho))}$. Note that $\L$ is constructed by taking the convex dual of $\H$ which is defined in terms of $\|\cdot\|_{L^2(\chi(\rho))}$.
See~\cite[Sec.~5]{Renger2018b} for the large-deviations of the random walkers (with $A=0$),  ~\cite[Chap.~10]{KipnisLandim1999} for exclusion process without fluxes, and~\cite[Thm.~2.1]{BdSGjLL07} for exclusion process with fluxes (with $A=0$).


\iparagraph{State-flux triple and L-function.} Apart from the fact that the state space is infinite-dimensional, the lattice gas example differs from the previous examples in a number of ways. First of all, in this setting, one actually has a microscopic state-flux triple $(\Z^n,\W^n,\phi^n)$ that converges to the macroscopic one $(\Z,\W,\phi)$ in a suitable sense, see for example \cite[Sec.~5]{Renger2018a}. For simplicity we only present the macroscopic structure. The second difference is that the cost~\eqref{eq:lattice-LD-lag} happens to be a quadratic functional, which induces a norm on the cotangent space. However as in the finite-dimensional examples, we regard such induced geometry to be a posteriori; one first needs a basic geometric setup in order to derive the dissipation potentials. Therefore we shall work with the following setting, and discuss the geometry induced by \eqref{eq:lattice-LD-lag} in Remark~\ref{rem:lattice gas induced geometry}.

For the state space we choose, analogous to \eqref{eq:IPFG state space} and \eqref{eq:chem state space},
\begin{align*}
	\Z:=\big\{{ \textstyle\rho\in L^1(\TT^d):\int_{\TT^d}\!\rho(x)\,dx=\int_{\TT^d}\!\rho^0(x)\,dx }\big\},\\
  T_\rho\Z=\big\{{ \textstyle u\in L^1(\TT^d):\int_{\TT^d}\! u(x)\,dx=0 }\big\},&&
  T_\rho^*\Z=\big\{{ \textstyle \{ \xi+c: c\in\RR\}:\xi\in L^\infty(\TT^d) }\big\}.  
\end{align*}
For $\chi(a)=a$ one might be tempted to choose $\Z$ as the space of signed measures with total mass $\int\!\rho^0\,dx$, but then the quasipotential $\V$ will fail to be differentiable.

For the flux space we choose the flat Banach manifold (see~\cite[Theorem 3.12]{AdamsFournier03})
\begin{align*}
	\W=T_\rho\W=W^{1,1}(\T^d;\RR^d), &&T_\rho^*\W=W^{-1,\infty}(\T^d),
\end{align*}
and for the continuity operator the usual one:
\begin{align}
  \phi\lbrack w\rbrack:=\rho^0-\div w && \text{with differential} && d\phi_\rho j=-\div j && \text{and its adjoint} && d\phi_\rho\tp\xi=\grad\xi.
\label{eq:lattice gas continuity operator}
\end{align}
As a validity check, this setup indeed satisfies $\phi:\W\to\Z$, $d\phi_\rho:T_\rho\W\to T_\rho\Z$ and $d\phi_\rho\tp:T_\rho^*\Z\to T_\rho^*\W$.
Finally, $\L$ is clearly convex and lower semicontinuous in the $L^2_{1/\chi(\rho)}$-norm.


\begin{remark} A posteriori we could also choose the state-flux triple implied by the large deviations~\eqref{eq:lattice-LD-lag}. Then $\Z=(\P(\mathbb T^d),\Wasser_2)$, the space of probability measures on the (compact) torus, endowed with the Wasserstein-2 metric $\Wasser_2$. For any $\rho\in\Z$, 
the corresponding cotangent and tangent spaces and the associated norms are 
\begin{align*}
  T_\rho^*\Z:=\overline{\{C^\infty(\mathbb T^d) \}}^{\|\cdot\|_{1,\chi(\rho)}},
  &&
  T_\rho\Z = \Big\{-\Cdiv (\chi(\rho) h) \text{ (in distr. sense)} :   h \in   \overline{\{\Cnabla\varphi: \varphi\in C^\infty(\mathbb T^d) \}}^{\|\cdot\|_{L^2(\chi(\rho))}} \Big\} .
\end{align*}
with the standard (semi)norms from Wasserstein-2 geometry~\cite[Sec. 3.4.2]{PeletierLN2014}  
\begin{equation*}
  \|\xi\|_{1,\chi(\rho)}^2:=\|\nabla\xi\|^2_{L^2(\chi(\rho))}, \quad  \ \|u\|^2_{-1,\chi(\rho)}:=\inf_{\substack{j\in T_\rho\W\\ \, u= -\Cdiv j}} \|j\|^2_{L^2(1/\chi(\rho))}.
\end{equation*}
The induced flux space is the metric space 
\begin{align*}
  &\W = \left\{ w\in \M(\TT^d;\RR^d): \rho^0-\Cdiv w \text{ (in distr. sense)} \in \P(\T^d)\right\},\\
  &d^2_\W(w_1,w_2):=\inf_{\substack{\hat w:[0,1]\rightarrow\W \\ \hat w(0)=w_1,\hat w(1)=w_2}} \int_0^1\! \|\hat w(t)\|^2_{L^2(1/\chi(\rho_0-\Cdiv \hat w(t)))}\,dt,\\
  &T_\rho^*\W=L^2(\chi(\rho)), \qquad\qquad T_\rho\W=L^2(1/\chi(\rho)).
\end{align*}
And the continuity operator is again \eqref{eq:lattice gas continuity operator}. This setup is slightly different from the standard Wasserstein geometry, where by convention the fluxes are defined so as to satisfy $\dot\rho=\div(\rho\, j)$, while in our context the fluxes satisfy $\dot\rho = \div j$.

However, this induced state-flux triple is formal, as $\Z$ and $\W$ are not Banach manifolds, and differentiability of the quasipotential $\QP$ becomes less straightforward. We therefore work with the simpler triple described above.
\label{rem:lattice gas induced geometry}
\end{remark}

\iparagraph{Quasipotential.} The quasipotential $\QP:\Z\rightarrow\RR$ is defined as, recalling~\eqref{eq:mobility ass},
\begin{equation*}
  \QP(\rho)=\int_{\mathbb T^d}\!\bigl[ h(\rho(x))+\Pot(x)\rho(x)\bigr]\,dx,
\end{equation*}
Its Gateaux derivative in $\Z$ is simply, recalling \eqref{def:Domain-Fsym}, 
\begin{equation*}
  d\QP(\rho) = h^\prime( \rho) +\Pot  \text{(modulo constants )}, \qquad\text{for }
  \rho \in \Dom(F^\sym)=\{\rho\in\Z:h^\prime(\rho)\in L^\infty(\T^d)\}.
\end{equation*}
It is easy to verify that $\H(\rho,d\phi\tp_\rho d\QP(\rho))=0$ and therefore $\QP$ is indeed a quasipotential in the sense of Definition~\ref{def:I0}. In the case $\chi(a)=a$, $\QP$ is the relative entropy with respect to the Gibbs-Boltzmann measure $\mu(dx)=Z^{-1}e^{-\Pot(x)}\,dx$. 
 
\iparagraph{Dissipation potential, forces and orthogonality.}
Using Definition~\eqref{def:F-disspot} the driving force is 
\begin{equation*}
  F(\rho)=  \mfrac12 (\chi(\rho))^{-1} j^0(\rho) , \ \ \Dom(F)=\big\{\rho\in\Z:  \ \chi(\rho(x))>0 \text{ almost everywhere}  \big\}.
\end{equation*}
The dissipation potential and its dual are 
\begin{equation*}
  \Phi^*(\rho,\zeta)=\|\zeta\|^2_{L^2(\chi(\rho))}+\langle\zeta,j^0(\rho)-2\chi(\rho)F(\rho)\rangle = \|\zeta\|^2_{L^2(\chi(\rho))}, \ \ \Phi(\rho,j)=\frac14\|j\|^2_{L^2(1/\chi(\rho))}.
\end{equation*}
Observe that $\Dom_{\symdiss}(F)=\Dom(F)$, i.e.\ the dissipation potential is symmetric. 
Following Corollary~\ref{def:sym-asym-force}, the symmetric and antisymmetric forces are
\begin{align*}
  \Fsy(\rho)= -\mfrac12 d\phi_\rho\tp d\QP(\rho) =  -\mfrac12 \big[ (\chi(\rho))^{-1}\Cnabla \rho + \Cnabla\Pot \big],  \ \ 
  \Fasy(\rho) = F(\rho)-F^\mathrm{sym}(\rho) =  -\mfrac12 A.
\end{align*}
Indeed the antisymmetric force $\Fasy$ is again independent of $\rho$.

The generalised orthogonality relations in Proposition~\ref{prop:ortho-rel} apply with
\begin{equation*}
  \Phi^*_{\zeta^2}(\rho,\zeta^1)= \|\zeta^1\|^2_{L^2(\chi(\rho))}, \ \ 
\theta_\rho(\zeta^1,\zeta^2)
=2(\zeta^1,\zeta^2)_{L^2(\chi(\rho))},
\end{equation*}
where $(\cdot,\cdot)_{L^2(\chi(\rho))}$ is the $\chi(\rho)$-weighted $L^2$ norm. This shows that for quadratic dissipation potentials, the generalised expansion of Proposition~\ref{prop:ortho-rel} indeed collapses to the usual expansion of squares, i.e.:
\begin{align*}
  \Phi^*(\rho,\zeta^1+\zeta^2) &= \lVert\zeta^1+\zeta^2\rVert^2_{L^2(\chi(\rho))} 
  = \lVert\zeta^1\rVert^2_{L^2(\chi(\rho))} + 2(\zeta^1,\zeta^2)_{L^2(\chi(\rho))} + \lVert\zeta^2\rVert^2_{L^2(\chi(\rho))}\\
&  =\Phi^*(\rho,\zeta^1) + \theta_\rho(\zeta^2,\zeta^1) + \Phi^*_{\zeta^1}(\rho,\zeta^2).
\end{align*}

\iparagraph{Decomposition of the L-function.} The decompositions in Theorem~\ref{th:FIIRs} hold with the L-functions 
\begin{align}
\L_{2\lambda F}(\rho,j) &= \mfrac14 \|j- 4\lambda \chi(\rho) F(\rho)
\|^2_{L^2(1/\chi(\rho))},\nonumber \\
\L_{F-2\lambda \Fsy}(\rho,j) &= \mfrac14 \|j-2\chi(\rho)\Fasy-2(1-2\lambda)\chi(\rho)\Fsy(\rho)\|^2_{L^2(1/\chi(\rho))},\label{def:Latt-asymL} \\
\L_{F-2\lambda \Fasy}(\rho,j) &= \mfrac14 \|j-2(1-2\lambda)\chi(\rho)\Fasy-2\chi(\rho)\Fsy(\rho)\|^2_{L^2(1/\chi(\rho))}, \label{def:Latt-symL}
\end{align}
and the corresponding Fisher informations
\begin{align*}
\RF^\lambda_{F}(\rho) &= \H(\rho,-2\lambda F(\rho)) = \lambda(1-\lambda)\norm{-2F(\rho)}^2_{L^2(\chi(\rho))},\\
\RF^\lambda_{\Fsy}(\rho) &= \H(\rho,-2\lambda \Fsy(\rho)) = \lambda(1-\lambda)\norm{-2\Fsy(\rho)}^2_{L^2(\chi(\rho))},\\
\RF^\lambda_{\Fasy}(\rho) &= \H(\rho,-2\lambda \Fasy) = \lambda(1-\lambda)\norm{-2\Fasy}^2_{L^2(\chi(\rho))}.
\end{align*}
The positivity of these Fisher informations is obvious from the definition. In this setting, the  decompositions  in Theorem~\ref{th:FIIRs} can be derived simply by expanding the squares in the the L-function. 

Repeating the calculations in Corollary~\ref{corr:FIR-abstract} for $\chi(a)=a$, we arrive at the local FIR equality for diffusion processes (with $u$ as a placeholder for $\dot\rho$)~\cite[Eq.~(14)]{HilderPeletierSharmaTse20}
\begin{equation*}
\langle d\,\RelEnt(\rho|\mu),u\rangle + \left\|\nabla\log\mfrac{\rho}{\mu}\right\|_{L^2(\rho)} \leq \hat\L(\rho,j),
\end{equation*}
where the contracted L-function $\hat\L$ is defined in~\eqref{def:FIIR-FIR-contract}, the relative entropy with respect to $\mu$ is defined as $\RelEnt(\cdot|\mu):=\QP(\cdot)$.   

We now briefly comment on the symmetric and antisymmetric L-functions. 
Substituting $\lambda=\tfrac12$ in~\eqref{def:Latt-symL} and expanding the square we find   
\begin{align*}
\L_{\Fsy}(\rho,j) &=  \mfrac14 \|j\|^2_{L^2(1/\chi(\rho))} + \mfrac14\|-2\chi(\rho) \Fsy(\rho)\|^2_{L^2(1/\chi(\rho))} - \mfrac12 \langle j , -2\Fsy(\rho) \rangle\\
& =  \mfrac14 \|j\|^2_{L^2(1/\chi(\rho))} + \mfrac14\|\nabla d\QP(\rho)\|^2_{L^2(\chi(\rho))} - \mfrac12 \langle \Cdiv j , d\QP(\rho) \rangle,
\end{align*}
where we have used $-2\Fasy(\rho)=\Cnabla d\QP(\rho)$ and the definition of $\|\cdot\|_{-1,\chi(\rho)}$. Using this decomposition of $\L_{\Fsy}$, the contracted symmetric L-function 
\begin{equation*}
\hat\L_{\Fsy}(\rho,u):= \inf_{\substack{j\in T_{\rho}\W: \, u=-\Cdiv j}} \L_{\Fsy}(\rho,j), 
\end{equation*}
admits the decomposition
\begin{equation}\label{eq:latt-EDI}
\hat \L_{\Fsy}(\rho,u)= \hat\Psi(\rho,u) + \hat\Psi^*(\rho, - \tfrac12 d\QP(\rho))  + \mfrac12 \langle  d\QP(\rho),u\rangle,
\end{equation}
where the contracted dissipation potential $\hat\Psi(\rho,u) =\mfrac14 \|u\|^2_{-1,\chi(\rho)}$ and its dual $\hat\Psi^*(\rho,s)=\|s\|^2_{1,\chi(\rho)}$ (recall abstract definition in~\eqref{def:FIIR-FIR-contract-diss}).
The decomposition~\eqref{eq:latt-EDI} is the standard Wasserstein-based EDI for the drift-diffusion equation~\eqref{eq:hydro-lim} (see for instance~\cite[Sec.~4.2]{MielkePeletierRenger14}).

Similarly, the purely antisymmetric L-function and its contraction read 
\begin{equation*}
  \L_{\Fasy}(\rho,j)     =  \mfrac14 \|j+\chi(\rho)A\|^2_{L^2(1/\chi(\rho))},\quad
  \hat\L_{\Fasy}(\rho,u) =  \mfrac14 \|u+\Cdiv(\chi(\rho)A)\|^2_{-1,\chi(\rho)},
\end{equation*}
with zero-cost velocity $u^0(\rho)=-\Cdiv(\chi(\rho)A)=-\Cnabla(\chi(\rho))\cdot A$. While the corresponding evolution equation $\dot\rho(t)= \Cdiv(\chi(\rho)A)$ preserves the energy 
\begin{equation*}
\E:\Z\rightarrow\RR, \ \ \E(\rho):=\int_{\mathbb T^d} U(x)\,d\rho(x),
\end{equation*}
it is not clear if we can define an operator $\pJ$ such that Conjecture~\ref{conj} holds. However in the case $A=J\nabla U$ where $J\in\RR^{d\times d}$ is a constant skew-symmetric matrix and $\chi(a)=a$, we define the operator 
\begin{equation*}
\pJ:\Z\rightarrow (T_\rho^*\Z\rightarrow T_\rho\Z), \ \ \pJ(\rho)(\zeta):=\Cdiv(\rho J\nabla\zeta).
\end{equation*}
Using the antisymmetry of $J$ it follows that 
\begin{equation*}
\langle \zeta^1,\pJ(\rho)\zeta^2\rangle= \int_{\mathbb T^d} \zeta^1 \Cdiv(\rho J\nabla\zeta^2) = - \int_{\mathbb T^d} \nabla\zeta^1\cdot J \nabla\zeta^2\rho = -  \langle \pJ(\rho)\zeta^1,\zeta^2\rangle,
\end{equation*}
i.e.\ $\pJ$ is a skew-symmetric operator. Furthermore $\pJ$ satisfies the Jacobi identity by an elementary but tedious calculation which we skip. Therefore the antisymmetric zero-cost velocity indeed evolves according to the standard Hamiltonian system (see for instance~\cite[Section 3.2]{DuongPeletierZimmer13}) with energy $\E$ and Poisson structure $\pJ$.

\section{Conclusion and discussion}\label{sec:discuss}
In this paper we have presented an abstract macroscopic framework, which, for a given flux-density L-function, provides its decomposition into dissipative and non-dissipative components and a generalised notion of orthogonality between them. This decomposition provides a natural generalisation of the gradient-flow framework to systems with non-dissipative effects. 
Specifically we prove that the symmetric component of the L-function corresponds to a purely dissipative system and conjecture that the antisymmetric component corresponds to a Hamiltonian system, which has been verified in several examples.
We then apply this framework to various examples, both with quadratic and non-quadratic L-functions. 

We now comment on several related issues and open questions. 

\emph{Why does the density-flux description work?} While at the level of the evolution equations which are of continuity-type, the density-flux description does not offer any advantage (recall~\eqref{eq:coupled ev eq}), at the level of the cost functions it allows us to naturally encode divergence-free effects. This is clearly visible for instance in Theorem~\ref{th:FIIRs}, where the evolutions corresponding to $\L_{\Fsy}$, $\L_{\Fasy}$ are dissipative and energy-preserving respectively, while the zero of the full L-function characterises the macroscopic evolution. A simple contraction argument allows us to retrieve the classical gradient-flow structure as well as the FIR inequalities in a fairly general setting, which further reveals the power of this description.

\emph{Antisymmetric force and L-function.} While in the abstract theory the antisymmetric force $\Fasy=\Fasy(\rho)$ is a function of $\rho\in\Dom(\Fasy)$, in all the concrete examples studied in this paper, $\Fasy$ is independent of $\rho$. It is not clear to us if this is a general property of the antisymmetric force or a special characteristic of the examples studied in this paper.  

In Section~\ref{subsec:flows} we conjectured that the zero-velocity flux for the contracted antisymmetric L-function admits a Hamiltonian structure, which was concretely proved for IPFG and zero-range process in Proposition~\ref{prop:IPFG-Ham-state},~\ref{prop:0range-Ham-state} respectively. While this gives insight into the associated zero-flows, it is not clear if $\L_{\Fasy}$ admits  a variational formulation akin to the gradient-flow structure for $\L_{\Fsy}$ discussed in Corollory~\ref{cor:symmetric=EDP}.

\emph{Chemical-reaction networks.} In Appendix~\ref{sec:CB equiv HJ} we provide a new interpretation of systems in complex balance as being exactly those systems which admit the relative entropy as the quasipotential. This also restricts the search for invariant measures of the CME without complex balance to measures that are not exponentially equivalent to the product-Poisson form. However, motivated by the example in that appendix, an interesting question would be to identify the class of systems which admit a rescaled relative entropy as their quasipotential.

Furthermore, the Hamiltonian structure of the zero-velocity for $\L_{\Fasy}$ in the chemical-reaction setting is open. As pointed out in Section~\ref{subsec:reacting particle system}, the non-locality of the jump rates for chemical-reaction networks offers a challenge as opposed to the local jump rates for IPFG and zero-range process.

\emph{Generalised orthogonality.} The notion of generalised orthogonality as introduced in Section~\ref{sec:Gen-ortho} allows us to decompose the L-function as in Theorem~\ref{th:FIIRs} for the special case $\lambda=\frac12$. However a natural question is whether this notion of orthogonality encoded via $\theta_{\rho}$ can be generalised to allow for any $\lambda\in[0,1]$. This would provide a deeper understanding of our main decomposition Theorem~\ref{th:FIIRs} as well as a clear interpretation of the Fisher information in terms of a modified dissipation potential.

\emph{Quasipotentials for multiple invariant measures. } In Remark~\ref{rem:non-unique inv meas} we discussed the possibility of having multiple quasipotentials. On a macroscopic level, forcing uniqueness for non-quadratic Hamilton-Jacobi-Bellman equations is generally challenging. This is not merely a technical issue, since even on a microscopic level there may be multiple invariant measures; we have not pursued this possibility any further.

\emph{Global-in-time decompositions. } In this paper we have focussed on the local-in-time description of the L-function as opposed to working with time-dependent trajectories. While it is not obvious how to generalise the various abstract results to allow for global-in-time descriptions, we expect that it can be worked out case by case for the examples presented in this paper. The main difficulty here is that the time-dependent trajectories are allowed to explore the boundary of the domain where the forces are not well-defined, and therefore an appropriate regularisation procedure is required to extend the domain of definition of these forces.

\iparagraph{Acknowledgements.} The authors are grateful to Alexander Mielke who provided the proof of the Hamiltonian structure for linear antisymmetric flows discussed in Appendix~\ref{app:HamFlow}, and to the participants of the discussions at the AIM workshop ``Limits and control of stochastic reaction networks'' who helped develop the results of Appendix~\ref{sec:CB equiv HJ}, in particular Daniele Cappelletti, Anne J. Shiu and Artur Stephan. Furthermore, the authors thank Jin Feng, Davide Gabrielli, Alberto Montefusco, Mark Peletier, Jim Portegies, Richard Kraaij and P{\'e}ter Koltai  for insightful discussions. US thanks Julien Reygner who first pointed out the possibility of a connection between FIR inequalities and MFT. This work was presented at the MFO Workshop 2038 `Variational Methods for Evolution' and the authors would like to thank the participants for stimulating interactions.

The work of RP and MR has been funded by the Deutsche Forschungsgemeinschaft (DFG) through grant CRC 1114 ``Scaling Cascades in Complex Systems", Project C08.
RP received further support from the Math+ excellence cluster through project EF4-10.
The work of US was supported by the Alexander von Humboldt foundation and the DFG under Germany's Excellence Strategy--MATH+: The Berlin Mathematics Research Center (EXC-2046/1)-project ID:390685689 (subproject EF4-4).

\iparagraph{Data availability statement.} Data sharing not applicable to this article as no datasets were generated or analysed during the current study.

\begin{appendices}

\section{Hamiltonian structure for linear antisymmetric flow}\label{app:HamFlow}

The results in this appendix are due to Alexander Mielke.

We study the linear ODEs of the form
\begin{equation}\label{def:linODE}
\dot \omega = \frac12 A\omega \in \mathbb R^d \quad \text{with} \quad A\tp \omega_* = A\omega_*=0 \text{ for some } \omega_*\neq 0.
\end{equation}
In Theorem~\ref{thm:lin-Ham} we provide a complete characterisation of a natural Hamiltonian structure for these ODEs. In contrast to the typical settings of Hamiltonian systems, where $A\in\mathbb R^{d\times d}$ is assumed to be skew-symmetric, here we assume the existence of an invariant vector $\omega_*$ for the dynamics. The zero-cost antisymmetric flux for the IPFG system discussed in Section~\ref{sec:IPFG antisym flow} is of the form~\eqref{def:linODE}.

The following lemma provides a useful alternative characterisation of the Jacobi identity for Poisson structures which will be used to prove Theorem~\ref{thm:lin-Ham} below. 

\begin{lem}\label{lem:JacId-Alternate} Define a (Poisson-like) bracket $\{\cdot,\cdot\}$ by
\begin{equation}\label{def:Pbrack}
\{\pG_1,\pG_2\}(\omega):= (\nabla \pG_1)(\omega)\cdot \widetilde\pJ(\omega) (\nabla\pG_2)(\omega)
\qquad \forall \omega \in \mathbb{R}^d,
\end{equation}
where $\pG_1,\pG_2:\mathbb R^d\rightarrow\mathbb R$ are $C^2$-mappings, and the $C^1$ matrix-valued function $\omega\mapsto \widetilde\pJ(\omega)\in\mathbb R^{d\times d}$ is antisymmetric, i.e.\ $\widetilde\pJ\tp=-\widetilde\pJ$.  The bracket~\eqref{def:Pbrack} satisfies the Jacobi identity if and only if for all smooth $\pG_1,\pG_2,\pG_3:\mathbb R^d\rightarrow\mathbb R$ and for all $\omega\in\RR^d$ we have 
\begin{equation}\label{eq:mod-JacId}
 (\nabla \pG_1)\cdot \nabla\widetilde\pJ[\widetilde\pJ(\nabla\pG_2)](\nabla\pG_3) + (\nabla\pG_2) \cdot \nabla\widetilde\pJ[\widetilde\pJ(\nabla\pG_3)](\nabla\pG_1) + (\nabla\pG_3)\cdot \nabla\widetilde\pJ[\widetilde\pJ(\nabla\pG_1)](\nabla\pG_2) = 0,
\end{equation}
where $\nabla \widetilde\pJ[v]$ is the matrix valued function with $(\nabla \widetilde\pJ[v])_{ij}(\omega) = \sum_k v_k (\partial_{\omega_k}  \widetilde\pJ(\omega)_{ij})$.
\end{lem}

The proof follows by straightforward manipulation of the Jacobi identity.
We now present the Hamiltonian structure for~\eqref{def:linODE}.
 
\begin{thm}\label{thm:lin-Ham}
The linear ODE~\eqref{def:linODE} admits the Hamiltonian system $(\mathbb R^d,\tilde\E,\widetilde\pJ)$ with the linear energy and the linear Poisson structure 
\begin{equation*}
\tilde\E(\omega)=c- \omega_*\cdot\omega , \quad \widetilde\pJ(\omega)=\frac{1}{2|\omega_*|^2}\Bigl( \omega_*\otimes \left(A\omega\right) - \left(A\omega\right)\otimes \omega_*  \Bigr),
\end{equation*}
for any $c\in\RR$. Consequently $\dot \omega = \widetilde\pJ(\omega)\nabla\tilde\E(\omega)$.  
\end{thm}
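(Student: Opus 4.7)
The proof decomposes naturally into three verifications, since showing that $(\RR^d, \tilde\E, \widetilde\pJ)$ is a Hamiltonian system requires: (a) $\widetilde\pJ$ is skew-symmetric, (b) the ODE $\dot\omega=\widetilde\pJ(\omega)D\tilde\E(\omega)$ agrees with \eqref{def:linODE}, and (c) the bracket induced by $\widetilde\pJ$ satisfies the Jacobi identity (via Lemma~\ref{lem:JacId-Alternate}). Skew-symmetry is immediate from the antisymmetric structure of the rank-at-most-two combination $\omega_*\otimes(A\omega) - (A\omega)\otimes\omega_*$, so I will not belabour it.

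For (b), since $\tilde\E$ is linear, $D\tilde\E(\omega) = -\omega_*$. Applying $\widetilde\pJ(\omega)$ to $-\omega_*$ and using the identity $(u\otimes v)w = u\,\langle v, w\rangle$, I split into two terms: the first, $\omega_*\langle A\omega,\omega_*\rangle = \omega_*\langle\omega,A\tp\omega_*\rangle$, vanishes because $A\tp\omega_*=0$; the second produces $(A\omega)\,|\omega_*|^2$. Dividing by $2|\omega_*|^2$ and accounting for the minus sign gives exactly $\tfrac12 A\omega$.

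For (c), I will use the reformulation \eqref{eq:mod-JacId}, reading the $\pG_i$ as arbitrary vectors (equivalently, local gradient fields). Linearity of $\widetilde\pJ$ in $\omega$ makes the directional derivative explicit: $D\widetilde\pJ(\omega)[u] = \tfrac{1}{2|\omega_*|^2}\bigl(\omega_*\otimes(Au) - (Au)\otimes\omega_*\bigr)$. Substituting $u = \widetilde\pJ(\omega)v_2$ and expanding, the key simplification is that $A\bigl(\widetilde\pJ(\omega)v_2\bigr) = -\tfrac{1}{2|\omega_*|^2}(A^2\omega)\,\langle\omega_*,v_2\rangle$, using $A\omega_*=0$ to kill the $\omega_*$-term. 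This produces
\begin{equation*}
  D\widetilde\pJ[\widetilde\pJ v_2] = \tfrac{\langle\omega_*,v_2\rangle}{4|\omega_*|^4}\bigl((A^2\omega)\otimes\omega_* - \omega_*\otimes(A^2\omega)\bigr),
\end{equation*}
so that, writing $a_i := \langle\omega_*,v_i\rangle$ and $p_i := \langle v_i, A^2\omega\rangle$,
\begin{equation*}
  \langle v_1, D\widetilde\pJ[\widetilde\pJ v_2]\,v_3\rangle = \tfrac{1}{4|\omega_*|^4}\,a_2\,(p_1 a_3 - a_1 p_3).
\end{equation*}
Summing cyclically over $(v_1,v_2,v_3)$, every monomial appears with matching $\pm$ sign and cancels pairwise, verifying \eqref{eq:mod-JacId}.

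The only step that looks substantive is the Jacobi identity, and the main obstacle there is conceptual rather than computational: one needs the observation that $A\omega_*=0$ turns $A\widetilde\pJ v$ into a pure multiple of $A^2\omega$, which is precisely what forces the cyclic sum to telescope. Once this is noticed, the rest is bookkeeping.
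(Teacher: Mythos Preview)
Your proof is correct and follows essentially the same route as the paper's: both verify skew-symmetry by inspection, reproduce $\tfrac12 A\omega$ from $\widetilde\pJ(\omega)(-\omega_*)$ using $A\tp\omega_*=0$, and establish the Jacobi identity via Lemma~\ref{lem:JacId-Alternate} by exploiting linearity of $\widetilde\pJ$ and the key reduction $A\widetilde\pJ(\omega)v = -\tfrac{1}{2|\omega_*|^2}\langle\omega_*,v\rangle A^2\omega$ (from $A\omega_*=0$), after which the cyclic sum telescopes. Your abbreviations $a_i,p_i$ make the final cancellation slightly more transparent, but the argument is identical in substance.
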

\begin{proof}
For any $b\in\mathbb R^d$ we have 
\begin{equation*}
\widetilde\pJ(\omega)b=\frac{1}{2|\omega_*|^2}\left( (A\omega\cdot b ) \omega_* - (\omega_* \cdot b) A\omega\right)=\frac{1}{2|\omega_*|^2}\left( (\omega\cdot A\tp b) \omega_*- ( \omega_*\cdot b) A\omega   \right)
\end{equation*}
and  inserting $b=\omega_*$ in this relation and using $A\tp \omega_*=0$ it follows that $\frac12A\omega=-\widetilde\pJ(\omega)\omega_*=\widetilde\pJ(\omega)\nabla\tilde\E(\omega)$. Since $\widetilde\pJ(\omega)\tp = -\widetilde\pJ(\omega)$ by construction, we only need to prove the Jacobi identity~\eqref{eq:mod-JacId} to prove this result.
Using the linearity of $\widetilde\pJ$ we find $\nabla\widetilde\pJ[v](\omega)=\widetilde\pJ(v)$, and therefore for any $\pF\in\mathbb R^d$ we have 
\begin{equation*}
\nabla\widetilde\pJ[\widetilde\pJ(\omega)\pF](\omega) = \widetilde\pJ(\widetilde\pJ(\omega)\pF) = \frac{1}{2|\omega_*|^2}\Bigl( \omega_*\otimes \left(A\widetilde\pJ(\omega)\pF\right) - \left(A\widetilde\pJ(\omega)\pF\right)\otimes \omega_*\Bigr).
\end{equation*}
Using  $A\omega_*=0$ we find
\begin{equation*}
A\widetilde\pJ(\omega)\pF = -\frac{\pF \cdot\omega_*}{2|\omega_*|^2} A^2 \omega.
\end{equation*}
Using the above two relations we arrive at 
\begin{equation*}
  \pF_1\cdot \nabla\widetilde\pJ[\widetilde\pJ(\omega)\pF_2](\omega)\pF_3 = \frac{1}{4|\omega_*|^4}\Bigl( (\pF_1\cdot A^2 \omega)( \pF_2 \cdot \omega_*)( \omega_*\cdot\pF_3) - (\pF_1 \cdot\omega_*) ( \pF_2\cdot\omega_*)( A^2\omega\cdot\pF_3) 
\Bigr).
\end{equation*}
Setting $\pF_i = \nabla \pG_i$ we can compute the remaining two terms on the left hand side of~\eqref{eq:mod-JacId} by rotating the indices.
\end{proof}



\section[Complex balance and quasipotential condition]{Complex balance and quasipotential condition~\eqref{eq:invariant I_0}
}
\label{sec:CB equiv HJ}

The results in this appendix were developed during discussions with the participants of the online AIM workshop ``Limits and control of stochastic reaction networks''. This appendix explores the relation between the quasipotential for chemical reaction networks on the one hand (defined via~\eqref{eq:invariant I_0}) and the notion of complex balance on the other. Interestingly, it turns out that complex balance is equivalent to having the relative entropy as the quasipotential.

Consider chemical-reaction networks satisfying mass-action kinetics~\eqref{ass:mass-action} as explained in Section~\ref{subsec:reacting particle system}. To stress that these results are independent of the flux formulation and do not require a decomposition into forward and backward reactions, we will simply work with the density formulation~\eqref{def:FIIR-FIR-contract}, with the dual of the contracted L-function given by
\begin{equation*}
  \hat\H(\rho,\xi)
  :=
  \sup_{u \in T_\rho\Z} \xi\cdot j - \hat\L(\rho,u)
  =
  \sum_{r\in\Reac} c_r \rho^{\alpha\super{r}} \big( e^{\gamma\super{r}\cdot\xi}-1\big).
\end{equation*}
The equation~\eqref{eq:invariant I_0} for the quasipotential reads
\begin{equation}\label{eq:hat HJ}
  \hat\H\big(\rho,\grad\QP(\rho)\big)=0 \qquad \text{for all coordinate-wise positive } \rho \in \RR^\X_{>0}.
\end{equation}

The following result shows that the notion of complex balance is inherently connected to~\eqref{eq:hat HJ}, in the case when the quasipotential is the relative entropy.
This result has also appears in \cite[Lemma 3.6]{GaoLiu2022}.
\begin{theorem}\label{th:CB equiv HJ}
The following two statements are equivalent.
\begin{enumerate}
\item Complex balance~\eqref{def:complex-bal} holds with respect to $\pi\in\Z$.
\item Equation~\eqref{eq:hat HJ} holds with $\V(\rho)=\sum_{x\in\X} s(\rho_x\mid\pi_x)$.
\end{enumerate}
\end{theorem}

\begin{proof}
We first prove the forward implication. Let $\V(\rho)=\sum_{x\in\X} s(\rho_x\mid\pi_x)$ and calculate for $\rho\in \RR^\X_{>0}$
\begin{equation*}
  \hat\H\big(\rho,\grad\V(\rho)\big)
  =
  \sum_{r\in\Reac} c_r \rho^{\alpha\super{r}} \big( (\tfrac{\rho}{\pi})^{\alpha\super{\bw(r)}-\alpha\super{r}}-1\big)
  =
  \sum_{r\in\Reac} c_r \pi^{\alpha\super{r}} \big( (\tfrac{\rho}{\pi})^{\alpha\super{\bw(r)}} - (\tfrac{\rho}{\pi})^{\alpha\super{r}} \big)=0,
\end{equation*}
where the final equality follows by choosing $\psi_\alpha=(\tfrac{\rho}{\pi})^\alpha$ in the complex-balance condition~\eqref{def:complex-bal}. 

Now we present the backward implication (proof due to Artur Stephan). Assume that~\eqref{eq:hat HJ} holds with $\V(x)=\sum_{x\in\X} s(\rho_x\mid\pi_x)$. Sorting the expression with respect to the complexes, one obtains
\begin{equation*}
  \hat\H\big(\rho,\grad\V(\rho)\big)=p\big(\tfrac{\rho}{\pi}\big),
 \ \text{ where }    
  p(a):=\sum_{\alpha\in\CC} A_\alpha a^\alpha
\text{ and }
  A_\alpha := \sum_{r\in\Reac: \alpha\super{r}=\alpha} c_r \pi^{\alpha\super{r}} - \sum_{r\in\Reac: \alpha\super{\bw(r)}=\alpha} c_r \pi^{\alpha\super{r}}.
\end{equation*}
Since all $\alpha\in\CC$ are distinct, the polynomial $p(\tfrac{\rho}{\pi})$ can only be $0$ for all $\rho/\pi\in\RR^\X_{>0}$ when all coefficients are zero. Hence $A_\alpha=0$ for all $\alpha\in\CC$, which is equivalent to complex balance~\cite[Eq.~(8)]{AndersonCraciunKurtz2010}.
\end{proof}

The forward implication in Theorem~\ref{th:CB equiv HJ} can also be shown indirectly via the Chemical Master Equation (CME), describing the probability of the microscopic random particle system with jump rates
\begin{equation}
  \kappa\super{V}_r(\rho):= \frac{c_r}{V^{\lvert\alpha\super{r}\rvert_1}} \frac{(V\rho)!}{(V\rho-\alpha\super{r})!}\mathds1_{\{V\rho\geq \alpha\}}.
\label{eq:CME rates}
\end{equation} 
If complex balance holds with respect to $\pi\in\Z$, then the CME is known to have the invariant measure of product-Poisson form~\cite[Thm.~4.1]{AndersonCraciunKurtz2010}, i.e.\ 
\begin{equation}\label{eq:product Poisson}
  \Pi\super{V}(\rho)=\prod_{x\in\X} \frac{(V\pi_x)^{V\rho_x}}{(V\rho_x)!} e^{-V\pi_x}.
\end{equation}
\sloppy{The large-deviation rate can be explicitly
calculated using Stirling's formula, which gives $\lim_{V\to\infty} -\frac1V\log\Pi\super{V}(\rho)=\sum_{x\in\X} s(\rho_x\mid\pi_x)$,
and so by Theorem~\ref{th:LDP QP} the equation~\eqref{eq:hat HJ} must hold.}

Very little can be said about the invariant measures for the CME without the assumption of complex balance. However, the following is a straightforward consequence of Theorem~\ref{th:CB equiv HJ}.
\begin{corollary} If complex balance does not hold, any invariant measure $\Pi\super{V}$ of the CME will not be exponentially equivalent to the product-Poisson form~\eqref{eq:product Poisson}, i.e.\ $\Pi\super{V}$ and \eqref{eq:product Poisson} will not have the same large-deviation rate.
\end{corollary}

The following simple example, constructed by Daniele Cappelletti and Anne J. Shiu, shows that an appropriately rescaled relative entropy can still be a quasipotential. Consider a simple birth-death process
\begin{equation*}
  \emptyset\xrightarrow{\kappa_b} \mathsf{A} \xleftarrow{\kappa_d} 2\mathsf{A},
\end{equation*}
for which the CME has the explicit invariant measure (for simplicity writing $\rho:=\rho_\mathsf{A}$)
\begin{align*}
  \Pi\super{V}(\rho)=\frac1{Z_V} \prod_{i=1}^{V\rho-1} \frac{V\kappa_b}{V^{-1}\kappa_d(i+1)i} = \frac1{Z_V}\frac1{(V\rho-1)!}\frac1{(V\rho)!} (V^2\tfrac{\kappa_b}{\kappa_d})^{Vx-1},
\end{align*}
where $Z_V$ is the V-dependent normalisation constant. Note that the corresponding (zero-cost) reaction rate equation reads $\dot \rho(t)=\kappa_b -\kappa_d \rho(t)^2$, which clearly has the steady state $\pi:=\sqrt{\kappa_b/\kappa_d}$. The CME is in detailed balance with respect to $\Pi\super{V}$, but the reaction network is \emph{not} in complex balance w.r.t. $\pi$. Again by Stirling's formula and using the fact that $\inf\QP=0$, we find
\begin{equation*}
  \V(x):=\lim_{V\to\infty}-\frac1V\log\Pi\super{V}(x)= 2 s(\rho\mid \sqrt{\kappa_b/\kappa_d}).
\end{equation*}
Although the CME is in detailed balance, this result does not contradict the findings of \cite{MPPR2017}, since this reaction network is not reversible in the sense of footnote~\ref{ftn:reversible network}.

\end{appendices}

\bibliographystyle{alphainitials}
{\small
\bibliography{NonMP}
}

%

\end{document}